\newcommand{\acli}[1]{\define{\acl{#1}}}	
\newcommand{\acdef}[1]{\define{\acl{#1}} \textup{(\acs{#1})}\acused{#1}}	
\colorlet{MyRed}{Crimson!60!DarkRed}
\colorlet{MyBlue}{DodgerBlue!75!black}
\colorlet{MyGreen}{DarkGreen}
\colorlet{MyViolet}{DarkMagenta!80!MyBlue}
\colorlet{MyLightBlue}{DodgerBlue!20}
\colorlet{MyLightGreen}{MyGreen!20}
\colorlet{PrimalColor}{MyBlue}
\colorlet{PrimalFill}{MyLightBlue}
\colorlet{DualColor}{MyRed}
\colorlet{AlertColor}{MyRed}	
\colorlet{BadColor}{MyRed}	
\colorlet{GoodColor}{MyGreen}	
\colorlet{LinkColor}{MediumBlue}	
\colorlet{MacroColor}{MyViolet}
\colorlet{RevColor}{MediumBlue}	
\newcommand{\afterhead}{.\;}	
\newcommand{\para}[1]{\smallskip\paragraph{\textbf{#1\afterhead}}}
\setlist[1]{topsep=\smallskipamount,itemsep=\smallskipamount,left=\parindent}
\setlist[2]{left=0pt}
\newcommand{\EMAIL}[1]{\email{\href{mailto:#1}{#1}}}
\crefname{algo}{Algorithm}{Algorithms}
\crefname{assumption}{Assumption}{Assumptions}
\crefname{case}{Case}{Cases}
\theoremstyle{plain}
\newtheorem{theorem}{Theorem}	
\newtheorem{lemma}{Lemma}	
\newtheorem{proposition}{Proposition}	
\newtheorem*{corollary*}{Corollary}	
\theoremstyle{definition}
\newtheorem{example}{Example}	
\newtheorem*{definition*}{Definition}	
\newtheorem*{assumption*}{Assumptions}	
\newtheorem*{example*}{Example}	
\theoremstyle{remark}
\newtheorem{remark}{Remark}	
\newtheorem*{remark*}{Remark}	
\def\endenv{\hfill{\small$\blacklozenge$}}	
\newcounter{proofpart}
\numberwithin{example}{section}	
\newcommand{\draft}[1]{#1}	
\newcommand{\define}[1]{\emph{\draft{#1}}}	
\newcommand{\newmacro}[2]{\newcommand{#1}{\draft{#2}}}	
\newcommand{\newop}[2]{\DeclareMathOperator{#1}{\draft{#2}}}	
\DeclarePairedDelimiter{\braces}{\{}{\}}	
\DeclarePairedDelimiter{\bracks}{[}{]}	
\DeclarePairedDelimiter{\parens}{(}{)}	
\DeclarePairedDelimiter{\abs}{\lvert}{\rvert}	
\DeclarePairedDelimiter{\ceil}{\lceil}{\rceil}	
\DeclarePairedDelimiterX{\setdef}[2]{\{}{\}}{#1:#2}	
\DeclarePairedDelimiterXPP{\exclude}[1]{\mathopen{}\setminus}{\{}{\}}{}{#1}	
\newcommand{\N}{\mathbb{N}}	
\newcommand{\R}{\mathbb{R}}	
\DeclareMathOperator*{\argmax}{arg\,max}	
\DeclareMathOperator{\bigoh}{\mathcal{O}}	
\DeclareMathOperator{\dist}{dist}	
\DeclareMathOperator{\one}{\mathds{1}}	
\DeclareMathOperator{\relint}{ri}	
\DeclareMathOperator{\supp}{supp}	
\DeclareMathOperator{\unif}{unif}	
\newcommand{\cf}{cf.\xspace}	
\newcommand{\ie}{i.e.,\xspace}	
\newcommand{\textpar}[1]{\textup(#1\textup)}	
\newcommand{\eqdot}{\,.}	
\newcommand{\alt}[1]{#1'}	
\newcommand{\altalt}[1]{#1''}	
\newmacro{\dd}{\:d}	
\newcommand{\eps}{\varepsilon}	
\newmacro{\const}{c}	
\newmacro{\Const}{C}	
\newmacro{\coefalt}{\mu}	
\NewDocumentCommand{\coef}{O{\lambda}}{\draft{#1}}
\newmacro{\param}{\theta}	
\newmacro{\params}{\Theta}	
\newmacro{\pexp}{p}	
\newmacro{\qexp}{q}	
\newmacro{\rexp}{r}	
\newmacro{\beforestart}{0}	
\newmacro{\start}{1}	
\newmacro{\afterstart}{2}	
\newmacro{\running}{\start,\afterstart,\dotsc}	
\newmacro{\halfrunning}{1,3/2,2\dotsc}	
\newmacro{\run}{n}	
\newmacro{\runalt}{k}	
\newmacro{\runaltalt}{\ell}	
\newmacro{\nRuns}{T}	
\newmacro{\runs}{\mathcal{\nRuns}}	
\newmacro{\state}{\strat}	
\newmacro{\statealt}{\score}	
\newmacro{\statealtalt}{\mom}	
\newcommand{\init}[1][\state]{\draft{#1}_{\start}}	
\newcommand{\prev}[1][\state]{\draft{#1}_{\run-1}}	
\newcommand{\curr}[1][\state]{\draft{#1}_{\run}}	
\renewcommand{\next}[1][\state]{\draft{#1}_{\run+1}}	
\newcommand{\last}[1][\state]{\draft{#1}_{\nRuns}}	
\newmacro{\tstart}{0}	
\renewcommand{\time}{\draft{t}}	
\newmacro{\timealt}{\tau}	
\newmacro{\horizon}{T}	
\newmacro{\traj}{x}	
\newmacro{\trajalt}{y}	
\newmacro{\trajaltalt}{z}	
\newmacro{\flowmap}{\Theta}	
\DeclarePairedDelimiterXPP{\flowof}[2]{\flowmap_{#1}}{(}{)}{}{#2}	
\newop{\Nash}{NE}	
\newop{\CE}{CE}	
\newop{\CCE}{CCE}	
\newop{\NI}{NI}	
\newop{\brep}{br}	
\newop{\reg}{Reg}	
\newop{\preg}{\overline{Reg}}	
\newop{\val}{val}	
\newmacro{\strat}{x}	
\newmacro{\stratalt}{\alt\strat}	
\newmacro{\strats}{\mathcal{X}}	
\newcommand{\eq}{\sol}	
\newmacro{\play}{i}	
\newmacro{\playalt}{j}	
\newmacro{\playaltlalt}{k}	
\newmacro{\nPlayers}{N}	
\newmacro{\players}{\mathcal{\nPlayers}}	
\newmacro{\pure}{\alpha}	
\newmacro{\pureq}{\alpha^{*}}
\newmacro{\puredummy}{\kappa}	
\newmacro{\purealt}{\beta}	
\newmacro{\purealtalt}{\gamma}	
\newmacro{\nPures}{A}	
\newmacro{\pures}{\mathcal{\nPures}}	
\newmacro{\loss}{\mathcal{L}}	
\newmacro{\pay}{u}	
\newmacro{\payv}{v}	
\newmacro{\pot}{\Phi}	
\newmacro{\game}{\mathcal{G}}	
\newmacro{\gamefull}{\game(\players,\points,\pay)}	
\newmacro{\fingame}{\Gamma}	
\newmacro{\fingamefull}{\Gamma(\players,\pures,\pay)}	
\newmacro{\gmat}{g}	
\newmacro{\gdist}{\dist_{\gmat}}
\newmacro{\mfld}{M}	
\newmacro{\form}{\omega}	
\newmacro{\tvec}{z}	
\newmacro{\uvec}{u}	
\newmacro{\ball}{\basin}	
\newmacro{\sphere}{\mathbb{S}}	
\newmacro{\graph}{\mathcal{G}}
\newmacro{\vertices}{\mathcal{V}}
\newmacro{\edges}{\mathcal{E}}
\newmacro{\mat}{A}	
\newmacro{\matval}{\lambda}	
\newmacro{\matvals}{\Lambda}	
\newmacro{\matvec}{u}	
\newmacro{\matvecs}{U}	
\newmacro{\matalt}{M}	
\newmacro{\hmat}{H}	
\newmacro{\diagmat}{\Lambda}	
\newmacro{\unitmat}{U}	
\newop{\row}{row}	
\newop{\col}{col}	
\newmacro{\ones}{\mathbf{1}}	
\newmacro{\eye}{I}	
\newmacro{\zer}{0}	
\DeclareMathOperator{\tr}{tr}	
\DeclarePairedDelimiterXPP{\trof}[1]{\tr}{[}{]}{}{#1}
\DeclarePairedDelimiterXPP{\norm}[1]{}{\lVert}{\rVert}{}{#1}	
\DeclarePairedDelimiterXPP{\dnorm}[1]{}{\lVert}{\rVert}{_{*}}{#1}	
\DeclarePairedDelimiterXPP{\onenorm}[1]{}{\lVert}{\rVert}{_{1}}{#1}	
\DeclarePairedDelimiterXPP{\twonorm}[1]{}{\lVert}{\rVert}{_{2}}{#1}	
\DeclarePairedDelimiterXPP{\supnorm}[1]{}{\lVert}{\rVert}{_{\infty}}{#1}	
\DeclarePairedDelimiterX{\braket}[2]{\langle}{\rangle}{#1,#2}	
\DeclarePairedDelimiterX{\inner}[2]{\langle}{\rangle}{#1,#2}	
\newmacro{\vecspace}{\mathcal{V}}	
\newmacro{\subspace}{\mathcal{W}}	
\newmacro{\coord}{\pure}	
\newmacro{\coordalt}{\purealt}	
\newmacro{\coordaltalt}{\purealtalt}	
\newmacro{\nCoords}{d}	
\newmacro{\dims}{\nCoords}	
\newmacro{\vdim}{\nCoords}	
\newmacro{\pvec}{z}	
\newmacro{\pvecalt}{r}	
\newmacro{\bvec}{e}	
\newmacro{\bvecs}{\mathcal{E}}	
\newmacro{\cvec}{b}     
\newmacro{\cvecalt}{d}     
\newmacro{\pspace}{\vecspace}	
\newmacro{\dspace}{\vecspace^{\ast}}	
\newmacro{\dvec}{\dpoint}	
\newmacro{\dbvec}{\eps}	
\newmacro{\dpoint}{y}	
\newmacro{\dpointalt}{\alt\dpoint}	
\newmacro{\dpointaltalt}{\altalt\dpoint}	
\newmacro{\dpoints}{\mathcal{Y}}	
\newmacro{\dstate}{\dmat}	
\newmacro{\dbase}{v}	
\newcommand{\defeq}{\coloneqq}	
\newcommand{\from}{\colon}	
\newmacro{\argdot}{\textrm{\small\textbullet}}	
\newmacro{\fun}{f}	
\newop{\Opt}{Opt}	
\newop{\Sol}{Sol}	
\newop{\gap}{Gap}	
\newop{\orcl}{Or}	
\newmacro{\tfun}{f}	
\newmacro{\obj}{f}	
\newmacro{\objalt}{g}	
\newmacro{\sobj}{F}	
\newmacro{\gvec}{g}	
\newmacro{\oper}{A}	
\newmacro{\vecfield}{v}	
\newmacro{\payfield}{v}	
\newmacro{\paysignal}{\est\payfield}	
\newcommand{\sol}[1][\point]{#1^{\ast}}	
\newmacro{\solvec}{\vecfield(\sol)}	
\newmacro{\solpay}{\eq[\payfield]}	
\newmacro{\test}{p}
\newmacro{\signal}{\est\payfield}	
\newmacro{\step}{\gamma}	
\newmacro{\learn}{\eta}	
\newmacro{\vbound}{G}	
\newmacro{\lips}{L}	
\newmacro{\strong}{}	
\newop{\tspace}{T}	
\newop{\tcone}{TC}	
\newop{\dcone}{\tcone^{\ast}}	
\newop{\ncone}{NC}	
\newop{\pcone}{PC}	
\newop{\hull}{\Delta}	
\newmacro{\cvx}{\mathcal{C}}	
\newmacro{\subd}{\partial}	
\newmacro{\minmax}{\mathcal{L}}	
\newmacro{\minvar}{\point_{1}}	
\newmacro{\minvaralt}{\alt\minvar}	
\newmacro{\minvars}{\points_{1}}	
\newmacro{\minsol}{\sol[\minvar]}	
\newmacro{\maxvar}{\point_{2}}	
\newmacro{\maxvaaltr}{\alt\maxvar}	
\newmacro{\maxvars}{\points_{2}}	
\newmacro{\maxsol}{\sol[\maxvar]}	
\newop{\Eucl}{\Pi}	
\newop{\logit}{{\Lambda}}	
\newop{\dkl}{KL}	
\newmacro{\hreg}{h}	
\newmacro{\hconj}{\hreg^{\ast}}	
\newmacro{\breg}{D}	
\newmacro{\mprox}{P}	
\newmacro{\mirror}{{Q}}	
\newmacro{\fench}{F}	
\newmacro{\depth}{H}	
\newmacro{\hstr}{K}	
\newmacro{\hker}{\theta}	
\newmacro{\proxdom}{\points_{\hreg}}	
\newmacro{\proxdomi}{\points_{\hreg_{\play}}}	
\newmacro{\zone}{\mathbb{D}}	
\DeclarePairedDelimiterXPP{\proxof}[2]{\mprox_{#1}}{(}{)}{}{#2}	
\newmacro{\point}{x}	
\newmacro{\pointalt}{\alt\point}	
\newmacro{\pointaltalt}{\altalt\point}	
\newmacro{\points}{\mathcal{X}}	
\newmacro{\intpoints}{\relint\points}	
\newmacro{\base}{p}	
\newmacro{\basealt}{q}	
\newmacro{\basealtalt}{u}	
\newmacro{\open}{\mathcal{U}}	
\newmacro{\closed}{\mathcal{C}}	
\newmacro{\cpt}{\mathcal{K}}	
\newmacro{\nhd}{\mathcal{U}}	
\newmacro{\nhdalt}{\alt\nhd}	
\newop{\ex}{\mathbb{E}}	
\newop{\prob}{\mathbb{P}}	
\newop{\Var}{Var}	
\newop{\simplex}{\hull}	
\providecommand\given{}	
\DeclarePairedDelimiterXPP{\exof}[1]{\ex}{[}{]}{}{
\renewcommand\given{\nonscript\,\delimsize\vert\nonscript\,\mathopen{}} #1}
\DeclarePairedDelimiterXPP{\probof}[1]{\prob}{(}{)}{}{
\renewcommand\given{\nonscript\:\delimsize\vert\nonscript\:\mathopen{}} #1}
\DeclarePairedDelimiterXPP{\oneof}[1]{\one}{\{}{\}}{}{
\renewcommand\given{\nonscript\,\delimsize\vert\nonscript\,\mathopen{}} #1}
\newmacro{\sample}{\omega}	
\newmacro{\samples}{\Omega}	
\newmacro{\seed}{\theta}	
\newmacro{\seeds}{\Theta}	
\newmacro{\filter}{\mathcal{F}}	
\newmacro{\probspace}{(\samples,\filter,\prob)}	
\newmacro{\history}{\mathcal{H}}	
\newmacro{\event}{E}       
\newmacro{\eventalt}{H}       
\newmacro{\mean}{\mu}	
\newmacro{\sdev}{\sigma}	
\newmacro{\variance}{\sdev^{2}}	
\newcommand{\est}[1]{\hat #1}	
\newmacro{\proper}{\tau}	
\newmacro{\error}{Z}	
\newmacro{\noise}{U}	
\newmacro{\bias}{b}	
\newmacro{\brown}{W}	
\newmacro{\serror}{\theta}	
\newmacro{\snoise}{\xi}	
\newmacro{\sbias}{\psi}	
\newmacro{\sbound}{M}	
\newmacro{\bbound}{B}	
\newmacro{\noisepar}{\sdev}	
\newmacro{\noisevar}{\variance}	
\newmacro{\force}{F}
\newmacro{\mix}{\eps}
\newmacro{\stepsp}{\gamma}
\newmacro{\stepacc}{\step}
\newmacro{\ddstate}{Z}
\newmacro{\acc}{w}
\newmacro{\speed}{\effscore}
\newmacro{\dif}{e_\pure - e_\pureq}
\newmacro{\cumacc}{\tau}
\newmacro{\cumsp}{\theta}
\newmacro{\accexp}{{\ell_\stepacc}}
\newmacro{\spexp}{{\ell_\stepsp}}
\newmacro{\dnhd}{{\mathcal{W}}}
\newmacro{\drift}{c}
\newmacro{\thres}{{N_0}}
\newmacro{\threscont}{{T_0}}
\newmacro{\mixexp}{{\ell_\mix}}
\newmacro{\stepexp}{{\ell_\step}}
\newmacro{\fr}{r}
\newmacro{\conf}{\delta}
\newmacro{\exm}{\game_0}
\newmacro{\paydiff}{\Delta\statealt}
\newmacro{\pureA}{\mathtt{A}}	
\newmacro{\pureB}{\mathtt{B}}	
\newmacro{\meas}{\mu}	
\newmacro{\meascod}{\reals_{++}}	
\newmacro{\toler}{\eps}	
\DeclarePairedDelimiterXPP{\stratof}[1]{\strat}{(}{)}{}{#1}	
\DeclarePairedDelimiterXPP{\stratofi}[1]{\strat_{\play}}{(}{)}{}{#1}	
\DeclarePairedDelimiterXPP{\stratofx}[2]{\strat_{#1}}{(}{)}{}{#2}	
\newmacro{\ambient}{\prod_{\play}\R^{\pures_{\play}}}
\newcommand{\ambienti}[1][\play]{\R^{\pures_{#1}}}
\newmacro{\score}{y}	
\newmacro{\scorealt}{\alt\score}	
\newmacro{\scorealtalt}{\altalt\score}	
\newmacro{\scores}{\mathcal{Y}}	
\DeclarePairedDelimiterXPP{\scoreof}[1]{\score}{(}{)}{}{#1}	
\DeclarePairedDelimiterXPP{\scoreofi}[1]{\score_{\play}}{(}{)}{}{#1}	
\DeclarePairedDelimiterXPP{\scoreofx}[2]{\score_{#1}}{(}{)}{}{#2}	
\DeclarePairedDelimiterXPP{\dotscoreof}[1]{\dot\score}{(}{)}{}{#1}	
\DeclarePairedDelimiterXPP{\ddotscoreof}[1]{\ddot\score}{(}{)}{}{#1}	
\DeclarePairedDelimiterXPP{\dotscoreofi}[1]{\dot\score_{\play}}{(}{)}{}{#1}	
\DeclarePairedDelimiterXPP{\dotscoreofx}[2]{\dot\score_{#1}}{(}{)}{}{#2}	
\newmacro{\effscore}{z}	
\newmacro{\effscorealt}{\alt\effscore}	
\newmacro{\effscorealtalt}{\altalt\effscore}	
\newmacro{\effscores}{\mathcal{Y}}	
\DeclarePairedDelimiterXPP{\effscoreof}[1]{\effscore}{(}{)}{}{#1}	
\DeclarePairedDelimiterXPP{\effscoreofi}[1]{\effscore_{\play}}{(}{)}{}{#1}	
\DeclarePairedDelimiterXPP{\effscoreofx}[2]{\effscore_{#1}}{(}{)}{}{#2}	
\DeclarePairedDelimiterXPP{\doteffscoreof}[1]{\dot\effscore}{(}{)}{}{#1}	
\DeclarePairedDelimiterXPP{\ddoteffscoreof}[1]{\ddot\effscore}{(}{)}{}{#1}	
\DeclarePairedDelimiterXPP{\doteffscoreofi}[1]{\dot\effscore_{\play}}{(}{)}{}{#1}	
\DeclarePairedDelimiterXPP{\doteffscoreofx}[2]{\dot\effscore_{#1}}{(}{)}{}{#2}	
\newmacro{\mom}{p}	
\DeclarePairedDelimiterXPP{\momof}[1]{\mom}{(}{)}{}{#1}	
\DeclarePairedDelimiterXPP{\momofi}[1]{\mom_{\play}}{(}{)}{}{#1}	
\DeclarePairedDelimiterXPP{\momofx}[2]{\mom_{#1}}{(}{)}{}{#2}	
\DeclarePairedDelimiterXPP{\dotmomof}[1]{\dot\mom}{(}{)}{}{#1}	
\DeclarePairedDelimiterXPP{\dotmomofi}[1]{\dot\mom_{\play}}{(}{)}{}{#1}	
\DeclarePairedDelimiterXPP{\dotmomofx}[2]{\dot\mom_{#1}}{(}{)}{}{#2}	
\newmacro{\diff}{d}  
\newop{\IWE}{IWE}
\begin{document}


\title
[Accelerated learning in games]
{Accelerated Regularized Learning in Finite $\nPlayers$-Person Games}

\author
[K.~Lotidis]
{Kyriakos Lotidis$^{c,\ast}$}
\address
{$^{c}$%
Corresponding author.}
\address
{$^{\ast}$%
Stanford University.}
\EMAIL{klotidis@stanford.edu}
\author
[A.~Giannou]
{Angeliki Giannou$^{\diamond}$}
\address
{$^{\diamond}$%
University of Wisconsin–Madison.}
\EMAIL{giannou@wisc.edu}
\author
[P.~Mertikopoulos]
{\\Panayotis Mertikopoulos$^{\sharp}$}
\address
{$^{\sharp}$%
Univ. Grenoble Alpes, CNRS, Inria, Grenoble INP, LIG, 38000 Grenoble, France.}
\EMAIL{panayotis.mertikopoulos@imag.fr}
\author
[N.~Bambos]
{Nicholas Bambos$^{\ast}$}
\EMAIL{bambos@stanford.edu}

\subjclass[2020]{
Primary 91A10, 91A26;
secondary 68Q32, 68T05.}
\keywords{%
Accelerated algorithms;
follow the accelerated leader;
superlinear convergence;
Nash equilibrium.}

\newacro{LHS}{left-hand side}
\newacro{RHS}{right-hand side}
\newacro{iid}[i.i.d.]{independent and identically distributed}

\newacro{KW}{Kiefer\textendash Wolfowitz}
\newacro{NE}{Nash equilibrium}
\newacroplural{NE}[NE]{Nash equilibria}
\newacro{VI}{variational inequality}
\newacroplural{VI}[VIs]{variational inequalities}

\newacro{DGF}{distance-generating function}
\newacro{KKT}{Karush\textendash Kuhn\textendash Tucker}
\newacro{ODE}{ordinary differential equation}

\newacro{EW}{exponential\,/\,multiplicative weights}
\newacro{FTRL}{``follow the regularized leader''}
\newacro{FTXL}{``follow the accelerated leader''}
\newacro{NAG}{Nesterov's accelerated gradient}
\newacro{HBVF}{heavy ball with vanishing friction}
\newacro{GAN}{generative adversarial network}

\newacro{RD}{replicator dynamics}
\newacro{EWD}{exponential weights dynamics}
\newacro{AEWD}{accelerated exponential weights dynamics}

\newacro{IWE}{importance weighted estimator}
\newacro{SPSA}{single-point stochastic approximation}

\newacro{VS}{variational stability}

\begin{abstract}
%
%
Motivated by the success of Nesterov's accelerated gradient algorithm for convex minimization problems, we examine whether it is possible to achieve similar performance gains in the context of online learning in games.
To that end, we introduce a family of accelerated learning methods, which we call \acdef{FTXL}, and which incorporates the use of momentum within the general framework of regularized learning \textendash\ and, in particular, the \acl{EW} algorithm and its variants.
Drawing inspiration and techniques from the continuous-time analysis of Nesterov's algorithm, we show that \ac{FTXL} converges locally to strict Nash equilibria at a \emph{superlinear} rate, achieving in this way an exponential speed-up over vanilla regularized learning methods (which, by comparison, converge to strict equilibria at a \emph{geometric}, linear rate).
Importantly, \ac{FTXL} maintains its superlinear convergence rate in a broad range of feedback structures, from deterministic, full information models to stochastic, realization-based ones, and even when run with bandit, payoff-based information, where players are only able to observe their individual realized payoffs.
\end{abstract}
\acresetall	

\allowdisplaybreaks	
\acresetall	
\maketitle

\section{Introduction}
\label{sec:introduction}

One of the most important milestones in convex optimization was \acf{NAG} algorithm, as proposed by  \citet{Nes83} in 1983.
The groundbreaking achievement of Nesterov's algorithm was that it attained an $\bigoh(1/\nRuns^{2})$ rate of convergence in Lipschitz smooth convex minimization problems, thus bridging a decades-old gap between the $\bigoh(1/\nRuns)$ convergence rate of ordinary gradient descent and the corresponding $\Omega(1/\nRuns^{2})$ lower bound for said class \cite{NY83}.
In this way, \acl{NAG} algorithm opened the door to acceleration in optimization, leading in turn to a wide range of other, likewise influential schemes \textendash\ such as FISTA and its variants \cite{BT09} \textendash\ and jumpstarting a vigorous field of research that remains extremely active to this day.

Somewhat peculiarly, despite the great success that \ac{NAG} has enjoyed in all fields where optimization plays a major role \textendash\ and, in particular, machine learning and data science \textendash\ its use has not percolated to the adjoining field of game theory as a suitable algorithm for learning \aclp{NE}.
Historically, the reasons for this are easy to explain:
despite intense scrutiny by the community and an extensive corpus of literature dedicated to deconstructing the algorithm's guarantees, \ac{NAG}'s update structure remains quite opaque \textendash\ and, to a certain extent, mysterious.
Because of this, Nesterov's algorithm could not be considered as a plausible learning scheme that could be employed by boundedly rational \emph{human} agents involved in a repeated game.
Given that this was the predominant tenet in economic thought at the time, the use of Nesterov's algorithm in a game-theoretic context has not been extensively explored, to the best of our knowledge.

On the other hand, as far as applications to machine learning and artificial intelligence are concerned, the focus on \emph{human} agents is no longer a limiting factor.
In most current and emerging applications of game-theoretic learning \textendash\ from multi-agent reinforcement learning to adversarial models in machine learning \textendash\ the learning agents are algorithms whose computational capacity is only limited by the device on which they are deployed.
In view of this, our paper seeks to answer the following question:
\begin{center}
\itshape
Can \acl{NAG} scheme be deployed in a game-theoretic setting?
\\
And, if so, is it possible to achieve similar performance gains as in convex optimization?
\end{center}

\subsection*{Our contributions in the context of related work}

The answer to the above questions is not easy to guess.
On the one hand, given that game theory and convex optimization are fundamentally different fields, a reasonable guess would be ``no'' \textendash\ after all, finding a \acl{NE} is a \textsf{PPAD}-complete problem \cite{DGP09-acm}, whereas convex minimization problems are solvable in polynomial time \cite{Bub15}.
On the other, since in the context of online learning each player \emph{would} have every incentive to use the most efficient unilateral optimization algorithm at their disposal, the use of \ac{NAG} methods cannot be easily discarded from an algorithmic viewpoint.

Our paper examines if it is possible to obtain even a partially positive answer to the above question concerning the application of \aclp{NAG} techniques to learning in games.
We focus throughout on the class of finite $\nPlayers$-person games where, due to the individual concavity of the players' payoff functions, the convergence landscape of online learning in games is relatively well-understood \textendash\ at least, compared to non-concave games.
In particular, it is known that regularized learning algorithms \textendash\ such as \acf{FTRL} and its variants \textendash\ converge locally to strict \aclp{NE} at a geometric rate \cite{GVM21b}, and strict equilibria are the only locally stable and attracting limit points of regularized learning in the presence of randomness and/or uncertainty \cite{HS03,FVGL+20,GVM21}.
In this regard, we pose the question of
\begin{enumerate*}
[\upshape(\itshape i\hspace*{.5pt}\upshape)]
\item
whether regularized learning schemes like \ac{FTRL} can be accelerated;
and
\item
whether the above properties are enhanced by this upgrade.
\end{enumerate*}

We answer both questions in the positive.
First, we introduce an accelerated regularized scheme, in both continuous and discrete time, which we call \acdef{FTXL}.
In continuous time, our scheme can be seen as a fusion of the continuous-time analogue of \ac{NAG} proposed by \citet*{SBC16} and the dynamics of regularized learning studied by \citet{MS16} \textendash\ see also \cite{SALS15,BM17,MerSan18,MPP18,MZ19,HAM21,MS21,HLMS22,LMB22,BM23,MHC24} and references therein.
We show that the resulting dynamics exhibit the same qualitative equilibrium convergence properties as the \acl{RD} of \citet{TJ78} (the most widely studied instance of \ac{FTRL} in continuous time).
However, whereas the replicator dynamics converge to strict \aclp{NE} at a linear rate, the \ac{FTXL} dynamics converge \emph{superlinearly}.

In discrete time, we likewise propose an algorithmic implementation of \ac{FTXL} which can be applied in various information context:
\begin{enumerate*}
[\upshape(\itshape i\hspace*{.5pt}\upshape)]
\item
\emph{full information}, that is, when players observe their entire mixed payoff vector;
\item
\emph{realization-based feedback}, \ie when players get to learn the ``what-if'' payoff of actions that they did not choose;
and
\item
\emph{bandit, payoff-based feedback}, where players only observe their realized, in-game payoff, and must rely on statistical estimation techniques to reconstruct their payoff vectors.
\end{enumerate*}
In all cases, we show that \ac{FTXL} maintains the exponential speedup described above, and converges to strict \aclp{NE} at a superlinear rate (though the subleading term in the algorithm's convergence rate becomes increasingly worse as less information is available).
We find this feature of \ac{FTXL} particularly intriguing as superlinear convergence rates are often associated to methods that are second-order in \emph{space}, not \emph{time};
the fact that this is achieved even with bandit feedback is quite surprising in this context.

Closest to our work  is the continuous-time, second-order replicator equation studied by \citet{LM13} in the context of evolutionary game theory, and derived through a model of pairwise proportional imitation of ``long-term success''.
The dynamics of \cite{LM13} correspond to the undamped, continuous-time version of \ac{FTXL} with entropic regularization, and the equilibrium convergence rate obtained by \cite{LM13} agrees with our analysis.
Other than that, the dynamics of \citet{FM04} also attempted to exploit a Newtonian structure, but they do not yield favorable convergence properties in a general setting.
The inertial dynamics proposed in \cite{LM15} likewise sought to leverage an inertial structure combined with the Hessian\textendash Riemannian underpinnings of the replicator dynamics, but the resulting replicator equation was not even well-posed (in the sense that its solutions exploded in finite time).

More recently, \citet{9683223,10172330} considered a second-order, inertial version of the dynamics of mirror descent in continuous games, and examined their convergence in the context of variational stability \cite{MZ19}.
Albeit related at a high level to our work (given the link between mirror descent and regularized learning), the dynamics of \citet{9683223,10172330} are actually incomparable to our own, and there is no overlap in our techniques or results.
Other than that, second-order dynamics in games have also been studied in continuous time within the context of control-theoretic passivity, yielding promising results in circumventing the impossibility results of \citet{HMC03}, \cf 
\citet{9022871,NEURIPS2023_605e02ae,7799211,8619157}, and references therein.
However, the resulting dynamics are also different, and we do not see a way of obtaining comparable rates  in our setting.

\section{Preliminaries}
\label{sec:prelims}

In this section, we outline some notions and definitions required for our analysis.
Specifically, we introduce the framework of finite $\nPlayers$-player games, we discuss the solution concept of a Nash equilibrium, and we present the main ideas of regularized learning in games.

\subsection{Finite games}

In this work, we focus exclusively with finite games in normal form.
Such games consist of a finite set of \define{players} $\players= \braces{1,\dots,\nPlayers}$, each of whom has a finite set of \define{actions} \textendash\ or \emph{pure strategies} \textendash\ $\pure_{\play} \in \pures_{\play}$ and a \emph{payoff function} $\pay_{\play}\from \pures \rightarrow \R$, where $\pures \defeq \prod_{\play \in \players}\pures_{\play}$ denotes the set of all possible action profiles $\pure = (\pure_{1},\dotsc,\pure_{\nPlayers})$.
To keep track of all this, a finite game with the above primitives will be denoted as $\fingame \equiv \fingamefull$.

In addition to pure strategies, players may also randomize their choices by employing \emph{mixed strategies}, that is, by choosing probability distributions $\strat_{\play} \in \strats_{\play}\defeq \simplex\parens{\pures_{\play}}$ over their pure strategies, where $\simplex\parens{\pures_{\play}}$ denotes the probability simplex over $\pures_{\play}$.
Now, given a \define{strategy profile} $\strat = \parens{\strat_\start,\dots, \strat_{\nPlayers}} \in \strats \defeq \prod_{\play \in \players}\strats_{\play}$, we will use the standard shorthand $\strat = (\strat_{\play};\strat_{-\play})$ to highlight the mixed strategy $\strat_{\play}$ of player $\play$ against the mixed strategy profile $\strat_{-\play} \in \strats_{-\play} \defeq \prod_{\playalt\neq\play} \strats_{\playalt}$ of all other players.
We also define:
\begin{enumerate}
\item
The \define{mixed payoff} of player $\play$ under $\strat$ as
\begin{equation}
\pay_{\play}\parens{\strat}
	= \pay_{\play}\parens{\strat_{\play};\strat_{-\play}} 
    = \sum_{\pure_1 \in \pures_1}\dots\sum_{\pure_\nPlayers \in \pures_\nPlayers} \strat_{1\pure_1}\dots\strat_{\nPlayers\pure_\nPlayers} \pay_{\play}(\pure_1,\dots,\pure_\nPlayers)
\end{equation}
\item
The \define{mixed payoff vector} of player $\play$ under $\strat$ as
\begin{equation}
\payfield_{\play}\parens{\strat}
	= \nabla_{\strat_{\play}} \pay_{\play}(\strat)
	= \parens{\pay_{\play}\parens{\pure_{\play};\strat_{-\play}}}_{\pure_{\play}\in\pures_{\play}}
\end{equation}
\end{enumerate}
In words, $\payfield_{\play}(\strat)$ collects the expected rewards $\payfield_{\play\pure_{\play}}\parens{\strat} \defeq \pay_{\play}\parens{\pure_{\play};\strat_{-\play}}$ of each action $\pure_{\play}\in\pures_{\play}$ of player $\play\in\players$ against the mixed strategy profile $\strat_{-\play}$ of all other players.
Finally, we write $\payfield\parens{\strat} = \parens{\payfield_\start\parens{\strat},\dots,\payfield_\nPlayers\parens{\strat}}$ for the concatenation of the players' mixed payoff vectors.

In terms of solution concepts, we will say that $\eq$ is a \acdef{NE} if no player can benefit by unilaterally deviating from their strategy, that is
\begin{equation}
\label{def:Nash}
\tag{NE}
\pay_{\play}\parens{\eq}
	\geq \pay_{\play}\parens{\strat_{\play};\eq_{-\play}}
	\quad
	\text{for all $\strat_{\play}\in\strats_{\play}$ and all $\play\in\players$}
	\,.
\end{equation}
Moreover, we say that $\eq$ is a \emph{strict \acl{NE}} if \eqref{def:Nash} holds as a strict inequality for all $\strat_{\play} \neq \eq_{\play}$, $\play\in\players$, \ie if any deviation from $\eq_{\play}$ results in a strictly worse payoff for the deviating player $\play \in \players$.
It is straightforward to verify that a strict equilibrium $\eq\in\strats$ is also \emph{pure} in the sense that each player assigns positive probability only to a single pure strategy $\pureq_{\play} \in \pures_{\play}$.
Finally, we denote the \textit{support} of a strategy $\strat$ as the set of actions with non-zero probability mass, \ie $\supp(\strat)= \braces{\pure \in \pures: \strat_{\pure} >0}$.

\subsection{Regularized learning in games}

In the general context of finite games, the most widely used learning scheme is the family of algorithms and dynamics known as \acdef{FTRL}.
In a nutshell, the main idea behind \ac{FTRL} is that each player $\play\in\players$ plays a ``regularized'' best response to their cumulative payoff over time,
leading to the continuous-time dynamics
\begin{equation}
\label{eq:FTRL-cont}
\tag{FTRL-D}
\dotscoreofi{\time}
	= \payfield_{\play}(\strat(\time))
	\qquad
\stratofi{\time}
	= \mirror_{\play}(\scoreofi{\time})
\end{equation}
where
\begin{equation}
\mirror_{\play}(y_{\play})
	= \argmax\nolimits_{\strat_{\play}\in\strats_{\play}}\braces{\braket{\score_{\play}}{\strat_{\play}} - \hreg_{\play}(\strat_{\play})}
\end{equation}
denotes the \define{regularized best response} \textendash\ or \define{mirror} \textendash\ map of player $\play\in\players$, and $\hreg_{\play}\from\strats_{\play}\to\R$ is a strongly convex function known as the method's \emph{regularizer}.
Accordingly, in discrete time, this leads to the algorithm
\begin{equation}
\label{eq:FTRL}
\tag{FTRL}
\statealt_{\play,\run+1}
	= \statealt_{\play,\run} + \step \signal_{\play,\run}
	\qquad
\state_{\play,\run}
	= \mirror_{\play}(\statealt_{\play,\run})
\end{equation}
where
$\step>0$ is a hyperparameter known as the algorithm's \define{learning rate} (or \define{step-size})
and
$\signal_{\play,\run}$ is a black-box ``payoff signal'' that carries information about $\payfield_{\play}(\curr)$.
In the simplest case, when players have full information about the game being played and the actions taken by their opponents, we have $\signal_{\play,\run} = \payfield_{\play}(\curr)$;
in more information-depleted environments (such as learning with payoff-based, bandit feedback), $\signal_{\play,\run}$ is a reconstruction of $\payv_{\play}(\curr)$ based on whatever information is at hand.

For concreteness, we close this section with the prototypical example of \ac{FTRL} methods, the \acdef{EW} algorithm.
Going back to \cite{Vov90,LW94,ACBFS95}, this method is generated by the negentropy regularizer $\hreg_{\play}(\strat_{\play}) = \sum_{\pure_{\play}\in\pures_{\play}} \strat_{\play\pure_{\play}} \log\strat_{\play\pure_{\play}}$, which yields the \ac{EW} update rule
\begin{equation}
\label{eq:EW}
\tag{EW}
\statealt_{\play,\run+1}
	= \statealt_{\play,\run} + \step\signal_{\play,\run}
\qquad 
\state_{\play,\run}
	= \logit_{\play}(\statealt_{\play,\run})
	\defeq \frac{\exp(\statealt_{\play,\run})}{\onenorm{\exp(\statealt_{\play,\run})}}
\end{equation}
and, in the continuous-time limit $\step\to0$, the \acli{EWD}
\begin{equation}
\label{eq:EWD}
\tag{EWD}
\dotscoreofi{\time}
	= \payfield_{\play}(\stratof{\time})
	\qquad
\stratofi{\time}
	= \logit_{\play}(\scoreofi{\time})
	\eqdot
\end{equation}

In the above, $\logit_{\play}$ denotes the regularized best response induced by the method's entropic regularizer, which is known colloquially as a \emph{logit best response} \textendash\ or, even more simply, as the \emph{logit map}.
To make the notation more compact in the sequel, we will write $\mirror = (\mirror_{\play})_{\play\in\players}$ and $\logit = (\logit_{\play})_{\play\in\players}$ for the ensemble of the players' regularized / logit best response maps.

\begin{remark}
To streamline our presentation, in the main part of the paper, quantitative results will be stated for the special case of the \ac{EW} setup above.
In \cref{app:reg}, we discuss more general decomposable regularizers of the form $\hreg_{\play}(\strat_{\play}) = \sum_{\pure_{\play}\in\pures_{\play}} \theta_{\play}(\point_{\play})$ where $\theta_{\play}\from[0,1]\to\R$ is continuous on $[0,1]$,
and has $\theta''(x) > 0$ for all $x\in(0,1]$
and
$\lim_{x\to0^{+}} \theta'(x) = -\infty$.
Although this set of assumptions can be relaxed, it leads to the clearest presentation of our results, so it will suffice for us.
\end{remark}
\begin{remark}
Throughout the paper, we will interchangeably use $\dot g(t)$ and $d g/d\time$  to denote the time derivative of $g(t)$. This dual notation allows us to adopt whichever form is most convenient in the given context.
Moreover, for a process $g$, we will use the notation $g(\time)$ for $\time \geq 0$ if it evolves in continuous time, and $g_\run$ for $\run \in \N$ if it evolves in discrete time steps, omitting the time-index when it is clear from context.
\end{remark}

\section{Combining acceleration with regularization: First insights and results}
\label{sec:continuous}

In this section, we proceed to illustrate how \acf{NAG} method can be combined with \ac{FTRL}.
To keep things as simple as possible, we focus on the continuous-time limit, so we do not have to worry about the choice of hyperparameters, the construction of black-box models for the players' payoff vectors, etc.

\subsection{\Acl{NAG} algorithm}

We begin by discussing \acl{NAG} algorithm as presented in \citeauthor{Nes83}'s seminal paper \citep{Nes83} in the context of unconstrained smooth convex minimization.
Specifically, given a Lipschitz smooth convex function $\obj\from\R^{\vdim}\to\R$, the algorithm unfolds iteratively as
\begin{equation}
\label{eq:NAG}
\tag{NAG}
\begin{aligned}
\next[x]
	&= \curr[w] - \step \nabla\obj(\curr[w])
	\\
\next[w]
	&= \next[x]
	+ \frac{\run}{\run+3} (\next[x] - \curr[x])
\end{aligned}
\end{equation}
where
$\init[w] = \init[x]$ is initialized arbitrarily
and
$\step>0$ is a step-size parameter (typically chosen as $\step \gets 1/\lips$ where $\lips$ is the Lipschitz smoothness modulus of $\obj$).
The specific iterative structure of \eqref{eq:NAG} \textendash\ and, in particular the ``$3$'' in the denominator \textendash\ can appear quite mysterious;
nevertheless, \eqref{eq:NAG} otherwise offers remarkable perfomance gains, improving in particular the rate of convergence of gradient methods from $\bigoh(1/\nRuns)$ to $\bigoh(1/\nRuns^{2})$ \cite{Nes83}, and matching in this way the corresponding $\Omega(1/\nRuns^{2})$ lower bound for the minimization of smooth convex functions \citep{NY83}.%
\footnote{There are, of course, many other approaches to acceleration, that we cannot cover here;
for a discussion of the popular ``linear coupling'' approach of \citet{AZO17}, see \cref{app:other}.}

This groundbreaking result has since become the cornerstone of a vast and diverse literature expanding on the properties of \eqref{eq:NAG} and trying to gain a deeper understanding of the ``how'' and ``why'' of its update structure.
One perspective that has gained significant traction in this regard is the continuous-time approach of \citet{SBC14,SBC16};
combining the two equations in \eqref{eq:NAG} into
\begin{equation}
    \frac{x_{n+1} - 2\,x_n + x_{n-1}}{\sqrt{\step}} = -\sqrt\step \, \nabla f(w_n) -\frac{3}{n+2}\frac{x_{n} - x_{n-1}}{\sqrt\step},
\end{equation}
they modeled \eqref{eq:NAG} as a \acli{HBVF} system of the form
\begin{equation}
\label{eq:SBC}
\tag{HBVF}
\frac{d^{2}\state}{d\time^{2}}
	= -\nabla\obj(\state)
	- \frac{3}{\time} \frac{d\state}{d\time}
\end{equation}
The choice of terminology alludes to the fact that \eqref{eq:SBC} describes the dynamics of a heavy ball descending the landscape of $\obj$ under the potential field $\force(\state) = -\nabla\obj(\state)$ with a vanishing kinetic friction coefficient (the $3/\time$ factor in front of the momentum term $d\state/d\time$).
In this interpretation,
the mass of the ball accelerates the system,
the friction term dissipates energy to enable convergence,
and the vanishing friction coefficient quenches the impact of friction over time in order to avoid decelerating the system too much (so the system is, in a sense, ``critically underdamped'').

As was shown by \citet{SBC16}, an explicit Euler discretization of \eqref{eq:SBC} yields \eqref{eq:NAG} with exactly the right momentum coefficient $\run/(\run+3)$;
moreover, the rate of convergence of the continuous-time dynamics \eqref{eq:SBC} is the same as that of the discrete-time algorithm \eqref{eq:NAG}, and the energy function and Lyapunov analysis used to derive the former can also be used to derive the latter.
For all these reasons, \eqref{eq:SBC} is universally considered as the \emph{de facto} continuous-time analogue of \eqref{eq:NAG}, and we will treat it as such in the sequel.

\subsection{\ac{NAG} meets \ac{FTRL}} 

To move from unconstrained convex minimization problems to finite $\nPlayers$-person games \textendash\ a constrained, non-convex, multi-agent, multi-objective setting \textendash\ it will be more transparent to start with the continuous-time formulation \eqref{eq:SBC}.
Indeed, applying the logic behind \eqref{eq:SBC} to the (unconstrained) state variables $\score$ of \eqref{eq:FTRL-cont}, we obtain the \acli{FTXL} dynamics
\begin{equation}
\label{eq:FTXL-cont}
\tag{\acs{FTXL}-D}
\frac{d^{2}\score}{d\time^{2}}
	= \payfield(\mirror(\score))
		- \frac{\fr}{\time} \frac{d\score}{d\time}
\end{equation}
where
the dynamics' driving force $\force(\score) = \payfield(\mirror(\score))$ is now given by the payoff field of the game,
and
the factor $\fr/\time$, $\fr\geq0$, plays again the role of a vanishing friction coefficient.
To avoid confusion, we highlight that in the case of regularized learning, the algorithm's variable that determines the evolution of the system in an autonomous way is the ``score variable'' $\statealt$, not the ``strategy variable'' $\state$ (which is an ancillary variable obtained from $y$ via the regularized choice map $\mirror$). 

In contrast to \eqref{eq:EWD}, the accelerated dynamics \eqref{eq:FTXL-cont} are second-order in time, a fact with fundamental ramifications, not only from a conceptual, but also from an operational viewpoint.
Focusing on the latter, we first note that \eqref{eq:FTXL-cont} requires two sets of initial conditions, $\scoreof{\tstart}$ and $\dotscoreof{\tstart}$, the latter having no analogue in the first-order setting of \eqref{eq:FTRL-cont}.
In general, the evolution of the system depends on both $\scoreof{\tstart}$ and $\dotscoreof{\tstart}$, but since this would introduce an artificial bias toward a certain direction, we will take $\dotscoreof{\tstart}=0$, in tune with standard practice for \eqref{eq:NAG} \cite{SBC16}.

We also note that \eqref{eq:FTXL-cont} can be mapped to an equivalent autonomous first-order system with double the variables:
specifically, letting $\mom = \dot\score$ denote the players' \emph{\textpar{payoff} momentum}, \eqref{eq:FTXL-cont} can be  rewritten as
\begin{equation}
\label{eq:FTXL-first}
\frac{d\score}{d\time}
	= \mom
	\qquad
\frac{d\mom}{d\time}
	= \payfield(\mirror(\score))
		- \frac{\fr}{\time} \mom
\end{equation}
with $\scoreof{\tstart}$ initialized arbitrarily and $\momof{\tstart} = \dotscoreof{\tstart}$.
In turn, \eqref{eq:FTXL-first} yields $\momof{\time} = \time^{-\fr} \int_{\tstart}^{\time} \time^{\fr} \payfield(\mirror(\scoreof{\timealt})) \dd\timealt$, so $\momof{\time}$ can be seen as a weighted aggregate of the players' payoffs up to time $\time$:
if $\fr=0$ (the undamped regime), all information enters $\momof{\time}$ with the same weight;
if $\fr>0$, past information is discounted relative to more recent observations;
and,
in the overdamped limit $\fr\to\infty$, all weight is assigned to the current point in time, emulating in this way the first-order system \eqref{eq:FTRL-cont}.

\subsection{First insights and results} 

From an operational standpoint, the main question of interest is to specify the equilibrium convergence properties of \eqref{eq:FTXL-cont} \textendash\ and, later in the paper, its discrete-time analogue.
To establish a baseline, the principal equilibrium properties of its first-order counterpart can be summarized as follows:
\begin{enumerate*}
[\upshape(\itshape i\upshape)]
\item
strict \aclp{NE} are locally stable and attracting under \eqref{eq:FTRL-cont} \cite{HS03,MS16};%
\footnote{Recall here that $\eq\in\strats$ is said to be
\begin{enumerate*}
[\upshape(\itshape i\hspace*{.5pt}\upshape)]
\item
\define{Lyapunov stable} (or simply \define{stable}) if every orbit $\stratof{\time}$ of the dynamics that starts close enough to $\eq$ remains close enough to $\eq$ for all $\time\geq0$;
\item
\define{attracting} if $\lim_{\time\to\infty} \stratof{\time} = \eq$ for every orbit $\stratof{\time}$ that starts close enough to $\eq$;
and
\item
\emph{asymptotically stable} if it is both stable and attracting.
\end{enumerate*}
For an introduction to the theory of dynamical systems, \cf \citet{Shu87} and \citet{HSD04}.}
\item
the dynamics do not admit any other such points (that is, stable and attracting) \cite{FVGL+20};
and
\item
quantitively, in the case of \eqref{eq:EWD}, the dynamics converge locally to strict \aclp{NE} at a \emph{geometric} rate of the form $\norm{\stratof{\time} - \eq} = \bigoh(\exp(-\const\time))$ for some $\const>0$ \cite{MS16}.
\end{enumerate*}

Our first result below shows that the accelerated dynamics \eqref{eq:FTXL-cont} exhibit an exponential speed-up relative to \eqref{eq:FTRL-cont}, and the players' orbits converge to strict \aclp{NE} at a \emph{superlinear} rate:

\begin{restatable}{theorem}{Cont}
\label{thm:FTXL-cont}
Let $\eq$ be a strict \acl{NE} of $\fingame$, and let $\stratof{\time} = \mirror(\scoreof{\time})$ be a solution orbit of \eqref{eq:FTXL-cont}.
If $\stratof{\tstart}$ is sufficiently close to $\eq$, then $\stratof{\time}$ converges to $\eq$;
in particular, if \eqref{eq:FTXL-cont} is run with logit best responses \textpar{that is, $\mirror \gets \logit$}, we have
\begin{equation}
\label{eq:rate-cont-exp}
\supnorm{\stratof{\time} - \eq}
	\leq \exp\parens*{\Const - \frac{\const\time^{2}}{2(\fr+1)}}
\end{equation}
where
$\Const>0$ is a constant that depends only on the initialization of \eqref{eq:FTXL-cont}
and
\begin{equation}
\label{eq:drift}
\const
	= \frac{1}{2}
	\adjustlimits
		\min_{\play\in\players}
		\min_{\purealt_{\play}\notin\supp(\eq_{\play})}
		\bracks{\pay_{\play}(\eq_{\play};\eq_{-\play}) - \pay_{\play}(\purealt_{\play};\eq_{-\play})}
	> 0
\end{equation}
is the minimum payoff difference at equilibrium.
\end{restatable}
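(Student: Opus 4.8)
\emph{Proof strategy.}
The plan is to reduce the whole statement to an elementary lower bound on a finite collection of one‑dimensional ``score gaps'', and then to establish that bound by an ODE comparison. Since $\eq$ is strict it is pure; let $\pureq_{\play}$ denote the unique action in $\supp(\eq_{\play})$, and for each player $\play$ and each off‑equilibrium action $\purealt_{\play}\neq\pureq_{\play}$ set $\effscore_{\play\purealt_{\play}}(\time) \defeq \score_{\play\pureq_{\play}}(\time) - \score_{\play\purealt_{\play}}(\time)$, where $\score$ is the state variable driving \eqref{eq:FTXL-cont}. Retaining only the $\pureq_{\play}$‑term in the normalizing denominator of the logit map yields $\strat_{\play\purealt_{\play}}(\time) \leq \exp\parens{\score_{\play\purealt_{\play}}(\time) - \score_{\play\pureq_{\play}}(\time)} = \exp(-\effscore_{\play\purealt_{\play}}(\time))$; and since $\supnorm{\stratof{\time}-\eq} = \max_{\play}\parens{1-\strat_{\play\pureq_{\play}}(\time)} = \max_{\play}\sum_{\purealt_{\play}\neq\pureq_{\play}}\strat_{\play\purealt_{\play}}(\time)$, this already gives $\supnorm{\stratof{\time}-\eq} \leq K\exp\parens*{-\min_{\play,\,\purealt_{\play}\neq\pureq_{\play}}\effscore_{\play\purealt_{\play}}(\time)}$ for a constant $K$ depending only on the number of actions. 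So it remains only to show that each score gap $\effscore_{\play\purealt_{\play}}(\time)$ grows at least quadratically in $\time$.

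For that I would look at the dynamics of the gaps themselves. Taking the difference of the $\pureq_{\play}$‑ and $\purealt_{\play}$‑components of \eqref{eq:FTXL-cont}, and using that the ``pure payoff'' $\payfield_{\play\pure_{\play}}(\strat) = \pay_{\play}(\pure_{\play};\strat_{-\play})$ depends on $\strat$ alone and is continuous (indeed multilinear) on $\strats$, one obtains
\begin{equation}
\ddot{\effscore}_{\play\purealt_{\play}}
	+ \frac{\fr}{\time}\,\dot{\effscore}_{\play\purealt_{\play}}
	= \pay_{\play}\parens{\pureq_{\play};\strat_{-\play}(\time)} - \pay_{\play}\parens{\purealt_{\play};\strat_{-\play}(\time)}.
\end{equation}
At $\strat = \eq$ the right‑hand side equals $\pay_{\play}(\pureq_{\play};\eq_{-\play}) - \pay_{\play}(\purealt_{\play};\eq_{-\play}) \geq 2\const > 0$, the strict inequality being precisely the definition of a strict equilibrium; hence, by continuity, there is a neighborhood $\nhd$ of $\eq$ in $\strats$ on which this difference stays $\geq \const$ for every $\play$ and every $\purealt_{\play}\neq\pureq_{\play}$. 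As long as the orbit remains in $\nhd$, the identity above becomes the differential inequality $\ddot{\effscore}_{\play\purealt_{\play}} + (\fr/\time)\,\dot{\effscore}_{\play\purealt_{\play}} \geq \const$. Multiplying by the integrating factor $\time^{\fr}$ recasts the left‑hand side as $\tfrac{d}{d\time}\parens{\time^{\fr}\dot{\effscore}_{\play\purealt_{\play}}}$, and then, using the prescribed initialization $\dotscoreof{\tstart} = 0$ (so that $\time^{\fr}\dot{\effscore}_{\play\purealt_{\play}}(\time) \to 0$ as $\time\to\tstart$, the apparent singularity of $\fr/\time$ at $\time = 0$ being innocuous exactly as in the analysis of \eqref{eq:SBC} in \cite{SBC16}), two successive integrations give $\dot{\effscore}_{\play\purealt_{\play}}(\time) \geq \const\time/(\fr+1)$ and then $\effscore_{\play\purealt_{\play}}(\time) \geq \effscore_{\play\purealt_{\play}}(\tstart) + \const\time^{2}/(2(\fr+1))$. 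Substituting this back into the estimate of the first step reproduces \eqref{eq:rate-cont-exp}, with $\Const$ absorbing $\log K$ together with the (finite, since $\stratof{\tstart}\in\intstrats$) initial gaps $\effscore_{\play\purealt_{\play}}(\tstart)$.

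The only point that genuinely needs care — and the step I expect to be the main obstacle — is that the estimate above was carried out under the standing hypothesis that the orbit stays in $\nhd$, so this has to be guaranteed; this is exactly where ``$\stratof{\tstart}$ sufficiently close to $\eq$'' enters, through a bootstrap. Let $T = \sup\setdef{\timealt\geq\tstart}{\stratof{\time}\in\nhd\text{ for all }\time\in[\tstart,\timealt]}$. On $[\tstart,T)$ the previous paragraph applies and shows, \emph{uniformly} in $\time$, that $\supnorm{\stratof{\time}-\eq} \leq K\exp\parens*{-\min_{\play,\,\purealt_{\play}\neq\pureq_{\play}}\effscore_{\play\purealt_{\play}}(\tstart)}$; since $\min_{\play,\purealt_{\play}}\effscore_{\play\purealt_{\play}}(\tstart)\to+\infty$ as $\stratof{\tstart}\to\eq$ within $\intstrats$, choosing $\stratof{\tstart}$ close enough makes this upper bound strictly smaller than the radius of $\nhd$, which prevents the orbit from ever reaching $\bd\nhd$ and forces $T = \infty$. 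The bound — and with it both the convergence $\stratof{\time}\to\eq$ and the rate \eqref{eq:rate-cont-exp} — then holds for all $\time\geq\tstart$. For a general regularizer the same scheme still yields $\effscore_{\play\purealt_{\play}}(\time)\to\infty$, hence $\stratof{\time}\to\eq$; only the passage from score gaps to $\supnorm{\stratof{\time}-\eq}$ has to be reworked, since the clean estimate $\strat_{\play\purealt_{\play}}\leq\exp(-\effscore_{\play\purealt_{\play}})$ is specific to the logit map — this is handled in \cref{app:reg}.
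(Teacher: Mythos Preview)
Your proposal is correct and follows essentially the same route as the paper's proof: both reduce to the score gaps $\effscore_{\play\purealt_{\play}}$, use continuity near $\eq$ to lower‑bound the forcing term by $\const$, multiply by the integrating factor $\time^{\fr}$ and integrate twice, and close with a first‑exit/bootstrap argument showing the orbit never leaves the good neighborhood. The only cosmetic differences are that the paper phrases the neighborhood condition in score space via a preliminary lemma (so ``close to $\eq$'' becomes ``all gaps exceed $M$'') and runs the trapping step as a contradiction on the first exit time rather than your $\sup$‑based bootstrap; otherwise the arguments coincide.
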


\Cref{thm:FTXL-cont} (which we prove in \cref{app:continuous}) is representative of the analysis to come, so some remarks are in order.
First, we should note that the explicit rate estimate \eqref{eq:rate-cont-exp} is derived for the special case of logit best responses, which underlie all \acl{EW} algorithms.
To the best of our knowledge, the only comparable result in the literature is the similar rate provided in \cite{LM13} for the case $\fr=0$.
In the case of a general regularizer, an analogous speed-up is observed, but the exact expressions are more involved, so we defer them to \cref{app:continuous}.
A second important point concerns whether the rate estimate \eqref{eq:rate-cont-exp} is tight or not.
Finally, the neighborhood of initial conditions around $\eq$ is determined by the minimum payoff difference at equilibrium and is roughly $\bigoh(c)$ in diameter;
we defer the relevant details of this discussion to \cref{app:continuous}.

To answer this question \textendash\ and, at the same time get a glimpse of the proof strategy for \cref{thm:FTXL-cont} \textendash\ it will be instructive to consider a single-player game with two actions.
Albeit simple, this toy example is not simplistic, as it provides an incisive look into the problem, and will be used to motivate our design choices in the sequel.

\begin{example}
\label{ex:single}
Consider a single-player game $\fingame$ with actions $\pureA$ and $\pureB$ such that $\pay(\pureA) - \pay(\pureB) = 1$, so the (dominant) strategy $\eq = (1,0)$ is a strict \acl{NE}.
Then, letting $\effscore = \score_{\pureA} - \score_{\pureB}$, \eqref{eq:FTXL-cont} readily yields
\begin{equation}
\frac{d^{2}\effscore}{d\time^{2}}
	= \frac{d^{2}\score_{\pureA}}{d\time^{2}} - \frac{d^{2}\score_{\pureB}}{d\time^{2}}
	= \pay(\pureA) - \pay(\pureB) - \frac{\fr}{\time} \bracks*{\frac{d\score_{\pureA}}{d\time} - \frac{d\score_{\pureB}}{d\time}}
	= 1 - \frac{\fr}{\time} \frac{d\effscore}{d\time}
	\eqdot
\end{equation}
As we show in \cref{app:continuous}, this non-autonomous differential equation can be solved exactly to yield $\effscoreof{\time} = \effscoreof{\tstart} + \time^{2}/\bracks{2(\fr+1)}$, and hence
\begin{equation}
\label{eq:example-rate}
\supnorm{\stratof{\time} - \eq}
	= \frac{1}{1+\exp\parens{\effscoreof{\time}}}
	\sim \exp\parens*{-\effscoreof{\tstart} - \frac{\time^2}{2(\fr+1)}}
	\eqdot
\end{equation}
Since $\drift = \pay(\pureA) - \pay(\pureB) = 1$, the rate \eqref{eq:example-rate} coincides with that of \cref{thm:FTXL-cont} up to a factor of $1/2$.
This factor is an artifact of the analysis and, in fact, it can be tightened to $(1-\eps)$ for arbitrarily small $\eps>0$;
we did not provide this more precise expression to lighten notation.
By contrast, the factor $2(\fr+1)$ in \eqref{eq:rate-cont-exp} \emph{cannot} be lifted;
this has important ramifications which we discuss below.
\endenv
\end{example}

The first conclusion that can be drawn from \cref{ex:single} is that the rate estimate of \cref{thm:FTXL-cont} is tight and cannot be improved in general.
In addition, and in stark contrast to \eqref{eq:NAG}, \cref{ex:single} shows that the optimal value for the friction parameter is $\fr=0$ (at least from a min-max viewpoint, as this value yields the best possible lower bound for the rate).
Of course, this raises the question as to whether this is due to the continuous-time
character of the policy;%
\footnote{The reader might also wonder if the use of a \emph{non-vanishing} friction coefficient \textendash\ $\fr\dot\score$ instead of $(\fr/\time)\dot\score$ \textendash\ could be beneficial to the convergence rate of \eqref{eq:FTXL-cont}.
As we show in \cref{app:continuous,app:discrete-full}, this leads to significantly worse convergence rates of the form $\supnorm{\stratof{\time} - \eq} \sim \exp(-\Theta(\time))$ for all $\fr>0$.}
however, as we show in detail in \cref{app:discrete-full}, this is \emph{not} the case:
the direct handover of \eqref{eq:NAG} to \cref{ex:single} yields the exact same rate (though the proof relies on a significantly more opaque generating function calculation).

In view of all this, it becomes apparent that friction only \emph{hinders} the equilibrium convergence properties of accelerated \ac{FTRL} schemes in our game-theoretic setting.
On that account, we will continue our analysis in the undamped regime $\fr=0$.

\section{Accelerated learning: Analysis and results}
\label{sec:results}

\subsection{The algorithm} 

To obtain a bona fide, algorithmic implementation of the continuous-time dynamics \eqref{eq:FTXL-cont}, we will proceed with the same explicit, finite-difference scheme leading to the discrete-time algorithm \eqref{eq:NAG} from the continuous-time dynamics \eqref{eq:SBC} of \citet{SBC16}.
Specifically, taking a discretization step $\step>0$ in \eqref{eq:FTXL-cont} and setting the scheme's friction parameter $\fr$ to zero (which, as we discussed at length in the previous section, is the optimal choice in our setting), a straightforward derivation yields the basic update rule
\begin{align}
\label{eq:diff-1}
\bracks{\statealt_{\play,\run+1}-2\statealt_{\play,\run} + \statealt_{\play,\run-1}} / \step^{2}
	= \signal_{\play,\run}
	\quad
	\text{for all $\play\in\players$ and all $\run=\running$}
\end{align}
In the above, just as in the case of \eqref{eq:FTRL}, $\signal_{\play,\run} \in \ambienti$ denotes a black-box ``payoff signal'' that carries information about the mixed payoff vector $\payfield_{\play}(\curr)$ of player $\play$ at the current strategy profile $\curr$ (we provide more details on this below).

Alternatively, to obtain an equivalent first-order iterative rule (which is easier to handle and discuss), it will be convenient to introduce the momentum variables $\curr[\statealtalt] = (\curr[\statealt] - \prev[\statealt])/\step$.
Doing just that, a simple rearrangement of \eqref{eq:diff-1} yields the \acli{FTXL} scheme
\begin{equation}
\label{eq:FTXL}
\tag{\acs{FTXL}}
\statealt_{\play,\run+1}
	= \statealt_{\play,\run}
		+ \step \statealtalt_{\play,\run+1}
	\qquad
\statealtalt_{\play,\run+1}
	= \statealtalt_{\play,\run}
		+ \step \signal_{\play,\run}
	\qquad
\state_{\play,\run}
	= \mirror_{\play}(\statealt_{\play,\run})
	\eqdot
\end{equation}
The algorithm \eqref{eq:FTXL} will be our main object of study in the sequel, and we will examine its convergence properties under three differerent models for $\curr[\signal]$:
\begin{subequations}
\label{eq:model}
\begin{enumerate}
\item
\define{Full information},
\ie players get to access their full, mixed payoff vectors:
\begin{alignat}{2}
\label{eq:model-full}
\signal_{\play,\run}
	&= \payfield_{\play}(\curr)
	&\qquad
	&\text{for all $\play\in\players$, $\run=\running$}
\intertext{%
\item
\define{Realization-based feedback},
\ie after choosing an action profile $\curr[\pure] \sim \curr$, each player $\play\in\players$ observes (or otherwise calculates) the vector of their counterfactual, ``what-if'' rewards, namely
}
\label{eq:model-pure}
\signal_{\play,\run}
	&= \payfield_{\play}(\curr[\pure])
	&\qquad
	&\text{for all $\play\in\players$, $\run=\running$}
\intertext{%
\item
\define{Bandit\,/\,Payoff-based feedback},
\ie each player only observes their current reward, and must rely on statistical estimation techniques to reconstruct an estimate of $\payfield_{\play}(\curr)$.
For concreteness, we will consider the case where players employ a version of the so-called \emph{importance-weighted estimator}
}
\label{eq:model-bandit}
\signal_{\play,\run}
	&= \IWE(\state_{\play,\run};\pure_{\play,\run})
	&\qquad
	&\text{for all $\play\in\players$, $\run=\running$}
\end{alignat}
which we describe in detail later in this section.
\end{enumerate}
\end{subequations}
Of course, this list of information models is not exhaustive, but it is a faithful representation of most scenarios that arise in practice, so it will suffice for our purposes.

Now before moving forward with the analysis,
it will be useful to keep some high-level remarks in mind.
The first is that \eqref{eq:FTXL} shares many similarities with \eqref{eq:FTRL}, but also several notable differences.
At the most basic level, \eqref{eq:FTRL} and \eqref{eq:FTXL} are both ``stimulus-response'' schemes in the spirit of \citet{ER98}, that is, players ``respond'' with a strategy $\state_{\play,\run} = \mirror_{\play}(\statealt_{\play,\run})$ to a ``stimulus'' $\statealt_{\play,\run}$ generated by the observed payoff signals $\signal_{\play,\run}$.
In this regard, both methods adhere to the online learning setting (and, in particular, to the regularized learning paradigm).

However, unlike \eqref{eq:FTRL}, where players respond to the aggregate of their payoff signals \textendash\ the process $\curr[\statealt]$ in \eqref{eq:FTRL} \textendash\ the accelerated algorithm \eqref{eq:FTXL} introduces an additional aggregation layer, which expresses how players ``build momentum'' based on the same payoff signals \textendash\ the process $\curr[\statealtalt]$ in \eqref{eq:FTXL}.
Intuitively, we can think of these two processes as the ``position'' and ``momentum'' variables of a classical inertial system, not unlike the heavy-ball dynamics of \citet{SBC16}.
The only conceptual difference is that, instead of rolling along the landscape of a (convex) function, the players now track the ``mirrored'' payoff field $\est\payfield(\score) \defeq \payfield(\mirror(\score))$.

In the rest of this section, we proceed to examine in detail the equilibrium convergence properties of \eqref{eq:FTXL} under each of the three models detailed in \cref{eq:model-full,eq:model-pure,eq:model-bandit} in order.

\subsection{Accelerated learning with full information}

We begin with the full information model \eqref{eq:model-full}.
This is the most straightforward model (due to the absence of randomness and uncertainty) but, admittedly, also the least realistic one.
Nevertheless, it will serve as a useful benchmark for the rest, and it will allow us to introduce several important notions.

Before we state our result, it is important to note that a finite game can have multiple strict \aclp{NE}, so global convergence results are, in general, unattainable;
for this reason, we analyze the algorithm's local convergence landscape.
In this regard, \Cref{thm:FTXL-full} below shows that \eqref{eq:FTXL} with full information achieves a \emph{superlinear} local convergence rate to strict \aclp{NE}:

\begin{restatable}{theorem}{Full}
\label{thm:FTXL-full}
Let $\eq$ be a strict \acl{NE} of $\fingame$,
and
let $\strat_\run = \mirror(\score_\run)$ be the sequence of play generated by \eqref{eq:FTXL}
with full information feedback of the form \eqref{eq:model-full}.
If $\strat_\start$ is initialized sufficiently close to $\eq$, then $\strat_\run$ converges to $\eq$;
in particular, if \eqref{eq:FTXL} is run with logit best responses \textpar{that is, $\mirror \gets \logit$}, we have
\begin{equation}
\label{eq:rate-full-exp}
\supnorm{\last - \eq}
	\leq \exp\parens*{\Const - \const\step^{2}\frac{\nRuns(\nRuns-1)}{2}}
	= \exp \parens[\big]{-\Theta(\cramped{\nRuns^{2}})}
\end{equation}
where
$\Const>0$ is a constant that depends only on the initialization of \eqref{eq:FTXL}
and
\begin{equation}
\label{eq:drift}
\const
	= \frac{1}{2}
	\adjustlimits
		\min_{\play\in\players}
		\min_{\purealt_{\play}\notin\supp(\eq_{\play})}
		\bracks{\pay_{\play}(\eq_{\play};\eq_{-\play}) - \pay_{\play}(\purealt_{\play};\eq_{-\play})}
	> 0
\end{equation}
is the minimum payoff difference at equilibrium.
\end{restatable}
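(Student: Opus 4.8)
The plan is to carry out in discrete time the mechanism behind \cref{thm:FTXL-cont}. Since $\eq$ is strict, each $\supp(\eq_\play)$ is a singleton $\{\pureq_\play\}$ and the minimal equilibrium payoff gap $2\const$ is strictly positive; by continuity of $\pay$ on the compact set $\strats$ I would first fix a neighborhood $\nhd$ of $\eq$ on which $\pay_\play(\pureq_\play;\strat_{-\play}) - \pay_\play(\purealt_\play;\strat_{-\play}) \geq \const$ for every $\play\in\players$ and every $\purealt_\play\neq\pureq_\play$ (using only half of the equilibrium gap already suffices for the stated exponent, which accounts for the ``$1/2$ artifact'' discussed after \cref{ex:single}). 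The point of this localization is that on $\nhd$ each player's payoff differences have a definite sign, so the players decouple and \eqref{eq:FTXL} behaves like $\card\pures_\play-1$ parallel copies of the scalar recursion of \cref{ex:single} per player.

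Next I would pass to score gaps. Running \eqref{eq:FTXL} with logit responses, set $\effscore_{\play,\purealt,\run} = \statealt_{\play,\pureq_\play,\run} - \statealt_{\play,\purealt,\run}$ and $w_{\play,\purealt,\run} = \statealtalt_{\play,\pureq_\play,\run} - \statealtalt_{\play,\purealt,\run}$ for $\purealt\neq\pureq_\play$. Since $\logit$ is invariant under constant shifts of the score, a direct estimate gives $\supnorm{\strat_\run - \eq} \leq \max_{\play}\sum_{\purealt\neq\pureq_\play}\exp(-\effscore_{\play,\purealt,\run})$, while conversely taking $\strat_\start$ close to $\eq$ forces each $\effscore_{\play,\purealt,\start}$ to be large. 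Subtracting coordinates in the two updates of \eqref{eq:FTXL} turns them into the decoupled recursion $w_{\play,\purealt,\run+1} = w_{\play,\purealt,\run} + \step\bracks{\pay_\play(\pureq_\play;\strat_{-\play,\run}) - \pay_\play(\purealt;\strat_{-\play,\run})}$ and $\effscore_{\play,\purealt,\run+1} = \effscore_{\play,\purealt,\run} + \step\,w_{\play,\purealt,\run+1}$, with $w_{\play,\purealt,\start} = 0$ because the momentum is initialized at zero (the discrete analogue of $\dotscoreof{\tstart}=0$).

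The core step is then a joint induction on $\run$ showing that, whenever $\strat_\start$ is close enough to $\eq$, for every $\run$ one has \textup{(i)} $\strat_\run\in\nhd$; \textup{(ii)} $w_{\play,\purealt,\run}\geq 0$; and \textup{(iii)} $\effscore_{\play,\purealt,\run}\geq\effscore_{\play,\purealt,\start}$. Indeed, \textup{(i)} makes the bracketed driving term at least $\const>0$, hence $w_{\play,\purealt,\run+1}\geq w_{\play,\purealt,\run}\geq0$, which gives \textup{(ii)}; \textup{(ii)} makes the $\effscore$-recursion nondecreasing, giving \textup{(iii)}; and \textup{(iii)} keeps $\supnorm{\strat_{\run+1}-\eq}$ below the radius of $\nhd$ through the estimate above, re-establishing \textup{(i)}. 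With this in place, unrolling the two recursions gives $w_{\play,\purealt,\run}\geq\const\step(\run-1)$ and hence $\effscore_{\play,\purealt,\run}\geq\effscore_{\play,\purealt,\start}+\const\step^{2}\tfrac{\run(\run-1)}{2}$; substituting into the bound for $\supnorm{\last-\eq}$ yields \eqref{eq:rate-full-exp}, with $\Const$ absorbing $-\min_{\play,\purealt}\effscore_{\play,\purealt,\start}$ and the combinatorial factor $\log\max_\play(\card\pures_\play-1)$, and the (super\-linear) convergence $\strat_\run\to\eq$ is then immediate. For a general regularizer the same invariance argument gives $\strat_\run\to\eq$, with the explicit logit estimate replaced by the regularizer-specific bound of \cref{app:reg}.

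The main obstacle is the joint invariance \textup{(i)}--\textup{(iii)}: for accelerated schemes one has to keep the ``position'' variable $\effscore$ and the ``momentum'' variable $w$ controlled simultaneously, since a large momentum could in principle carry the iterate out of the basin before the restoring effect sets in. Here this works out cleanly precisely because of the sign-definiteness secured by the localization step, which forces the momentum to accumulate monotonically in the favorable direction so that there is no overshoot to track; this is also why the discrete argument can bypass the energy/Lyapunov bookkeeping used in continuous time for \eqref{eq:FTXL-cont}.
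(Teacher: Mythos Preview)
Your proposal is correct and follows essentially the same approach as the paper's proof: localize so that payoff differences are uniformly at least $\const$, pass to score gaps and momentum gaps, run an induction keeping the iterates in the good region, and unroll the resulting scalar recursion to get the $\const\step^{2}\tfrac{\nRuns(\nRuns-1)}{2}$ exponent. The only cosmetic differences are that the paper localizes in the dual space (a threshold $M$ on score differences, via \cref{lem:score-dom}) rather than a primal neighborhood $\nhd$, uses the opposite sign convention for the gap variables, and packages the final conversion to $\supnorm{\cdot}$ through \cref{lem:reg-conv}; the mechanics and the invariant being maintained are identical.
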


To maintain the flow of our discussion, we defer the proof of \cref{thm:FTXL-full} to \cref{app:discrete-full}.
Instead, we only note here that, just as in the case of \eqref{eq:SBC} and \eqref{eq:NAG}, \cref{thm:FTXL-full} provides essentially the same rate of convergence as its continuous-time counterpart, \cref{thm:FTXL-cont}, modulo a subleading term which has an exponentially small impact on the rate of convergence.
In particular, we should stress that the \emph{superlinear} convergence rate of \eqref{eq:FTXL} exhibits an exponential speedup relative to \eqref{eq:FTRL}, which is known to converge at a geometric rate $\supnorm{\last - \eq} = \exp(-\Theta(\nRuns))$.
This is in direct correspondence to what we observe in continuous time, showing in particular that the continuous-time dynamics \eqref{eq:FTXL-cont} are a faithful representation of \eqref{eq:FTXL}.

We should also stress here that superlinear convergence rates are typically associated with methods that are second-order \emph{in space}, in the sense that they employ Hessian-like information \textendash\ like Newton's algorithm \textendash\ not second-order \emph{in time} \textendash\ like \eqref{eq:NAG} and \eqref{eq:FTXL}.
We find this observation particularly intriguing as it suggests that accelerated rates can be observed in the context of learning in games without having to pay the excessively high compute cost of second-order methods in optimization.

\subsection{Accelerated learning with realization-based feedback}

We now turn to the realization-based model \eqref{eq:model-pure}, where players can only assess the rewards of their pure actions in response to the \emph{realized} actions of all other players.
In words, $\signal_{\play,\run} = \payfield_{\play}(\curr[\pure])$ collects the payoffs that player $\play \in \players$ would have obtained by playing each of their pure actions $\pure_\play \in \pures_\play$ against the action profile $\pure_{-\play,\run}$ adopted by the rest of the players.

In contrast to the full information model \eqref{eq:model-full}, the realization-based model is stochastic in nature, so our convergence results will likewise be stochastic.
Nevertheless, despite the added layer of uncertainty, we show that \eqref{eq:FTXL} with realization-based feedback maintains a superlinear convergence rate with high probability:

\begin{figure}[t!]
    \centering
    \begin{subfigure}{0.49\textwidth}
    \centering
        \includegraphics[scale = 0.37]{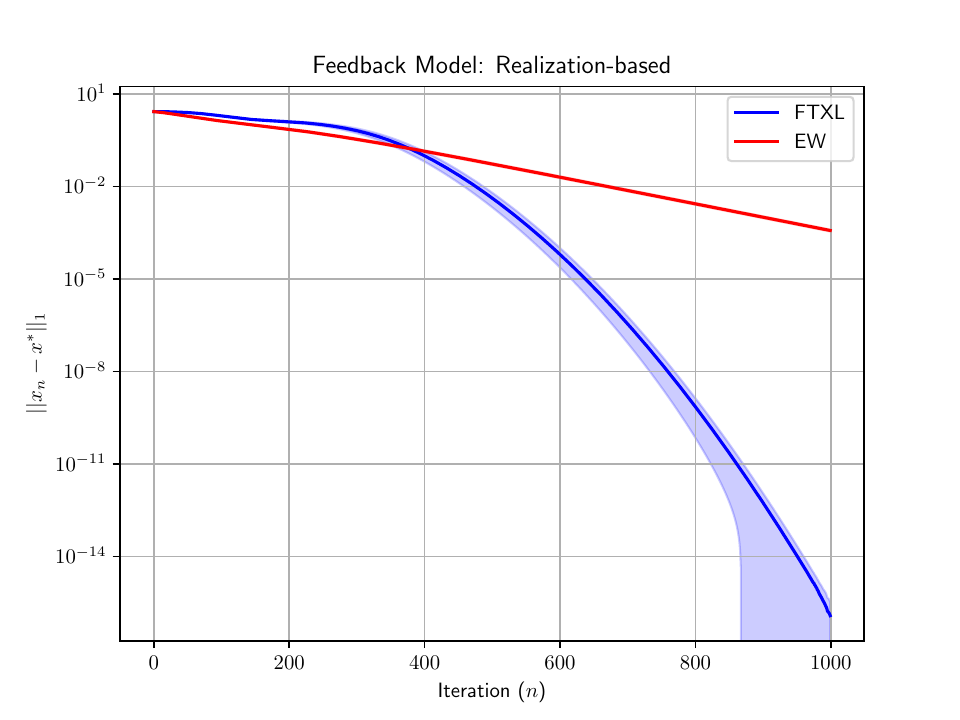}
        \caption{Zero-sum game: Realization-based feedback}
    \end{subfigure}
    \hfill
    \begin{subfigure}{0.49\textwidth}
        \centering
        \includegraphics[scale =0.37]{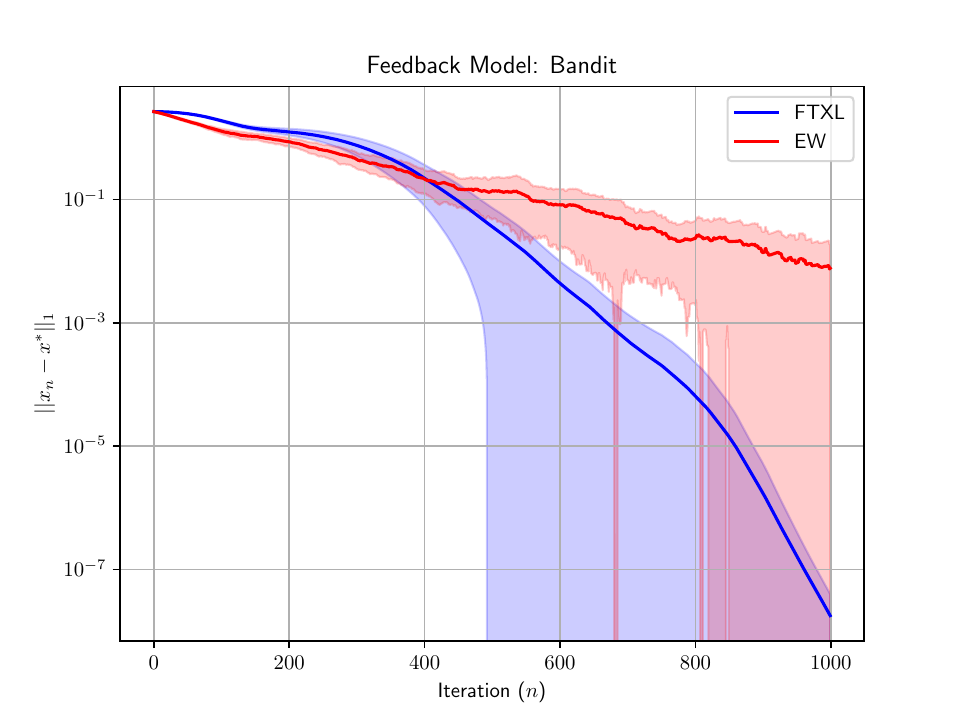}
        \caption{Zero-sum game: Bandit feedback}
    \end{subfigure}
    
    \begin{subfigure}{0.49\textwidth}
    \centering
        \includegraphics[scale =0.37]{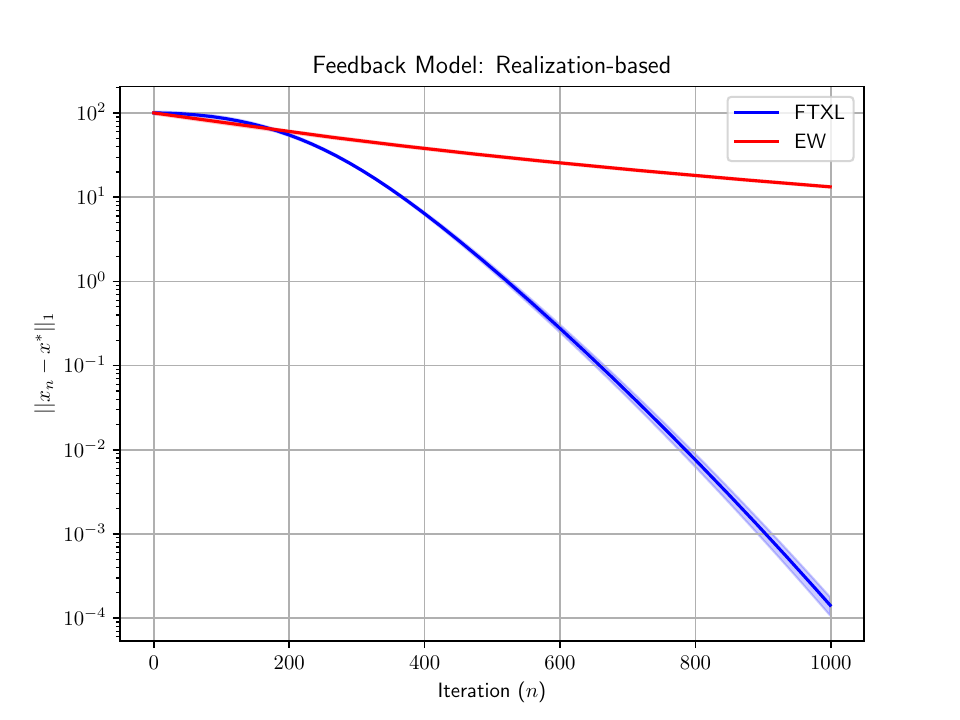}
        \caption{Congestion game: Realization-based feedback}
    \end{subfigure}
    \hfill
    \begin{subfigure}{0.49\textwidth}
    \centering
        \includegraphics[scale =0.37]{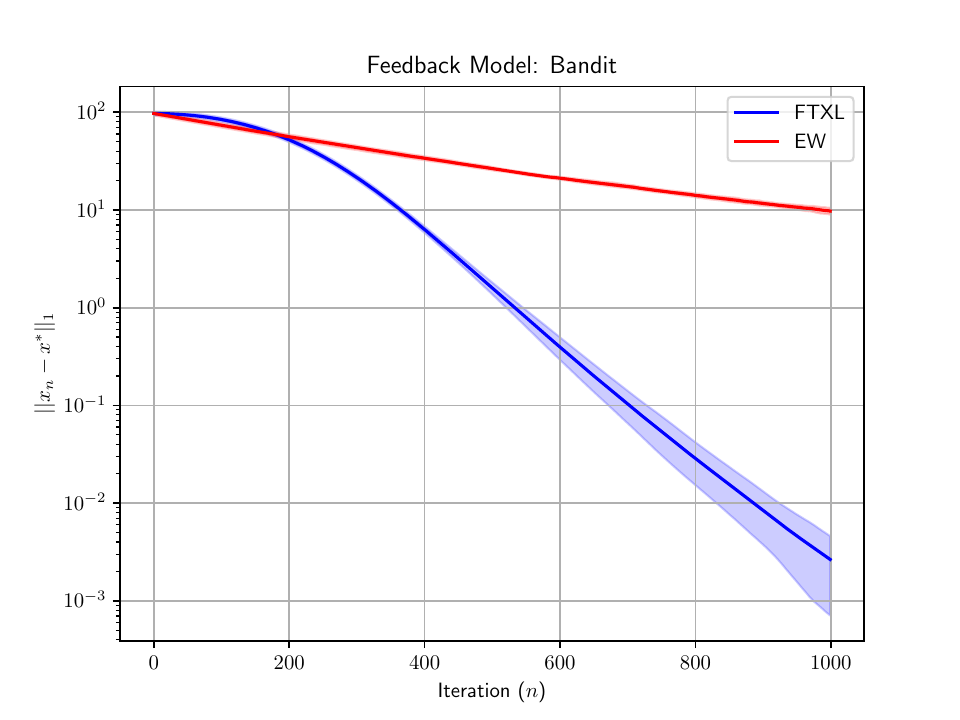}
        \caption{Congestion game: Bandit feedback}
    \end{subfigure}
    
    \caption{Performance evaluation of \eqref{eq:FTXL} in a zero-sum and a congestion game under realization-based and bandit feedback. Solid lines represent average values, while shaded regions enclose $\pm 1$ standard deviation. The plots are in logarithmic scale.}
    \label{fig:numerics}
\end{figure}

\begin{restatable}{theorem}{Realization}
\label{thm:FTXL-pure}
Let $\eq$ be a strict \acl{NE} of $\fingame$,
fix some confidence level $\conf>0$,
and
let $\strat_\run = \mirror(\score_\run)$ be the sequence of play generated by \eqref{eq:FTXL}
with realization-based feedback as per \eqref{eq:model-pure}
and
a sufficiently small step-size $\step>0$.
Then there exists a neighborhood $\nhd$ of $\eq$ such that
\begin{equation}
\probof*{\curr \to \eq \;\textup{as}\; \run\to\infty}
	\geq 1-\conf
	\qquad
	\text{if $\init\in\nhd$}.
\end{equation}
In particular, if \eqref{eq:FTXL} is run with logit best responses \textpar{that is, $\mirror \gets \logit$}, there exist positive constants $\Const,\const>0$ as in \cref{thm:FTXL-full} such that on the event $\braces*{\curr \to \eq \;\textup{as}\; \run\to\infty}$:
\begin{equation}
\label{eq:rate-pure-exp}
\supnorm{\last - \eq}
	\leq \exp\parens*{\Const - \const\step^{2}\frac{\horizon(\horizon-1)}{2} + \frac{3}{5}\drift\step^{5/3}\horizon^{5/3}}
	= \exp \parens[\big]{-\Theta(\cramped{\nRuns^{2}})}\eqdot
\end{equation}
\end{restatable}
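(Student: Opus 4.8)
The plan is to lift the deterministic analysis of \cref{thm:FTXL-full} to the stochastic setting by isolating a single event of probability at least $1-\conf$ on which the feedback noise is small enough that the quadratic drift of the ``score gaps'' still confines the iterates near $\eq$ and forces superlinear convergence. Write $\pureq_\play$ for the unique action in $\supp(\eq_\play)$ and, for each player $\play\in\players$ and each off-equilibrium action $\purealt_\play\neq\pureq_\play$, set $\effscore_{\play,\purealt,\run}\defeq\statealt_{\play,\pureq_\play,\run}-\statealt_{\play,\purealt,\run}$. Realization-based feedback is conditionally unbiased, so $\signal_{\play,\run}=\payfield_\play(\strat_\run)+\snoise_{\play,\run}$ with $\exof{\snoise_{\play,\run}\given\filter_\run}=0$; moreover, since the game is finite, $\snoise'_{\play,\purealt,\run}\defeq\snoise_{\play,\pureq_\play,\run}-\snoise_{\play,\purealt,\run}$ is a bounded martingale-difference sequence, say $\abs{\snoise'_{\play,\purealt,\run}}\leq\sbound$. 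Unrolling \eqref{eq:FTXL} with the standard initialization $\statealtalt_{\play,1}=0$ gives $\statealt_{\play,\run}=\statealt_{\play,1}+\step^{2}\sum_{j=1}^{\run-1}(\run-j)\signal_{\play,j}$, hence
\[ \effscore_{\play,\purealt,\run}=\effscore_{\play,\purealt,1}+\step^{2}\sum_{j=1}^{\run-1}(\run-j)\bracks{\payfield_{\play\pureq_\play}(\strat_j)-\payfield_{\play\purealt}(\strat_j)}+\step^{2}\sum_{j=1}^{\run-1}(\run-j)\,\snoise'_{\play,\purealt,j}. \]

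Next I would localize. Fix $\eps\in(0,\tfrac12]$ and let $\nhd$ be the set of $\strat=\mirror(\score)$ with $\effscore_{\play,\purealt}\geq B_0$ for all $\play,\purealt$, where $B_0$ is large; for logit best responses this is a genuine neighborhood of $\eq$ because $\supnorm{\logit(\score)-\eq}\leq A\exp(-\min_{\play,\purealt}\effscore_{\play,\purealt})$, with $A$ the number of off-equilibrium (player, action) pairs, and $B_0$ large additionally guarantees $\payfield_{\play\pureq_\play}(\strat)-\payfield_{\play\purealt}(\strat)\geq2(1-\eps)\const$ for all $\strat\in\nhd$ (by continuity of $\payfield$ and the definition of $\const$). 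Let $\proper\defeq\inf\setof{\run:\strat_\run\notin\nhd}$; for $\run\leq\proper$ every index $j\leq\run-1$ above has $j<\proper$, so $\strat_j\in\nhd$ and the drift term is at least $2(1-\eps)\const\step^{2}\sum_{j=1}^{\run-1}(\run-j)=(1-\eps)\const\step^{2}\run(\run-1)$. To control the noise uniformly in time, observe that the stopped sums $\tilde M_{\play,\purealt,i}\defeq\sum_{j=1}^{i}\snoise'_{\play,\purealt,j}\oneof{j<\proper}$ are martingales with increments bounded by $\sbound$ and that $\step^{2}\sum_{j=1}^{\run-1}(\run-j)\snoise'_{\play,\purealt,j}=\step^{2}\sum_{i=1}^{\run-1}\tilde M_{\play,\purealt,i}$ for $\run\leq\proper$. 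Set
\[ \event\defeq\setof[\big]{\abs{\tilde M_{\play,\purealt,i}}\leq\const\step^{-1/3}i^{2/3}\ \text{for all}\ i\geq1,\ \play\in\players,\ \purealt_\play\neq\pureq_\play}. \]
A dyadic decomposition of the time axis together with the maximal Azuma--Hoeffding inequality bounds $\probof{\comp{\event}}$ by a convergent geometric series $\sum_{m\geq0}\exp(-\Theta(\const^{2}\sbound^{-2}\step^{-2/3}2^{m/3}))$, which is below $\conf$ once $\step$ is small enough (the threshold depending on $\conf,\const,\sbound$ and on $A$ and $\card\players$). On $\event$ one gets $\abs{\step^{2}\sum_{i=1}^{\run-1}\tilde M_{\play,\purealt,i}}\leq\const\step^{5/3}\sum_{i=1}^{\run-1}i^{2/3}\leq\tfrac35\const\step^{5/3}\run^{5/3}$ for all $\run\leq\proper$.

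Combining the two bounds, on $\event$ and for $\run\leq\proper$,
\[ \effscore_{\play,\purealt,\run}\geq\effscore_{\play,\purealt,1}+(1-\eps)\const\step^{2}\run(\run-1)-\tfrac35\const\step^{5/3}\run^{5/3}\geq\effscore_{\play,\purealt,1}-\const/C_0, \]
since the map $\run\mapsto(1-\eps)\const\step^{2}\run(\run-1)-\tfrac35\const\step^{5/3}\run^{5/3}$ attains its minimum over $\run\geq1$ near $\run\asymp\step^{-1}$, with value at least $-\const/C_0$ for an absolute constant $C_0$. Hence, if $\init$ is taken close enough to $\eq$ that $\effscore_{\play,\purealt,1}\geq B_0+\const/C_0+1$ for all $\play,\purealt$ (a condition that cuts out a neighborhood of $\eq$ shrinking with $\const$), then $\min_{\play,\purealt}\effscore_{\play,\purealt,\run}>B_0$ for all $\run\leq\proper$, which is incompatible with $\proper<\infty$; so $\proper=\infty$ on $\event$. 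The displayed lower bound then holds for every $\run$ and diverges, so $\strat_\run=\mirror(\score_\run)\to\eq$ on $\event$, and since $\probof{\event}\geq1-\conf$ this gives the stated probability bound. Finally, for logit best responses $\supnorm{\strat_\run-\eq}\leq A\exp(-\min_{\play,\purealt}\effscore_{\play,\purealt,\run})$, so on $\event$
\[ \supnorm{\last-\eq}\leq\exp\parens[\big]{\Const-(1-\eps)\const\step^{2}\nRuns(\nRuns-1)+\tfrac35\const\step^{5/3}\nRuns^{5/3}}, \]
with $\Const$ depending only on the initialization (as in \cref{thm:FTXL-full}); since $\eps\leq\tfrac12$ this yields \eqref{eq:rate-pure-exp}, and as $\step$ is fixed the $-\Theta(\nRuns^{2})$ term dominates the $\nRuns^{5/3}$ correction, so the bound is $\exp(-\Theta(\nRuns^{2}))$.

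The main obstacle is the simultaneous calibration carried out in the last two steps: the martingale envelope $\const\step^{-1/3}i^{2/3}$ must be fine enough that $\probof{\comp{\event}}$ is summable over dyadic blocks and falls below the prescribed $\conf$ for small $\step$, and at the same time coarse enough that the resulting cumulative-noise bound $\tfrac35\const\step^{5/3}\run^{5/3}$ stays dominated by the quadratic drift $(1-\eps)\const\step^{2}\run^{2}$ not merely asymptotically but uniformly in $\run$, so that the downward excursion of $\effscore_{\play,\purealt,\cdot}$ from its initial value is bounded by an absolute multiple of $\const$ and the iterates never leave $\nhd$. Resolving the circular dependence between ``$\strat_\run\in\nhd$'' and ``the drift estimate applies'' through the stopping time $\proper$ and the stopped martingale $\tilde M$, and verifying the $\run\asymp\step^{-1}$ minimization behind the excursion bound, is where the real work lies. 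The same scheme will carry over to the bandit model \eqref{eq:model-bandit}, the only essential change being a larger, step-size-sensitive bound on the feedback noise, since the importance-weighted estimator has conditional variance that grows as $\strat_\run$ approaches the pure equilibrium $\eq$.
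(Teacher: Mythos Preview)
Your argument is correct and reaches the same rate, but it diverges from the paper's proof in two places worth noting. First, for the martingale control the paper packages the estimate into a general lemma (Doob's maximal inequality followed by Burkholder--Davis--Gundy and a H\"older step) that bounds $\probof{\sup_{\runalt\leq\run}\abs{M_\runalt}>\const(\step\run)^{\mu}}$ by $A_q\step^{q(1-\mu)}\run^{-q(\mu-1/2)}$ for any $q>2$, then sums over $\run$; your Azuma--Hoeffding plus dyadic blocking is more elementary and exploits directly that realization-based noise is almost surely bounded, but it would not transfer to the bandit case where the paper reuses the same moment-based lemma with $\noisepar_\run=\Theta(1/\mix_\run)$. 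Second, the paper handles the circular ``drift-only-inside-$\nhd$'' issue by a straight induction on the good event rather than your stopping-time/stopped-martingale device (the stopping is harmless here since the increments are bounded regardless of whether $\strat_\run\in\nhd$, but it is the natural move for the bandit proof). Your Abel-summation rewrite $\step^{2}\sum_{j=1}^{\run-1}(\run-j)\snoise'_{j}=\step^{2}\sum_{i=1}^{\run-1}M_i$ is a clean way to get the $\run^{5/3}$ correction; the paper arrives at the same exponent by first bounding the one-step momentum difference and then summing, which amounts to the same computation. One cosmetic point: you reuse the symbol $\sbound$ for the noise bound whereas the paper uses it for the score-gap threshold from \cref{lem:score-dom}; keep the two separate to avoid confusion.
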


What is particularly surprising in \cref{thm:FTXL-pure} is that, \eqref{eq:FTXL} maintains the accelerated superlinear rate of \cref{thm:FTXL-full} \textendash\ and, likewise, the exponential speedup relative to \eqref{eq:FTRL} \textendash\ \emph{despite} the randomness and uncertainty involved in the realization-based model \eqref{eq:model-pure}.
The salient point enabling this feature of \eqref{eq:FTXL} is that $\curr[\signal]$ can be expressed as
\begin{equation}
\label{eq:realized-grad-2}
\curr[\signal]
	= \payfield(\curr)
		+ \curr[\noise]
\end{equation}
where $\curr[\noise] \in \ambient$ is an almost surely bounded conditionally zero-mean stochastic perturbation, that is, $\exof{\curr[\noise] \given \curr[\filter]} = 0$, where $\curr[\filter] \defeq \sigma(\init,\dots,\curr)$ denotes the history of play up to (and including) time $\run$.
Thanks to the boundedness of \eqref{eq:realized-grad-2}, we are able to derive a series of probabilistic estimates showing that, with high probability (and, in particular, greater than $1-\conf$), the contribution of the noise in the algorithm's rate becomes subleading, thus allowing the superlinear rate of \cref{thm:FTXL-full} to emerge.
As in the case of \cref{thm:FTXL-full}, we defer the proof of \cref{thm:FTXL-pure} to the appendix.

\subsection{Bandit feedback}

The last framework we consider is the bandit model where players only observe their realized rewards, a scalar from which they must reconstruct their entire payoff vector.
To do so, a standard technique from the multi-armed bandit literature is the so-called \acdef{IWE} \cite{BCB12,LS20}, defined in our setting as
\begin{equation}
\label{eq:IWE}
\tag{IWE} 
\signal_{\play\pure_{\play},\run}
	= \frac{\oneof{\pure_{\play,\run} = \pure_{\play}}}{\hat{\state}_{\play\pure_{\play,\run}}}
		\pay_{\play}(\pure_\play;\pure_{-
  \play,\run})
\end{equation}
where $\est\state_{\play,\run} = \parens{1-\mix_{\run}}\state_{\play,\run} + \mix_{\run}\unif_{\pures_\play}$ is a mixture of $\state_{\play,\run}$ and the uniform distribution on $\pures_{\play}$ (a mechanism known in the literature as \define{explicit exploration}).
Importantly, this estimator is unbiased relative to the perturbed strategy $\hat\state_{\curr}$, which thus incurs an $\bigoh(\curr[\mix])$ non-zero-sum error to the estimation of $\payfield_{\play}(\curr)$.
This error can be made arbitrarily small by taking $\curr[\mix]\to0$ but, in doing so, the variance of $\signal_{\play,\run}$ diverges, leading to a bias-variance trade-off that is difficult to tame.

Despite these added difficulties, we show below that \eqref{eq:FTXL} maintains its superlinear convergence rate even with bandit, payoff-based feedback:

\begin{restatable}{theorem}{Bandit}
\label{thm:FTXL-bandit}
Let $\eq$ be a strict \acl{NE} of $\fingame$,
fix some confidence level $\conf>0$,
and
let $\strat_\run = \mirror(\score_\run)$ be the sequence of play generated by \eqref{eq:FTXL}
with bandit feedback of the form \eqref{eq:model-bandit},
an \ac{IWE} exploration parameter $\mix_\run \propto 1/\run^{\mixexp}$ for some $\mixexp \in (0,1/2)$,
and
a sufficiently small step-size $\step>0$.
Then there exists a neighborhood $\nhd$ of $\eq$ in $\strats$ such that
\begin{equation}
\probof*{\curr \to \eq \;\textup{as}\; \run\to\infty}
	\geq 1-\conf
	\qquad
	\text{if $\init\in\nhd$}.
\end{equation}
In particular, if \eqref{eq:FTXL} is run with logit best responses \textpar{that is, $\mirror \gets \logit$}, there exist positive constants $\Const,\const>0$ as in \cref{thm:FTXL-full} such that on the event $\braces*{\curr \to \eq \;\textup{as}\; \run\to\infty}$
\begin{equation}
\label{eq:rate-bandit-exp}
\supnorm{\last - \eq}
	\leq \exp\parens*{\Const - \const\step^{2}\frac{\horizon(\horizon-1)}{2} + \frac{5}{9}\drift\step^{9/5}\horizon^{9/5}}
	= \exp \parens[\big]{-\Theta(\cramped{\nRuns^{2}})}\eqdot
\end{equation}
\end{restatable}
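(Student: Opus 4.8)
The plan is to follow the blueprint of \cref{thm:FTXL-full,thm:FTXL-pure}, isolating the effect of the \eqref{eq:IWE} estimator through a bias--variance decomposition. As in those results, the right coordinates are the \emph{payoff gaps}: for each player $\play\in\players$ and each off-equilibrium action $\purealt_{\play}\notin\supp(\eq_{\play})$, set $\effscore_{\play\purealt_{\play},\run}\defeq\score_{\play\pureq_{\play},\run}-\score_{\play\purealt_{\play},\run}$, so that under $\mirror\gets\logit$ one has $\state_{\play\purealt_{\play},\run}\leq\exp(-\effscore_{\play\purealt_{\play},\run})$, hence $\supnorm{\curr-\eq}\leq 2\max_{\play,\purealt_{\play}}\exp(-\effscore_{\play\purealt_{\play},\run})$; it thus suffices to lower-bound these gaps. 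Unrolling the two recursions of \eqref{eq:FTXL} (with the initial momentum $\statealtalt_{\play,\start}$ set to zero, matching the convention $\dotscoreof{\tstart}=0$) gives the closed form
\[
\effscore_{\play\purealt_{\play},\run}
  = \effscore_{\play\purealt_{\play},\start}
    + \step^{2}\sum_{\runalt=\start}^{\run-1}(\run-\runalt)\bracks{\signal_{\play\pureq_{\play},\runalt}-\signal_{\play\purealt_{\play},\runalt}},
\]
the discrete counterpart of the double integration $\score(\time)=\score(\tstart)+\int_{\tstart}^{\time}(\time-\timealt)\,\payfield(\mirror(\score(\timealt)))\dd\timealt$ behind \cref{thm:FTXL-cont}. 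This already explains the shape of \eqref{eq:rate-bandit-exp}: a noiseless drift of size $\const$ per step contributes $\const\step^{2}\sum_{\runalt<\run}(\run-\runalt)=\const\step^{2}\run(\run-1)/2$, which exponentiates to the leading $\exp(-\Theta(\run^{2}))$ term.

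Next I would decompose $\signal_{\play,\runalt}=\payfield_{\play}(\iter)+\sbias_{\play,\runalt}+\snoise_{\play,\runalt}$, where $\sbias_{\play,\runalt}\defeq\payfield_{\play}(\est\state_{\play,\runalt})-\payfield_{\play}(\iter)$ is the exploration bias and $\snoise_{\play,\runalt}\defeq\signal_{\play,\runalt}-\payfield_{\play}(\est\state_{\play,\runalt})$ is an $\iter[\filter]$-martingale difference. Multilinearity of $\payfield$ and $\norm{\est\state_{\play,\runalt}-\iter}=\bigoh(\mix_{\runalt})$ yield $\norm{\sbias_{\play,\runalt}}=\bigoh(\mix_{\runalt})=\bigoh(\runalt^{-\mixexp})$, while the explicit-exploration lower bound $\est\state_{\play\pure,\runalt}\geq\mix_{\runalt}/\abs{\pures_{\play}}$ together with the form of \eqref{eq:IWE} give the deterministic bound $\supnorm{\snoise_{\play,\runalt}}=\bigoh(\mix_{\runalt}^{-1})=\bigoh(\runalt^{\mixexp})$ and the conditional second-moment bound $\exof{\supnorm{\snoise_{\play,\runalt}}^{2}\given\iter[\filter]}=\bigoh(\mix_{\runalt}^{-1})$. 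Substituting into the closed form splits $\effscore_{\play\purealt_{\play},\run}$ into a \emph{drift} term $\step^{2}\sum_{\runalt<\run}(\run-\runalt)\bracks{\payfield_{\play\pureq_{\play}}(\iter)-\payfield_{\play\purealt_{\play}}(\iter)}$, a \emph{bias} term, and a \emph{noise} term. Shrinking the neighborhood $\nhd$ of $\eq$ so that $\payfield_{\play\pureq_{\play}}(\state)-\payfield_{\play\purealt_{\play}}(\state)$ stays above $\const$ (indeed, close to $2\const$) on $\nhd$, the drift term is $\geq\const\step^{2}\run(\run-1)/2$ whenever $\iter\in\nhd$ for all $\runalt<\run$; the bias term is $\bigoh\bigl(\step^{2}\sum_{\runalt<\run}(\run-\runalt)\runalt^{-\mixexp}\bigr)=\bigoh(\step^{2}\run^{2-\mixexp})$; and the noise term, being $\step^{2}\sum_{\runalt<\run}(\run-\runalt)(\snoise_{\play\pureq_{\play},\runalt}-\snoise_{\play\purealt_{\play},\runalt})=\step^{2}\sum_{m<\run}\sum_{\runalt\leq m}(\snoise_{\play\pureq_{\play},\runalt}-\snoise_{\play\purealt_{\play},\runalt})$, is a partial sum of the scalar martingales $m\mapsto\sum_{\runalt\leq m}(\snoise_{\play\pureq_{\play},\runalt}-\snoise_{\play\purealt_{\play},\runalt})$, which I would control uniformly in $\run$ by a maximal Freedman/Azuma-type inequality (applied over dyadic time scales) using the accumulated conditional variance $\bigoh(\sum_{\runalt\leq m}\mix_{\runalt}^{-1})$; this produces a bound of the form $\bigoh(\step^{2}\run^{\pexp})$ with $\pexp=\pexp(\mixexp)<2$, valid with probability at least $1-\conf$. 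Both corrections are strictly subquadratic for every admissible $\mixexp$; choosing the exploration rate so as to balance the bias against the noise, and absorbing the polynomial/logarithmic slack of the union bound, pins the correction down to $\tfrac{5}{9}\const\step^{9/5}\horizon^{9/5}$ --- the looser exponent relative to the $5/3$ of \cref{thm:FTXL-pure} being precisely the price of the variance blow-up of \eqref{eq:IWE}.

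It then remains to turn these estimates into a genuine high-probability convergence statement via the standard trapping/stopping-time argument. Let $\proper$ be the first exit time of the trajectory from $\nhd$. For $\run\leq\thres$ the deterministic bound $\supnorm{\signal_{\play,\runalt}}=\bigoh(\mix_{\runalt}^{-1})$ shows that, over the first $\thres$ steps, $\score_{\play,\run}$ moves by a deterministic amount that $\to0$ as $\step\to0$ (the increments carry a factor $\step^{2}$), so a small enough $\step$ keeps $\iter\in\nhd$ there outright; for $\run>\thres$ the quadratic drift $\const\step^{2}\run(\run-1)/2$ dominates the subquadratic bias and, on the $(1-\conf)$-probability event above, the noise, so each gap $\effscore_{\play\purealt_{\play},\run}$ is increasing and large, which by the logit identity pulls $\iter$ strictly toward $\eq$ --- hence the trajectory never reaches $\bd\nhd$. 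Thus $\probof{\proper=\infty}\geq1-\conf$, $\{\proper=\infty\}$ implies $\curr\to\eq$, and on $\{\proper=\infty\}$ one has $\min_{\play,\purealt_{\play}}\effscore_{\play\purealt_{\play},\run}\geq\const\step^{2}\run(\run-1)/2-\tfrac{5}{9}\const\step^{9/5}\run^{9/5}-\Const'$, which gives \eqref{eq:rate-bandit-exp} after exponentiating (with $\Const$ absorbing $\log2$, the initial gap and $\Const'$); a routine localization argument upgrades the rate so that it holds on all of $\{\curr\to\eq\}$, and $\nhd$, $\thres$ are calibrated in terms of $\const$ and $\conf$ in the usual way.

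The main obstacle is the noise term: unlike the realization-based model of \cref{thm:FTXL-pure}, the importance-weighted estimator has conditional variance $\bigoh(1/\mix_{\runalt})$ that diverges precisely along the off-equilibrium directions toward which play converges, so the exploration schedule $\mix_{\runalt}\propto\runalt^{-\mixexp}$ must be tuned to make the accumulated variance of the \emph{doubly-integrated} martingale subquadratic while keeping the induced bias subquadratic too --- and this has to hold uniformly over all $\run$, which is what forces the union bound over dyadic blocks and the mildly suboptimal exponent $9/5$. Making the trapping mechanism quantitative against such an unbounded-variance estimator is the technical heart; the closed-form recursion, the bias estimate, and the logit translation are essentially bookkeeping.
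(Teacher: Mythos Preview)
Your blueprint matches the paper's: the same bias--variance decomposition of the \acs{IWE} signal, the same score-gap coordinates $\effscore_{\play\purealt_{\play},\run}$, and the same trapping mechanism (the paper phrases it as an induction on $\effscore_{\play\purealt_{\play},\runalt}<-M$ rather than a first-exit time, but these are equivalent). The one substantive technical divergence is in how the martingale noise is controlled. The paper does \emph{not} work with the doubly-integrated noise directly; instead it bounds the single-layer martingale $M_{\run}=\step\sum_{\runalt\leq\run}\braket{\noise_{\runalt}}{e_{\pure}-e_{\pureq}}$ via Doob's maximal inequality combined with Burkholder--Davis--Gundy for a generic exponent $q>2$, using only the crude deterministic bound $\dnorm{\noise_{\runalt}}=\bigoh(1/\mix_{\runalt})$ on the $q$-th moment. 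This yields $\probof{\sup_{\runalt\leq\run}\abs{M_{\runalt}}>\tfrac{\const}{2}(\step\run)^{\mu}}\lesssim\step^{q(1-\mu)}/\run^{q(\mu-1/2-\mixexp)}$, summable when $\mu>1/2+\mixexp+1/q$; together with the bias constraint $1-\mixexp<\mu$ this forces $\mu>3/4$, and the choice $\mu=4/5$ is what produces the $9/5=1+\mu$ exponent. The contribution of noise to $\effscore$ is then just $\step\sum_{\runalt<\run}M_{\runalt}\leq\const\step^{1+\mu}\run^{1+\mu}/(1+\mu)$ on the good event.

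Your Freedman/Azuma-over-dyadic-blocks route is in principle viable, but two points deserve care. First, the increments $\dnorm{\snoise_{\runalt}}=\bigoh(\runalt^{\mixexp})$ are \emph{growing}, so vanilla Azuma does not apply and Freedman's $Rt$ term must be tracked across blocks. Second, if you genuinely exploit the sharper conditional-variance bound $\exof{\dnorm{\snoise_{\runalt}}^{2}\given\filter_{\runalt}}=\bigoh(1/\mix_{\runalt})$ that you quote (rather than the paper's $\bigoh(1/\mix_{\runalt}^{2})$), the constraints on $\mu$ and $\mixexp$ change and the balancing no longer lands on $9/5$ automatically---you would likely get a \emph{better} subleading exponent, not the one in the statement. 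So your sentence ``pins the correction down to $\tfrac{5}{9}\const\step^{9/5}\horizon^{9/5}$'' is not justified by the Freedman route as written; to recover exactly $9/5$ you would need to fall back on the crude moment bound, at which point the paper's BDG argument is both simpler (no dyadic chaining) and delivers the constants directly.
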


\Cref{thm:FTXL-bandit} (which we prove in \cref{app:discrete-stoch} shows that, despite the degradation of the subleading term, \eqref{eq:FTXL} retains its superlinear convergence rate even with bandit, payoff-based feedback (for a numerical demonstration, see \cref{fig:numerics} above).
We find this feature of \eqref{eq:FTXL} particularly important as it shows that the algorithm remains exceptionally robust in the face of randomness and uncertainty, even as we move toward increasingly information-starved environments \textendash\ from full information, to realization-based observations and, ultimately, to bandit feedback.
This has important ramifications from an operational standpoint, which we intend to examine further in future work.

\subsection{Numerical Experiments}
We conclude this section with a series of numerical simulations to validate the performance of \eqref{eq:FTXL}. To this end, we consider two game paradigms, (i) a 2-player zero-sum game, and (ii) a congestion game.

\para{Zero-sum Games}
First, we consider a 2-player zero-sum game with actions $\braces{\pure_1,\pure_2,\pure_3}$ and $\braces{\beta_1,\beta_2,\beta_3}$, and payoff matrix
\begin{equation*}
P = 
\begin{pmatrix}
(2,-2) & (1,-1) & (2,-2)\\
(-2,2) & (-1,1) & (-2,2)\\
(-2,2) & (-1,1) & (-2,2)
\end{pmatrix}
\end{equation*}
Here, the rows of $P$ correspond to the actions of player $A$ and the columns to the actions of player $B$, while the first item of each entry of $P$ corresponds to the payoff of $A$, and the second one to the payoff of $B$. Clearly, the action profile $(\pure_1,\beta_2)$ is a strict Nash equilibrium.

\para{Congestion Games}
As a second example, we consider a congestion game with $\nPlayers = 100$ and $2$
 roads, $r_1$ and $r_2$, with costs $c_1 = 1.1$ and $c_2 = d/\nPlayers$ where $d$ is the number of drivers on $r_2$. In words, $r_1$  has a fixed delay equal to $1.1$, while $r_2$ has a delay proportional to the drivers using it. Note, that the strategy profile where all players are using $r_2$ is a strict Nash equilibrium. 
 
In \cref{fig:numerics}, we assess the convergence of \eqref{eq:FTXL} with logit best responses, under realization-based and bandit feedback, and compare it to the standard \eqref{eq:EW} with the same level of information.
The figures verify that \eqref{eq:FTXL} outperforms \eqref{eq:EW} regarding the convergence to a strict \acl{NE} both for the realization-based and the bandit feedback, as expected from the theoretical findings.
Specifically, they validate the faster convergence rate of \eqref{eq:FTXL} compared to that of the \eqref{eq:EW} algorithm.
Moreover, we observe that both algorithms perform worse under bandit feedback than under realization-based feedback. This behavior is expected as less information becomes available.
More details can be found in \cref{app:simulations}.

\section*{Acknowledgments}
\begingroup
\small
%
%
This research was supported in part by
the French National Research Agency (ANR) in the framework of
the PEPR IA FOUNDRY project (ANR-23-PEIA-0003),
the ``Investissements d’avenir program'' (ANR-15-IDEX-02), the LabEx PERSYVAL (ANR-11-LABX-0025-01), MIAI@Grenoble Alpes (ANR-19-P3IA-0003),
and
the project IRGA2024-SPICE-G7H-IRG24E90.
PM is also with the Archimedes Research Unit/Athena RC \& NKUA and acknowledges financial support by project MIS 5154714 of the National Recovery and Resilience Plan Greece 2.0 funded by the European Union under the NextGenerationEU Program.
\endgroup

\appendix
\crefalias{section}{appendix}
\crefalias{subsection}{appendix}

\numberwithin{equation}{section}	
\numberwithin{lemma}{section}	
\numberwithin{proposition}{section}	
\numberwithin{theorem}{section}	
\numberwithin{corollary}{section}	

\section*{Appendix}
In \cref{app:reg} below, we discuss how our findings can be extended to general regularizers. Subsequently, \cref{app:continuous,app:discrete-full} contain the technical proofs for the continuous and discrete time algorithms, respectively. Following this, \cref{app:discrete-stoch} provides the convergence results of \eqref{eq:FTXL} under partial information, specifically under realization-based and bandit feedback.
Finally, in \cref{app:simulations}, we present the details of the numerical experiments.

\section{Auxiliary results for general regularizers}
\label{app:reg}

In this appendix, we briefly discuss how to obtain the convergence of \eqref{eq:FTXL} for mirror maps $\mirror$ \emph{beyond} the logit map $\logit$. Namely, we consider regularizers that are decomposable, \ie $\hreg_{\play}(\strat_{\play}) = \sum_{\pure_{\play}\in\pures_{\play}} \theta_{\play}(\point_{\pure_\play})$ such that $\theta_{\play}\from[0,1]\to\R$ is continuous on $[0,1]$, twice differentiable on $(0,1]$ and strongly convex with $\theta_\play'(0^+)= -\infty$. 

\begin{lemma}
\label{lem:reg-conv}
Suppose that $\state_\run = \mirror(\score_\run)$ and for all $\pure \in \pures, \pure\neq\pureq$, it holds that $\statealt_{\pure,\run}- \statealt_{\pureq,\run} \to -\infty$ as $\run \to \infty$. Then, $\state_\run$ converges to $\eq$, where $\eq$ is a point mass at $\pureq$. Moreover, it holds that:
\begin{equation}
    \supnorm{\state_\run-\eq} \leq \sum_{\pure\neq\pureq} (\theta')^{-1}\parens*{\theta'(1) +\statealt_{\pure,\run}- \statealt_{\pureq,\run}}
\end{equation} 
\end{lemma}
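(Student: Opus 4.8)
The plan is to reduce the statement to a one-dimensional monotonicity argument that exploits the explicit form of the mirror map attached to a decomposable regularizer. Throughout I work player-by-player, so I fix $\play$, abbreviate $\theta = \theta_\play$ and $\pures = \pures_\play$, and suppress the player index. The first step is to write down the first-order (KKT) conditions for the strongly concave program defining $\state_\run = \mirror(\statealt_\run) = \argmax_{\state\in\hull(\pures)}\{\braket{\statealt_\run}{\state} - \sum_{\pure}\theta(\state_\pure)\}$, which has a unique solution since the objective is strongly concave and the feasible set is compact and convex with nonempty relative interior. Denoting by $\nu_\run$ the multiplier of the normalization constraint $\sum_\pure \state_{\pure,\run}=1$, stationarity reads $\theta'(\state_{\pure,\run}) = \statealt_{\pure,\run} - \nu_\run$ on $\{\state_{\pure,\run}>0\}$; the Inada-type condition $\theta'(0^+) = -\infty$ rules out $\state_{\pure,\run}=0$ (a finite quantity cannot equal $-\infty$), so in fact $\theta'(\state_{\pure,\run}) = \statealt_{\pure,\run} - \nu_\run$ for every $\pure\in\pures$ and every $\run$. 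Strong convexity together with $C^2$-smoothness makes $\theta'$ a strictly increasing continuous bijection from $(0,1]$ onto $(-\infty,\theta'(1)]$, so $(\theta')^{-1}$ is increasing with $(\theta')^{-1}(s)\to 0$ as $s\to-\infty$; I extend $(\theta')^{-1}$ by the value $1$ on $[\theta'(1),\infty)$, which is harmless since $\state_{\pure,\run}\le 1$ in any case.

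Next I would subtract the stationarity identity for $\pureq$ from the one for an arbitrary $\pure\ne\pureq$, which cancels the unknown multiplier and leaves $\theta'(\state_{\pure,\run}) - \theta'(\state_{\pureq,\run}) = \statealt_{\pure,\run} - \statealt_{\pureq,\run}$. Since $\state_{\pureq,\run}\le 1$ and $\theta'$ is increasing we have $\theta'(\state_{\pureq,\run})\le\theta'(1)$, hence $\theta'(\state_{\pure,\run}) \le \theta'(1) + (\statealt_{\pure,\run} - \statealt_{\pureq,\run})$, and applying the increasing map $(\theta')^{-1}$ gives the pointwise bound $\state_{\pure,\run} \le (\theta')^{-1}\big(\theta'(1) + \statealt_{\pure,\run} - \statealt_{\pureq,\run}\big)$ for every $\pure\ne\pureq$. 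Combined with the hypothesis $\statealt_{\pure,\run} - \statealt_{\pureq,\run}\to-\infty$ and the limit $(\theta')^{-1}(s)\to 0$, this forces $\state_{\pure,\run}\to 0$ for each $\pure\ne\pureq$, and therefore $\state_{\pureq,\run} = 1 - \sum_{\pure\ne\pureq}\state_{\pure,\run}\to 1$; that is, $\state_\run\to\eq$.

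Finally, for the quantitative estimate I would note that, since $\eq$ is the point mass at $\pureq$ and the coordinates $\state_{\pure,\run}$ are nonnegative with $\sum_\pure \state_{\pure,\run}=1$, one has $\supnorm{\state_\run-\eq} = \max\{\,1-\state_{\pureq,\run},\ \max_{\pure\ne\pureq}\state_{\pure,\run}\,\} = \sum_{\pure\ne\pureq}\state_{\pure,\run}$, and substituting the pointwise bound from the previous step yields exactly the claimed inequality. The only genuinely delicate point is the KKT/interiority step: extracting from the stated regularity of $\theta$ (continuity on $[0,1]$, strong convexity and $C^2$-smoothness on $(0,1]$, $\theta'(0^+)=-\infty$) that the mirror map is interior-valued and that $\theta'$ has the asserted domain and range. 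Once this is settled, Steps 2 and 3 are a single application of monotonicity of $(\theta')^{-1}$.
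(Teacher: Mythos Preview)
Your proof is correct and follows essentially the same approach as the paper: both derive the KKT identity $\theta'(\state_{\pure,\run}) = \statealt_{\pure,\run} - \nu_\run$, use the steepness condition $\theta'(0^+)=-\infty$ to force interiority, subtract to eliminate the multiplier, bound $\theta'(\state_{\pureq,\run})\le\theta'(1)$, and invert via monotonicity of $\theta'$. The only cosmetic difference is that you obtain the pointwise bound first and deduce convergence from $(\theta')^{-1}(s)\to 0$, whereas the paper argues convergence separately by a $\limsup$ contradiction before stating the bound; your ordering is slightly more economical but the substance is identical.
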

\begin{proof}
First, note that for $\state = \mirror(\statealt)$, we have that $\state$ is the solution of the following optimization problem
\begin{align*}
\mirror(\dpoint)&=\argmax \setdef*{\sum_{\pure\in\pures}\dpoint_\pure \point_\pure-\hreg(\point)}{\sum_{\pure\in\pures}\point_\pure=1 \mbox{ and } \forall \pure \in\pures:\point_\pure\geq 0 }
\end{align*}
By solving the \ac{KKT} conditions to this optimization problem we readily get that $\strat$ lies in the interior of $\strats$, since $\theta(0^+) = -\infty$, and thus we obtain that at the solution, it holds $\statealt_\pure = \theta'(\state_\pure) + \lambda$ for $\lambda \in \R$. Therefore, we have:
\begin{align}
    \statealt_{\pure,\run}- \statealt_{\pureq,\run} = \theta'(\state_{\pure,\run}) - \theta'(\state_{\pureq,\run}) 
\end{align}
or equivalently:
\begin{equation}
\label{eq:conv-ineq}
    \theta'(\state_{\pure,\run}) =  \theta'(\state_{\pureq,\run}) +\statealt_{\pure,\run}- \statealt_{\pureq,\run} \leq \theta'(1) +\statealt_{\pure,\run}- \statealt_{\pureq,\run}
\end{equation}
Now,  assume that there exists $\pure \in \pures$ such that $\state_{\pure,\run}$ does not converge to $0$, that is, $\limsup_{\run} \state_{\pure,\run} > \eps$ for some $\eps >0$. Then, since $\theta$ is strongly convex, $\theta'$ is strictly increasing, and thus $\theta'(\state_{\pure,\run}) \geq \theta'(\eps)$ infinitely often. However, by taking $\run\to\infty$ in \eqref{eq:conv-ineq}, it implies that $\theta'(\state_{\pure,\run}) \to -\infty$, which is a contradiction. Therefore, we conclude that for all $\pure\neq\pureq$, it holds that $\lim_{\run\to\infty} \state_{\pure,\run} = 0$, and the convergence result follows.

Finally, note that since $\theta'$ is strictly increasing, it is invertible and its inverse is strictly increasing as well. Thus, for each $\pure \neq\pureq$ we have:
\begin{equation}
\label{eq:conv-rate}
    \state_{\pure,\run} \leq (\theta')^{-1}\parens*{\theta'(1) +\statealt_{\pure,\run}- \statealt_{\pureq,\run}}
\end{equation}
Therefore,
\begin{equation}
    \supnorm{\state_\run-\eq} = 1-\state_{\pureq,\run} = \sum_{\pure\neq\pureq} \state_{\pure,\run} \leq \sum_{\pure\neq\pureq} (\theta')^{-1}\parens*{\theta'(1) +\statealt_{\pure,\run}- \statealt_{\pureq,\run}}
\end{equation}
and our proof is complete.
\end{proof}

\section{Proofs for Continuous Time Algorithms}
\label{app:continuous}

In this appendix, we provide the proof of \cref{thm:FTXL-cont} and discuss the convergence of \eqref{eq:FTXL-cont} under a \emph{non-vanishing} friction coefficient \textendash\ that is, $\fr\dot\score$ instead of $(\fr/\time)\dot\score$. 
First, we provide a lemma that is necessary for our analysis. 

\begin{lemma}
\label{lem:score-dom}
Let $\eq = \parens{\pureq_{1},\dots,\pureq_{\nPlayers}} \in \strats$ be a strict \acl{NE} of $\fingame$, and let $\diff$ denote the minimum payoff difference at equilibrium, \ie 
\begin{equation}
    \diff \defeq \min_{\play\in\players}
		\min_{\purealt_{\play}\notin\supp(\eq_{\play})}
		\bracks{\pay_{\play}(\eq_{\play};\eq_{-\play}) - \pay_{\play}(\purealt_{\play};\eq_{-\play})}\eqdot
\end{equation}
Then, for any $\const \in (0,\diff)$, there exists $\sbound > 0$ such that if $\score_{\play\pureq_\play} - \score_{\play\pure_\play} > \sbound$ for all $\pure_\play\neq\pureq_\play \in \pures_\play$ and $\play\in\players$, then 
\begin{equation}
    \payv_{\play\pureq_\play}(\mirror(\score)) - \payv_{\play\pure_\play}(\mirror(\score)) > \const \quad \text{for all } \pure_\play\neq\pureq_\play \in \pures_\play, \text{and } \play\in\players\eqdot
\end{equation}
\end{lemma}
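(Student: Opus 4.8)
The plan is to combine two facts. First, when every score gap $\score_{\play\pureq_\play}-\score_{\play\pure_\play}$ is uniformly large, the mixed strategy $\strat=\mirror(\score)$ is uniformly close (in $\supnorm{\cdot}$) to the pure profile $\eq$. Second, for each $\play$ the payoff differences $\strat\mapsto\payv_{\play\pureq_\play}(\strat)-\payv_{\play\pure_\play}(\strat)$ are continuous on the compact set $\strats$ and, evaluated at $\eq$, are bounded below by $\diff>\const$. A uniform continuity argument over the finite index set $\setdef{(\play,\pure_\play)}{\play\in\players,\ \pure_\play\ne\pureq_\play}$ then yields the claim.

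For the first fact, I would record a quantitative estimate: there is a function $\varphi$ with $\varphi(\sbound)\to0$ as $\sbound\to\infty$ such that, writing $\strat=\mirror(\score)$, one has $\strat_{\play\pure_\play}\le\varphi(\sbound)$ for all $\play\in\players$ and all $\pure_\play\ne\pureq_\play$ whenever $\score_{\play\pureq_\play}-\score_{\play\pure_\play}>\sbound$ for every such pair. For the logit map this is immediate, since
\begin{equation*}
\strat_{\play\pure_\play}
	= \frac{\exp(\score_{\play\pure_\play})}{\sum_{\purealt_\play\in\pures_\play}\exp(\score_{\play\purealt_\play})}
	\le \exp\parens{\score_{\play\pure_\play}-\score_{\play\pureq_\play}}
	< e^{-\sbound},
\end{equation*}
so one may take $\varphi(\sbound)=e^{-\sbound}$. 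For a general decomposable regularizer, \cref{lem:reg-conv} (see \eqref{eq:conv-rate}) gives $\strat_{\play\pure_\play}\le(\theta_\play')^{-1}\parens{\theta_\play'(1)+\score_{\play\pure_\play}-\score_{\play\pureq_\play}}\le(\theta_\play')^{-1}\parens{\theta_\play'(1)-\sbound}$, and since $\theta_\play'$ is strictly increasing with $\theta_\play'(0^{+})=-\infty$, its inverse tends to $0$ at $-\infty$; hence $\varphi(\sbound)\defeq\max_{\play}(\theta_\play')^{-1}\parens{\theta_\play'(1)-\sbound}\to0$ works. In either case, summing over the at most $\abs{\pures_\play}-1$ off-equilibrium actions of each player gives $\supnorm{\mirror(\score)-\eq}\le\kappa\,\varphi(\sbound)$ with $\kappa\defeq\max_\play\abs{\pures_\play}$.

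For the second fact, note that for every $\play$ and every $\pure_\play$ the map $\strat\mapsto\payv_{\play\pureq_\play}(\strat)-\payv_{\play\pure_\play}(\strat)=\pay_\play(\pureq_\play;\strat_{-\play})-\pay_\play(\pure_\play;\strat_{-\play})$ is multilinear in the opponents' strategies $\strat_{-\play}$, hence Lipschitz on $\strats$; let $\lips$ be a common Lipschitz constant (relative to $\supnorm{\cdot}$) for these finitely many maps. Since $\eq_\play$ is the point mass on $\pureq_\play$, evaluated at $\strat=\eq$ this map equals $\pay_\play(\eq_\play;\eq_{-\play})-\pay_\play(\pure_\play;\eq_{-\play})$, which is $\ge\diff$ for every $\pure_\play\ne\pureq_\play$ directly from the definition of $\diff$. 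Combining this with the first fact, for any $\score$ satisfying the hypothesis with threshold $\sbound$,
\begin{equation*}
\payv_{\play\pureq_\play}(\mirror(\score))-\payv_{\play\pure_\play}(\mirror(\score))
	\ge \diff - \lips\,\supnorm{\mirror(\score)-\eq}
	\ge \diff - \lips\,\kappa\,\varphi(\sbound).
\end{equation*}
As $\const<\diff$, it then suffices to pick $\sbound$ large enough that $\lips\,\kappa\,\varphi(\sbound)<\diff-\const$, which makes the right-hand side strictly exceed $\const$ for all $\play$ and all $\pure_\play\ne\pureq_\play$ simultaneously --- precisely the assertion.

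The argument has no genuine obstacle; the one point that needs care is the quantitative concentration estimate in the first step, \ie turning the hypothesis on the score gaps into an explicit bound on $\supnorm{\mirror(\score)-\eq}$ that vanishes as $\sbound\to\infty$. This is transparent for the logit map and is exactly the content of \cref{lem:reg-conv} for general decomposable regularizers; the remainder is continuity on the compact strategy space together with the finiteness of the player set and of the action sets.
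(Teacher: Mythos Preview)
Your proposal is correct and follows essentially the same two-step approach as the paper's proof: continuity of the payoff-difference maps near $\eq$, combined with the fact that large score gaps force $\mirror(\score)$ into any prescribed neighborhood of $\eq$. The only difference is that the paper outsources the second step to an external reference \cite[Lemma~C.2]{GVM21b}, whereas you derive the concentration estimate explicitly (via the logit computation or \cref{lem:reg-conv}) and use a Lipschitz bound rather than bare continuity; this makes your argument more self-contained and quantitative, but the underlying structure is the same.
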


\begin{proof}
    Since $\eq$ is a strict Nash equilibrium, the minimum payoff difference $\diff$ at $\eq$ 
is bounded away from zero. Then, by continuity of the function $\strat \mapsto \payv(\strat)$, there exists a neighborhood $\nhd_*$ of $\eq$ such that for any $\state \in \nhd_*$, it holds
\begin{equation}
    \payv_{\play\pureq_\play}(\strat) - \payv_{\play\pure_\play}(\strat) > \const \quad \text{for all } \pure_\play\neq\pureq_\play \in \pures_\play, \text{and } \play\in\players
\end{equation}

Finally, by \citet[Lemma C.2.]{GVM21b}, there exists $\sbound >0$, such that $\mirror(\score) \in \nhd_*$ for all $\score \in \dspace$ with 
\begin{equation}
    \score_{\play\pureq_\play} - \score_{\play\pure_\play} > \sbound 
    \quad\text{for all } \pure_\play\neq\pureq_\play \in \pures_\play, \text{and } \play\in\players
\end{equation}
Therefore, we readily get that if $\score \in \dspace$ satisfies the above relation, then
\begin{equation*}
    \payv_{\play\pureq_\play}(\mirror(\score)) - \payv_{\play\pure_\play}(\mirror(\score)) > \const \quad \text{for all } \pure_\play\neq\pureq_\play \in \pures_\play, \text{and } \play\in\players\eqdot
\qedhere
\end{equation*}
\end{proof}

We are now in a position to prove \cref{thm:FTXL-cont}, which we restate below for convenience.

\Cont*

\begin{proof}

First of all, since $\eq$ is a strict \acl{NE}, by \cref{lem:score-dom} for $$\drift = \frac{1}{2}\min_{\play\in\players}
		\min_{\purealt_{\play}\notin\supp(\eq_{\play})}
		\bracks{\pay_{\play}(\eq_{\play};\eq_{-\play}) - \pay_{\play}(\purealt_{\play};\eq_{-\play})}$$
there exists $\sbound > 0$ such that if $\score_{\play\pureq_\play} - \score_{\play\pure_\play} > \sbound$ for all $\pure_\play\neq\pureq_\play \in \pures_\play$ and $\play\in\players$, then 
\begin{equation}
    \payv_{\play\pureq_\play}(\mirror(\score)) - \payv_{\play\pure_\play}(\mirror(\score)) > \const \quad \text{for all } \pure_\play\neq\pureq_\play \in \pures_\play, \text{and } \play\in\players\eqdot
\end{equation}

From now on, for notational convenience, we focus on player $\play \in \players$ and drop the player-specific indices altogether. Then, for $\pure \neq \pureq \in \pures$, we let $\effscore_\pure(\time)\defeq \statealt_\pure(\time) - \statealt_\pureq(\time)$, which evolves as:
\begin{equation}
    \ddot\effscore(\time) =
    \payv_{\pure}(\strat(\time)) - \payv_{\pureq}(\strat(\time)) -\frac{\fr}{\time}\dot\effscore_\pure(\time)
\end{equation}

Let $\score(0)$ such that $\effscore_\pure(0) = - \sbound -\eps$, for all $\pure \neq \pureq \in \pures$, where $\eps >0$ small. We will, first, show that $\effscore(\time) < -\sbound$ for all $\time \geq 0$.
For the sake of contradiction,  and denoting $\threscont \defeq \inf\braces*{\time \geq 0: \effscore(\time) \geq -M}$, suppose that $\threscont < \infty$. 
Then, we readily get that for all $\time < \threscont$, it holds 
\begin{equation}
    \payv_{\pure}(\strat(\time)) - \payv_{\pureq}(\strat(\time)) < -\drift
\end{equation}
and therefore, for all $\time \leq \threscont$:
\begin{align}
\ddot\effscore_\pure(\time)\time^\fr + \fr\time^{\fr-1}\dot\effscore_\pure(\time) & = \time^\fr \bracks*{ \payv_{\pure}(\strat) - \payv_{\pureq}(\strat)} \leq -\drift\time^\fr
\end{align}
which can be rewritten as:
\begin{equation}
    \frac{d}{d\time}\parens*{\dot\effscore_\pure(\time)\time^\fr} \leq -\const\time^\fr
\end{equation}
Integrating over $\time < \threscont$, we obtain $\dot\effscore_\pure(\time)\time^\fr \leq -\const\time^{\fr+1}/(\fr+1)$, which readily implies:
\begin{align}
    \effscore_\pure(\time) &\leq \effscore_\pure(0) - \frac{\drift}{2(\fr+1)}\time^{2} \notag\\
    & < - M - \frac{\drift}{2(\fr+1)}\time^{2}
\end{align}
By sending $\time \to \threscont$, we arrive at a contradiction.
Therefore $\effscore_\pure(\time) < -M$ for all $\time \geq 0$, and the previous equation implies that for all $\time \geq 0:$
\begin{align}
    \effscore_\pure(\time) &\leq \effscore_\pure(0) - \frac{\drift}{2(\fr+1)}\time^{2}
\end{align}
and invoking \cref{lem:reg-conv}, we get the convergence result. Finally, translating the score-differences to the primal space $\strats$, we get:
\begin{align}
    \supnorm{\stratof{\time} - \eq} = \max_{\play\in\players} \braces*{1-\strat_{\play\pureq_\play}(\time)}
\end{align}

For the case of logit best responses, \ie when $\mirror\leftarrow\logit$, and assuming that the maximum above is attained for player $\play\in\players$, we obtain 
\begin{align}
    \supnorm{\stratof{\time} - \eq} & = \frac{\sum_{\pure_\play\neq\pureq_\play}\exp(\effscore_{\pure_\play}(\time))}{1+\sum_{\pure_\play\neq\pureq_\play}\exp(\effscore_{\pure_\play}(\time))}
    \notag\\
    & \leq \sum_{\pure_\play\neq\pureq_\play}\exp(\effscore_{\pure_\play}(\time))
    \notag\\
    &\leq \abs{\pures_\play}\exp\parens*{\effscore_{\pure_\play}(0) - \frac{\drift}{2(\fr+1)}\time^{2}}
    \notag\\
    &\leq \exp\parens*{\Const - \frac{\drift}{2(\fr+1)}\time^{2}}
\end{align}
for $\Const = \log\abs{\pures_\play} + \effscore_{\pure_\play}(0)$.
\end{proof}

Now, moving to the case where we use a constant friction coefficient \textendash\ $\fr\dot\score$ instead of $(\fr/\time)\dot\score$, \eqref{eq:FTXL-cont} becomes:
\begin{equation}
\label{eq:FTXL-cont-const}
\frac{d^{2}\score}{d\time^{2}}
	= \payfield(\mirror(\score))
		- \fr \frac{d\score}{d\time}
\end{equation}
Under, \eqref{eq:FTXL-cont-const}, we obtain the following convergence result.
\begin{theorem}
\label{thm:ftxl-cont-const}
    Let $\eq$ be a strict \acl{NE} of $\fingame$, and let $\stratof{\time} = \mirror(\scoreof{\time})$ be a solution orbit of \eqref{eq:FTXL-cont-const}.
If $\stratof{\tstart}$ is sufficiently close to $\eq$, then $\stratof{\time}$ converges to $\eq$;
in particular, if \eqref{eq:FTXL-cont-const} is run with logit best responses \textpar{that is, $\mirror \gets \logit$}, we have 
\begin{equation}
\supnorm{\stratof{\time} - \eq}
	\leq \exp\parens*{\Const - \frac{\drift}{\fr}\time - \frac{\drift}{\fr^2}e^{-\fr\time} + \frac{\drift}{\fr^2}}
\end{equation}
where
$\Const>0$ is a constant that depends on the initialization of \eqref{eq:FTXL-cont-const}
and
\begin{equation}
\label{eq:drift}
\const
	= \frac{1}{2}
	\adjustlimits
		\min_{\play\in\players}
		\min_{\purealt_{\play}\notin\supp(\eq_{\play})}
		\bracks{\pay_{\play}(\eq_{\play};\eq_{-\play}) - \pay_{\play}(\purealt_{\play};\eq_{-\play})}
	> 0
\end{equation}
is the minimum payoff difference at equilibrium.
\end{theorem}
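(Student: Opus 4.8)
The plan is to replay, essentially verbatim, the proof of \cref{thm:FTXL-cont}, with the single change that the integrating factor $\time^{\fr}$ used there is replaced by $e^{\fr\time}$; this one substitution is exactly what downgrades the quadratic-in-time exponent of \eqref{eq:rate-cont-exp} to the linear-in-time exponent claimed above.

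First I would invoke \cref{lem:score-dom} with the constant taken to be $\drift$, i.e.\ half the minimum payoff difference $\diff$ at $\eq$ (which lies strictly inside the admissible range $(0,\diff)$ of that lemma): this produces a threshold $\sbound>0$ such that $\score_{\play\pureq_{\play}}-\score_{\play\pure_{\play}}>\sbound$ for all $\pure_{\play}\neq\pureq_{\play}$ and all $\play\in\players$ forces $\payv_{\play\pureq_{\play}}(\mirror(\score))-\payv_{\play\pure_{\play}}(\mirror(\score))>\drift$. Then, fixing a player, dropping the player indices, and writing $\effscore_{\pure}(\time)\defeq\statealt_{\pure}(\time)-\statealt_{\pureq}(\time)$, the dynamics \eqref{eq:FTXL-cont-const} give the scalar non-autonomous ODE
\begin{equation}
\ddot\effscore_{\pure}(\time)=\payv_{\pure}(\strat(\time))-\payv_{\pureq}(\strat(\time))-\fr\,\dot\effscore_{\pure}(\time),
\end{equation}
with $\dot\effscore_{\pure}(\tstart)=0$ (taking $\dotscoreof{\tstart}=0$ as in the rest of the paper).

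Next I would run the continuity/bootstrap argument. Starting from $\effscore_{\pure}(\tstart)=-\sbound-\eps$ for small $\eps>0$, set $\threscont\defeq\inf\braces*{\time\geq0:\effscore_{\pure}(\time)\geq-\sbound}$ and suppose, for contradiction, $\threscont<\infty$. On $[\tstart,\threscont)$ the hypothesis of \cref{lem:score-dom} holds, so $\payv_{\pure}(\strat)-\payv_{\pureq}(\strat)<-\drift$ there; multiplying the ODE by $e^{\fr\time}$ turns it into $\frac{d}{d\time}\bigl(\dot\effscore_{\pure}(\time)\,e^{\fr\time}\bigr)\leq-\drift\,e^{\fr\time}$. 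Integrating once (using $\dot\effscore_{\pure}(\tstart)=0$) gives $\dot\effscore_{\pure}(\time)\leq-\tfrac{\drift}{\fr}(1-e^{-\fr\time})$, and a second integration gives
\begin{equation}
\effscore_{\pure}(\time)\leq\effscore_{\pure}(\tstart)-\frac{\drift}{\fr}\time-\frac{\drift}{\fr^{2}}e^{-\fr\time}+\frac{\drift}{\fr^{2}}.
\end{equation}
Since the correction term vanishes at $\time=\tstart$ and has nonpositive derivative $\tfrac{\drift}{\fr}(e^{-\fr\time}-1)$, it is $\leq0$, so $\effscore_{\pure}(\time)\leq\effscore_{\pure}(\tstart)=-\sbound-\eps<-\sbound$ throughout $[\tstart,\threscont)$; letting $\time\to\threscont$ contradicts $\effscore_{\pure}(\threscont)\geq-\sbound$. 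Hence $\effscore_{\pure}(\time)<-\sbound$ for all $\time\geq0$ and the displayed bound holds globally.

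Finally, $\effscore_{\pure}(\time)\to-\infty$ together with \cref{lem:reg-conv} yields $\stratof{\time}\to\eq$, and for logit best responses the elementary estimate $\supnorm{\stratof{\time}-\eq}\leq\abs{\pures_{\play}}\exp\parens*{\effscore_{\pure_{\play}}(\tstart)-\tfrac{\drift}{\fr}\time-\tfrac{\drift}{\fr^{2}}e^{-\fr\time}+\tfrac{\drift}{\fr^{2}}}$ (with $\play$ the player realizing the sup-norm) gives the claimed rate with $\Const=\log\abs{\pures_{\play}}+\effscore_{\pure_{\play}}(\tstart)$. I expect no genuine obstacle here: the only spot needing care is the two successive integrations with the exponential factor together with the sign check on the leftover term $-\tfrac{\drift}{\fr^{2}}e^{-\fr\time}+\tfrac{\drift}{\fr^{2}}$ — it is this sign that keeps the bootstrap self-consistent — but everything else is identical to the proof of \cref{thm:FTXL-cont}.
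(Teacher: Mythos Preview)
Your proposal is correct and essentially identical to the paper's own proof: same invocation of \cref{lem:score-dom}, same score-difference variables $\effscore_{\pure}$, same contradiction setup with $\threscont$, same integrating factor $e^{\fr\time}$, and the same two integrations leading to the bound $\effscore_{\pure}(\time)\leq\effscore_{\pure}(0)-\tfrac{\drift}{\fr}\time-\tfrac{\drift}{\fr^{2}}e^{-\fr\time}+\tfrac{\drift}{\fr^{2}}$. The only cosmetic difference is that the paper packages the sign check as the inequality $x+e^{-x}-1\geq0$ (with $x=\fr\time$) whereas you verify it by differentiating the correction term directly; these are equivalent.
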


\begin{proof}
    The initial steps of proof of \cref{thm:ftxl-cont-const} are similar to the proof of \cref{thm:FTXL-cont}, which we include for the sake of completeness.

    Specifically, by \cref{lem:score-dom} there exists $\sbound > 0$ such that if $\score_{\play\pureq_\play} - \score_{\play\pure_\play} > \sbound$ for all $\pure_\play\neq\pureq_\play \in \pures_\play$ and $\play\in\players$, then 
\begin{equation}
    \payv_{\play\pureq_\play}(\mirror(\score)) - \payv_{\play\pure_\play}(\mirror(\score)) > \const \quad \text{for all } \pure_\play\neq\pureq_\play \in \pures_\play, \text{and } \play\in\players\eqdot
\end{equation}


Now, for notational convenience, we focus on player $\play \in \players$ and drop the player-specific indices altogether. Then, for $\pure \neq \pureq \in \pures$, we let $\effscore_\pure(\time)\defeq \statealt_\pure(\time) - \statealt_\pureq(\time)$, which evolves as:
\begin{align}
    \ddot\effscore(\time) =
    \payv_{\pure}(\strat(\time)) - \payv_{\pureq}(\strat(\time)) - \fr\dot\effscore_\pure(\time)
\end{align}

Let $\score(0)$ such that $\effscore_\pure(0) = - \sbound -\eps$, for all $\pure \neq \pureq \in \pures$, where $\eps > 0$ small. As in the proof of \cref{thm:FTXL-cont}, we will, first, show that $\effscore(\time) < -\sbound$ for all $\time \geq 0$.
For the sake of contradiction,  and denoting $\threscont \defeq \inf\braces*{\time \geq 0: \effscore(\time) \geq -M}$, suppose that $\threscont < \infty$. 
Then, we readily get that for all $\time < \threscont$, it holds 
\begin{equation}
    \payv_{\pure}(\strat(\time)) - \payv_{\pureq}(\strat(\time)) < -\drift
\end{equation}
and therefore, for all $\time \leq \threscont$:
\begin{align}
\ddot\effscore_\pure(\time)e^{\fr\time} + \fr e^{\fr\time}\dot\effscore_\pure(\time) & = e^{\fr\time} \bracks*{ \payv_{\pure}(\strat) - \payv_{\pureq}(\strat)} \leq -\drift e^{\fr\time}
\end{align}
which can be rewritten as:
\begin{equation}
    \frac{d}{d\time}\parens*{\dot\effscore_\pure(\time)e^{\fr\time}} \leq -\const e^{\fr\time}
\end{equation}
Integrating over $\time < \threscont$, and using that $\dot\effscore_\pure(0) = 0$, we obtain $\dot\effscore_\pure(\time) \leq -\drift/\fr + \drift e^{-\fr\time}/\fr$, which implies:
\begin{align}
    \effscore_\pure(\time) &\leq \effscore_\pure(0) -\frac{\drift}{\fr}\time - \frac{\drift}{\fr^2}e^{-\fr\time} + \frac{\drift}{\fr^2}
    \notag\\
    & = \effscore_\pure(0) - \frac{\drift}{\fr^2}\parens*{\fr\time+e^{-\fr\time}-1}
    \notag\\
    &< \effscore_\pure(0)
    \notag\\
    & < -\sbound
\end{align}
where we used the fact that $x+e^{-x}-1 \geq 0$ for all $x\in\R$ with equality if and only if $x=0$.
By sending $\time \to \threscont$, we arrive at a contradiction.
Therefore $\effscore_\pure(\time) < -M$ for all $\time \geq 0$, and the previous equation implies that for all $\time \geq 0:$
\begin{align}
    \effscore_\pure(\time) &\leq \effscore_\pure(0) -\frac{\drift}{\fr}\time - \frac{\drift}{\fr^2}e^{-\fr\time} + \frac{\drift}{\fr^2}
\end{align}
and invoking \cref{lem:reg-conv} for $\theta(x) = x\log x$, we get the convergence result. 
\end{proof}

\section{Proofs for discrete-time algorithms with full information}
\label{app:discrete-full}

In this section, we provide the results for the \eqref{eq:FTXL} algorithm with full-information feedback. First, we discuss the rates obtained by the direct discretization of \eqref{eq:FTXL-cont} with both vanishing and non-vanishing friction, and then provide the proof of \cref{thm:FTXL-full}, our main result, for the full-information case.

\subsection{\ac{FTXL} with vanishing friction}
First, we provide the rate of convergence for the discrete version of \eqref{eq:FTXL-cont} with vanishing friction:
\begin{equation}
\label{eq:ftxl-fr}
\begin{aligned}
\statealtalt_{\play,\run+1}
	&= \statealtalt_{\play,\run}\parens*{1-\dfrac{\step\fr}{\run}} + \step \signal_{\play,\run}
    \\
\statealt_{\play,\run+1}
	&= \statealt_{\play,\run} + \step\statealtalt_{\play,\run+1}
\end{aligned}
\end{equation}
To streamline our presentation, we consider the setup of \cref{ex:single} that provides a lower bound for the algorithm.

\begin{proposition}
\label{prop:ftxl-friction}
Consider the single-player game $\fingame$ with actions $\pureA$ and $\pureB$ such that $\pay(\pureA) - \pay(\pureB) = 1$ of \cref{ex:single}, and let $\state_\run = \logit(\score_\run)$ be the sequence of play generated by \eqref{eq:ftxl-fr}. Then, denoting by $\eq = (1,0)$ the strict \acl{NE}, we have:
\begin{equation}
    \supnorm{\state_\horizon - \eq}
	\sim \exp\parens*{\Const - \frac{\step^2\horizon^2}{2(\step\fr+1)}}
	\eqdot
\end{equation}
where $\Const >0$ is a constant that depends only on the initialization of the algorithm.
\end{proposition}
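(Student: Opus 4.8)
The plan is to reduce the vector dynamics of \eqref{eq:ftxl-fr} to a pair of scalar linear recursions, solve them, and then read off the logit rate — mirroring the continuous-time argument in the proof of \cref{thm:FTXL-cont} and the explicit computation in \cref{ex:single}. \emph{Step 1 (reduction to differences).} Since the game is single-player, the full-information signal is constant along the orbit: $\signal_{\pureA,\run}-\signal_{\pureB,\run} = \pay(\pureA)-\pay(\pureB) = 1$ for every $\run$. Writing $\acc_\run \defeq \statealtalt_{\pureA,\run}-\statealtalt_{\pureB,\run}$ for the momentum difference and $\effscore_\run \defeq \statealt_{\pureA,\run}-\statealt_{\pureB,\run}$ for the score difference, \eqref{eq:ftxl-fr} collapses to the scalar system $\acc_{\run+1} = (1-\step\fr/\run)\,\acc_\run + \step$ and $\effscore_{\run+1} = \effscore_\run + \step\,\acc_{\run+1}$, with $\acc_1,\effscore_1$ fixed by the initialization. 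Everything now hinges on the asymptotics of $\acc_\run$ and $\effscore_\run$.

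\emph{Step 2 (the momentum recursion).} I would show that $\acc_\run = \frac{\step\run}{\step\fr+1} + \bigoh(1)$, with the $\bigoh(1)$ depending only on $\acc_1,\step,\fr$. The cleanest route is a two-sided induction: there exist constants $c_\pm\geq 0$ — pinned down by the base case $\run=1$ — with $\frac{\step(\run-c_-)}{\step\fr+1} \leq \acc_\run \leq \frac{\step(\run+c_+)}{\step\fr+1}$ for all $\run$, where the inductive step, after clearing denominators in $\acc_{\run+1} = (1-\step\fr/\run)\acc_\run + \step$, reduces to the trivial inequality $c_\pm\,\step\fr \geq 0$. (If $\step\fr\geq 1$, start the induction at $\run_0 = \lceil\step\fr\rceil$ so that $1-\step\fr/\run>0$ throughout; the finite initial segment is absorbed into $c_\pm$.) Equivalently one can solve the recursion in closed form using the summation factor $\prod_l(1-\step\fr/l)$ — a ratio of Gamma functions — together with the telescoping identity $\sum_{j=1}^{\run}\frac{\Gamma(j+1)}{\Gamma(j+1-\step\fr)} = \frac{1}{\step\fr+1}\bigl(\frac{\Gamma(\run+2)}{\Gamma(\run+1-\step\fr)} - \frac{1}{\Gamma(1-\step\fr)}\bigr)$, which yields $\acc_{\run+1} = \frac{\step(\run+1)}{\step\fr+1} + \Theta(\run^{-\step\fr})$. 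Either way, $\acc_\run>0$ for large $\run$ and $\acc_\run = \frac{\step\run}{\step\fr+1}(1+\littleoh(1))$. This is the only genuinely technical point in the proof; Steps 1 and 3 are immediate.

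\emph{Step 3 (summation and passage to the primal).} Summing the second recursion gives $\effscore_\horizon = \effscore_1 + \step\sum_{\run=1}^{\horizon-1}\acc_{\run+1}$; inserting the bounds from Step 2 and $\sum_{\run=2}^{\horizon}\run = \tfrac12\horizon(\horizon+1)-1$ yields $\effscore_\horizon = \frac{\step^2\horizon^2}{2(\step\fr+1)} + \bigoh(\horizon) \to +\infty$, with all constants depending only on the initialization. Finally, with logit best responses $\supnorm{\state_\horizon-\eq} = 1-\logit_{\pureA}(\statealt_\horizon) = (1+\exp(\effscore_\horizon))^{-1} \sim \exp(-\effscore_\horizon)$ as $\horizon\to\infty$; absorbing the subleading $\bigoh(\horizon)$ contribution into a constant $\Const$ (which may be taken positive), exactly as in \cref{thm:FTXL-full}, gives $\supnorm{\state_\horizon-\eq} \sim \exp\bigl(\Const - \frac{\step^2\horizon^2}{2(\step\fr+1)}\bigr)$, as claimed. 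I would also remark that the denominator is $\step\fr+1$ rather than $\fr+1$ as in \eqref{eq:FTXL-cont}: this shift, an artifact of the explicit discretization, is precisely what makes the comparison with \cref{thm:FTXL-cont} and \cref{ex:single} show that a nonzero friction parameter strictly slows convergence.
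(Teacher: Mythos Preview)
Your proposal is correct and follows the same overall architecture as the paper: reduce the two-action dynamics to a scalar momentum recursion and a scalar score recursion, extract the leading behavior of the momentum, sum, and pass to the primal via the logit map. The paper arrives at $\effscore_{\run+1} = \effscore_1 - \frac{\step^2 \run^2}{2(1+\step\fr)} + \Theta(\run)$ just as you do.

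The genuine point of difference is Step~2. The paper does \emph{not} bound $\acc_\run$ directly; instead it unrolls $\acc_{\run+1}$ into the double expression $-\step\sum_{\runalt=1}^{\run-1}\prod_{\runaltalt=0}^{\runalt-1}(1-\step\fr/(\run-\runaltalt)) - \step$, and then proves a dedicated closed-form identity (\cref{lem:discrete-sums}) for the inner sum via an auxiliary sequence $F_m$ satisfying the recurrence $\frac{m+1}{m+1-a}F_{m+1} = F_m + 1$, which is solved by induction. Your two-sided sandwich $\frac{\step(\run-c_-)}{\step\fr+1}\leq \acc_\run \leq \frac{\step(\run+c_+)}{\step\fr+1}$, whose inductive step collapses to $c_\pm\step\fr\geq 0$, bypasses that machinery entirely and is substantially shorter; your alternative Gamma-telescoping route is essentially the same identity as the paper's lemma, just organized more transparently. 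What the paper's approach buys is an exact closed form at every $m$; what yours buys is a direct $\bigoh(1)$ (resp.\ $\Theta(\run^{-\step\fr})$) error bound on $\acc_\run$ without the detour through \cref{lem:discrete-sums}. Either way the final $\Theta(\horizon)$ subleading term in $\effscore_\horizon$ is treated with the same informality in both arguments.
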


\begin{proof}
We first define the score-difference 
\begin{equation}
    \acc_\run \defeq \statealtalt_{\pureB,\run}-\statealtalt_{\pureA,\run}
\end{equation}
with initial condition $\acc_1 = 0$. Then, unfolding according to the sequence of play, we obtain:
\begin{align}
\label{eq:template-acc}
    \acc_{\run+1} & = \acc_\run\parens*{1-\frac{\step\fr}{\run}} + \step (\pay(\pureB) - \pay(\pureA))
    \notag\\
    & = \acc_\run\parens*{1-\frac{\step\fr}{\run}} - \step
    \notag\\
    & = -\step \sum_{\runalt = 1}^{\run-1}\prod_{\runaltalt = 0}^{\runalt -1} \parens*{1-\frac{\step\fr}{\run-\runaltalt}} -\step
\end{align}
We next define for $\run \in\N$ the difference $\effscore_\run \defeq \statealt_{\pureB,\run}-\statealt_{\pureA,\run}$. Thus, unfolding it, we obtain:
\begin{align}
    \effscore_{\run+1} &=\effscore_\run + \step\acc_{\run +1}
    \notag\\
    &= \effscore_\run - \step^2\parens*{1+ \sum_{\runalt = 1}^{\run-1}\prod_{\runaltalt = 0}^{\runalt -1} \parens*{1-\frac{\step\fr}{\run-\runaltalt}}}
    \notag\\
    &= \effscore_\start  - \step^2 \sum_{m =1}^{\run}\parens*{1+ \sum_{\runalt = 1}^{m-1}\prod_{\runaltalt = 0}^{\runalt -1} \parens*{1-\frac{\step\fr}{m-\runaltalt}}}
\end{align}
Now, using \cref{lem:discrete-sums}, which we provide after this proof, we obtain that
\begin{align}
    \effscore_{\run+1} 
    &= \effscore_\start  - \step^2 \sum_{m =1}^{\run}\parens*{1 + \dfrac{m-\step\fr}{1+\step\fr} -\dfrac{1}{1+\step\fr}\prod_{\ell=1}^m \parens*{1-\dfrac{\step\fr}{\ell}}}
    \notag\\
    & = \effscore_\start  - \step^2 \frac{\run(\run+1)}{2(1+\step\fr)} -\step^2\run\parens*{1-\frac{\step\fr}{1+\step\fr}} + \dfrac{\step^2}{1+\step\fr}\sum_{m=1}^{\run}\prod_{\ell=1}^m \parens*{1-\dfrac{\step\fr}{\ell}}
    \notag\\
    & = \effscore_\start -\frac{\step^2\run^2}{2(1+\step\fr)} + \Theta(\run)
\end{align}
and invoking \cref{lem:reg-conv} for $\theta(x) = x\log x$, we get the result. 
\end{proof}

The following lemma is a necessary tool for obtaining the exact convergence rate in \cref{prop:ftxl-friction}.
\begin{lemma}\label{lem:discrete-sums}
For any $m\in \N$ and $a >0$, we have that
   \begin{equation}
       \sum_{k=1}^{m-1} \prod_{\ell=0}^{k-1} (1-\dfrac{a}{m-\ell}) = \dfrac{m-a}{1+a} -\dfrac{1}{1+a}\prod_{\ell=1}^m \parens*{1-\dfrac{a}{\ell}}
   \end{equation}
\end{lemma}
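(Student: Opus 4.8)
The plan is to prove the identity by an explicit telescoping argument. First I would reindex the inner product: the substitution $j = m-\ell$ turns $\prod_{\ell=0}^{k-1}(1 - a/(m-\ell))$ into $\prod_{j=m-k+1}^{m}(1 - a/j)$. Writing $Q_k$ for this quantity, with $Q_0 = 1$ (empty product), the defining product satisfies the one-step recursion $Q_{k+1} = \bigl(1 - \tfrac{a}{m-k}\bigr)Q_k$ for $0 \le k \le m-1$, and the claim becomes
\[
\sum_{k=1}^{m-1} Q_k = \frac{m-a}{1+a} - \frac{1}{1+a}\,Q_m,
\qquad\text{where } Q_m = \prod_{\ell=1}^{m}\Bigl(1 - \frac{a}{\ell}\Bigr).
\]

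Next I would look for a sequence $R_k$ with $R_{k+1} - R_k = Q_k$, so that the sum collapses. Substituting the ansatz $R_k = \alpha_k Q_k$ and using the recursion for $Q_k$ reduces this requirement to a first-order linear recursion for $\alpha_k$; trying $\alpha_k$ affine in $k$ forces $\alpha_k = -(m-k+1)/(1+a)$, i.e.
\[
R_k \defeq -\frac{m-k+1}{1+a}\,Q_k .
\]
I would then verify the telescoping relation directly, independently of how $R_k$ was found: from $Q_{k+1} = \frac{m-k-a}{m-k}\,Q_k$ we get $R_{k+1} = -\frac{m-k-a}{1+a}\,Q_k$, hence $R_{k+1} - R_k = \frac{(m-k+1)-(m-k-a)}{1+a}\,Q_k = Q_k$, using only the elementary identity $(m-k+1)-(m-k-a) = 1+a$. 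Since this is an algebraic identity in $a$ and $m-k\ge 1$ throughout, no positivity or non-vanishing of the factors $1-a/j$ is ever needed, so the argument is robust even when $a$ happens to be a positive integer $\le m$.

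Finally I would telescope: $\sum_{k=1}^{m-1} Q_k = R_m - R_1$. Here $R_m = -\frac{1}{1+a}Q_m = -\frac{1}{1+a}\prod_{\ell=1}^{m}(1-a/\ell)$, while $R_1 = -\frac{m}{1+a}Q_1 = -\frac{m}{1+a}\bigl(1-\tfrac{a}{m}\bigr) = -\frac{m-a}{1+a}$, so $R_m - R_1$ is precisely the asserted right-hand side. The case $m=1$ (empty left-hand sum, right-hand side $\frac{1-a}{1+a} - \frac{1-a}{1+a} = 0$) is trivial and consistent with the convention $R_1-R_1=0$.

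There is essentially no obstacle: the only non-mechanical step is guessing the multiplier $\alpha_k$, and the affine ansatz turns that into matching two coefficients. If one preferred to bypass the ansatz altogether, an equivalent route is to factor $(1-a/m)$ out of every summand to obtain the recursion $T_m = (1-a/m)(1+T_{m-1})$ with $T_1 = 0$ for the left-hand side $T_m$, and then check by a one-line induction that the proposed right-hand side obeys the same recursion with the same initial value; I would still present the telescoping version as the main proof, since it needs no separate induction.
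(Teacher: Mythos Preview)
Your proof is correct and takes a genuinely different route from the paper's. The paper rewrites the inner products in Gamma-function notation, introduces the auxiliary quantity $F_m = \frac{(m-a)!}{m!}\sum_{k=1}^{m} \frac{(m-k)!}{(m-k-a)!}$ (with one extra term), derives the recursion $\tfrac{m+1}{m+1-a}F_{m+1} = F_m + 1$ in the parameter $m$, solves it by induction, and then subtracts the extra term at the end. By contrast, you keep $m$ fixed and find an explicit antidifference $R_k = -\tfrac{m-k+1}{1+a}Q_k$ so that $R_{k+1}-R_k = Q_k$, reducing everything to a single telescoping sum. Your argument is shorter and more elementary: it avoids Gamma notation (and the attendant ``abuse of notation'' for non-integer arguments), needs no induction, and as you note, works verbatim even when some factors $1-a/j$ vanish. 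The alternative $T_m = (1-a/m)(1+T_{m-1})$ recursion you sketch at the end is in fact close in spirit to the paper's approach, though still cleaner; either version is a clear simplification of the published proof.
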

\begin{proof}
First, by expanding the inner product, we can rewrite the expression as
    \begin{align}
        \sum_{k=1}^{m-1} \prod_{\ell=0}^{k-1} (1-\dfrac{a}{m-\ell}) &=  \sum_{k=1}^{m-1} \prod_{\ell=0}^{k-1} (\dfrac{m-\ell-a}{m-\ell})
        \notag\\
        & \sum_{k=1}^{m-1} \dfrac{(m-a)\hdots(m-k+1-a)}{m\hdots(m-k+1)}
        \notag\\
        &= \sum_{k=1}^{m-1} \dfrac{(m-a)!(m-k)!}{(m-k-a)!m!}
        \notag\\
        &= \dfrac{(m-a)!}{m!}\sum_{k=1}^{m-1} \dfrac{(m-k)!}{(m-k-a)!}
    \end{align}
    where with a slight abuse of notation we use the factorial notation $(m-a)!$ to denote the Gamma function evaluated at $m-a+1$, \ie $\Gamma(m-a+1)$.
    
Now, defining the quantity $$F_m \defeq \dfrac{(m-a)!}{m!}\sum_{k=1}^{m} \dfrac{(m-k)!}{(m-k-a)!}$$
the difference of two consecutive terms evolves as:
    \begin{align}
        F_{m+1} -F_{m}&= \dfrac{(m+1-a)!}{(m+1)!}\sum_{k=1}^{m+1} \dfrac{(m+1-k)!}{(m+1-k-a)!} - \dfrac{(m-a)!}{m!}\sum_{k=1}^{m} \dfrac{(m-k)!}{(m-k-a)!}
        \notag\\
        &=\dfrac{m+1-a}{m+1} +\dfrac{(m+1-a)!}{(m+1)!}\sum_{k=2}^{m+1} \dfrac{(m+1-k)!}{(m+1-k-a)!} -\dfrac{(m-a)!}{m!}\sum_{k=1}^{m} \dfrac{(m-k)!}{(m-k-a)!}
        \notag\\
        &= \dfrac{m+1-a}{m+1} +\dfrac{(m+1-a)!}{(m+1)!}\sum_{k=2}^{m+1} \dfrac{(m+1-k)!}{(m+1-k-a)!}- \dfrac{(m-a)!}{m!}\sum_{k=2}^{m+1}\dfrac{(m-k+1)!}{(m-k+1-a)!}
        \notag\\
        &=\dfrac{m+1-a}{m+1} + \sum_{k=2}^{m+1}\dfrac{(m+1-a)!(m+1-k)!- (m+1)(m-a)!(m-k+1)!}{(m+1)!(m+1-a-k)!}
        \notag\\
        &=\dfrac{m+1-a}{m+1} + \sum_{k=2}^{m+1}\dfrac{(m-a)!(m+1-k)!(m+1-a -m-1)}{(m+1)!(m+1-a-k)!}
        \notag\\
        &= \dfrac{m+1-a}{m+1} -a \sum_{k=2}^{m+1}\dfrac{(m-a)!(m+1-k)!}{(m+1)!(m+1-a-k)!}
        \notag\\
        &=\dfrac{m+1-a}{m+1} -\dfrac{a}{m+1-a}\bracks*{\sum_{k=1}^{m+1}\dfrac{(m+1-k)!(m+1-a)!}{(m+1)!(m+1-a-k)!} -\dfrac{m+1-a}{m+1} }
        \notag\\
        &= 1- \dfrac{a}{m+1-a}F_{m+1}
    \end{align}
    Thus, we readily obtain the recurrence relation
    \begin{equation}\label{eq:recurrence}
        \dfrac{m+1}{m+1-a}F_{m+1} = F_m +1 \eqdot
    \end{equation}
    We continue the proof by induction. To this end, we will show that 
    \begin{equation}
    \label{eq:recursion-sol}
        F_m = \dfrac{m-a}{1+a} + \dfrac{a}{1+a}\prod_{\ell=1}^m\dfrac{\ell-a}{\ell}\eqdot
    \end{equation}
    For the base case, note that  
    \begin{align}
        F_1 = (1-a) = \dfrac{1-a}{1+a} + \dfrac{a}{1+a}(1-a)
    \end{align}
    For the inductive step, suppose that \eqref{eq:recursion-sol} holds for $m\in\N$. Then, we have: 
    \begin{align}
        \dfrac{m+1}{m+1-a}F_{m+1} &= \dfrac{m-a}{1+a} + \dfrac{a}{1+a}\prod_{\ell=1}^m\parens*{\dfrac{\ell-a}{\ell}} + 1
        \notag\\
        &= \dfrac{m+1}{1+a} + \dfrac{a}{1+a}\prod_{\ell=1}^m\dfrac{\ell-a}{\ell}
    \end{align}
    which implies the inductive step 
    \begin{equation}
        F_{m+1} = \dfrac{m+1-a}{1+a} + \dfrac{a}{1+a}\prod_{\ell=1}^{m+1}\dfrac{\ell-a}{\ell}
        \end{equation}
    and thus \eqref{eq:recursion-sol} holds for all $m\in\N$.
        Finally, to complete the proof notice that 
        \begin{align}
             \sum_{k=1}^{m-1} \prod_{\ell=0}^{k-1} (1-\dfrac{a}{m-\ell}) &= \dfrac{(m-a)!}{m!}\sum_{k=1}^{m-1} \dfrac{(m-k)!}{(m-k-a)!}
             \notag\\
             &= F_m - \prod_{\ell=0}^{m-1}\parens*{1-\dfrac{a}{m-\ell}}
             \notag\\
             &= \dfrac{m-a}{1+a} + \dfrac{a}{1+a}\prod_{\ell=1}^m \dfrac{\ell-a}{\ell} - \prod_{\ell=1}^m\parens*{1-\dfrac{a}{\ell}}
             \notag\\
             &= \dfrac{m-a}{1+a} -\dfrac{1}{1+a}\prod_{\ell=1}^m \parens*{1-\dfrac{a}{\ell}}
        \end{align}
as was to be shown.
\end{proof}

Next, we discuss the cases of non-vanishing and zero friction.

\subsection{\ac{FTXL} with non-vanishing friction}
We continue this section by considering the case of non-vanishing friction in analogy to the continuous-time case, as per \cref{app:continuous}. Specifically, 
we consider the discrete version of \eqref{eq:FTXL-cont} with non-vanishing friction, as follows: 
\begin{equation}
\label{eq:ftxl-fr-const}
\begin{aligned}
\statealtalt_{\play,\run+1}
	&= \statealtalt_{\play,\run}\parens*{1-\step\fr} + \step \signal_{\play,\run}
    \\
\statealt_{\play,\run+1}
	&= \statealt_{\play,\run} + \step\statealtalt_{\play,\run+1}
\end{aligned}
\end{equation}
with $\step\fr < 1$. Below, we provide the rate of convergence for the setup of $\cref{ex:single}$, as we did before. Namely, we obtain a linear convergence rate, as the following proposition suggests.
\begin{proposition}
\label{prop:ftxl-friction}
Consider the single-player game $\fingame$ with actions $\pureA$ and $\pureB$ such that $\pay(\pureA) - \pay(\pureB) = 1$ of \cref{ex:single}, and let $\state_\run = \logit(\score_\run)$ be the sequence of play generated by \eqref{eq:ftxl-fr-const}. Then, denoting by $\eq = (1,0)$ the strict \acl{NE}, we have:
\begin{equation}
    \supnorm{\state_\run - \eq}
	\sim \exp\parens*{\Const  - \frac{\step}{\fr}\run}
	\eqdot
\end{equation}
where $\Const >0$ is a constant that depends on the initialization of the algorithm.
\end{proposition}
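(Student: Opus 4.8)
The plan is to exploit the fact that the game of \cref{ex:single} has a single player, so the payoff vector is a \emph{constant}: with no opponents, $\signal_{\pure,\run} = \payfield_{\pure}(\state_\run) = \pay(\pure)$ for every $\run$. Consequently, read in the $\pureB$-minus-$\pureA$ coordinate, the momentum line of \eqref{eq:ftxl-fr-const} becomes an autonomous linear recurrence that can be solved in closed form. Concretely, mirroring the proof of the vanishing-friction case, I would set $\acc_\run \defeq \statealtalt_{\pureB,\run} - \statealtalt_{\pureA,\run}$ with $\acc_1 = 0$, and $\effscore_\run \defeq \statealt_{\pureB,\run} - \statealt_{\pureA,\run}$, and subtract the two coordinates of the first equation in \eqref{eq:ftxl-fr-const}, using $\pay(\pureB) - \pay(\pureA) = -1$, to get
\begin{equation}
\acc_{\run+1} = (1-\step\fr)\,\acc_\run - \step \eqdot
\end{equation}

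First I would solve this recurrence around its fixed point $\acc^{\ast} = -1/\fr$: since $\acc_\run - \acc^{\ast} = (1-\step\fr)^{\run-1}(\acc_1 - \acc^{\ast})$, we obtain $\acc_\run = -\tfrac{1}{\fr}\bigl(1 - (1-\step\fr)^{\run-1}\bigr)$. The standing assumption $\step\fr < 1$ (together with $\step,\fr > 0$) guarantees $0 < 1-\step\fr < 1$, so the transient term decays geometrically and $\sum_{m\ge1}(1-\step\fr)^{m}$ converges. Feeding this into the second line $\effscore_{\run+1} = \effscore_\run + \step\,\acc_{\run+1}$ and unfolding from $\run = 1$ yields
\begin{equation}
\effscore_{\run+1} = \effscore_\start - \frac{\step}{\fr}\,\run + \frac{\step}{\fr}\sum_{m=1}^{\run}(1-\step\fr)^{m} \eqdot
\end{equation}
Since the last sum is bounded by $(1-\step\fr)/(\step\fr)$ uniformly in $\run$, this gives $\effscore_\run = -(\step/\fr)\run + \Theta(1)$; in particular $\effscore_\run \to -\infty$ linearly in $\run$.

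Finally, I would translate back to the primal space. Invoking \cref{lem:reg-conv} with $\theta(x) = x\log x$ (the logit case $\mirror \gets \logit$), the divergence $\effscore_\run \to -\infty$ yields $\state_\run \to \eq$, and in the logit case one has the exact identity $\supnorm{\state_\run - \eq} = 1 - \state_{\pureA,\run} = \exp(\effscore_\run)/(1 + \exp(\effscore_\run))$, which is asymptotic to $\exp(\effscore_\run)$ as $\effscore_\run \to -\infty$. Substituting the expansion of $\effscore_\run$ gives $\supnorm{\state_\run - \eq} \sim \exp(\Const - (\step/\fr)\run)$, with $\Const$ absorbing $\effscore_\start$ and the bounded geometric contribution — the asserted linear rate. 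There is no genuine obstacle here; the only point requiring care is to invoke $\step\fr < 1$ precisely so that the transient term $(1-\step\fr)^{\run-1}$ stays summable and does not perturb the leading coefficient $\step/\fr$ — which is exactly the mechanism that degrades the superlinear rate of \cref{thm:FTXL-cont} to a merely geometric one once the friction is non-vanishing.
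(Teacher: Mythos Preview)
Your proposal is correct and follows essentially the same route as the paper: define the score and momentum differences $\effscore_\run$ and $\acc_\run$, solve the linear recurrence for $\acc_\run$ in closed form (you do it via the fixed point $-1/\fr$, the paper via direct geometric summation, yielding the identical expression), sum to get $\effscore_\run = -(\step/\fr)\run + \Theta(1)$, and invoke \cref{lem:reg-conv} with the entropic kernel. Your explicit write-out of the logit identity $\supnorm{\state_\run - \eq} = e^{\effscore_\run}/(1+e^{\effscore_\run}) \sim e^{\effscore_\run}$ is a helpful addition, but otherwise the arguments coincide.
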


\begin{proof}
We first define the score-difference 
\begin{equation}
    \acc_\run \defeq \statealtalt_{\pureB,\run}-\statealtalt_{\pureA,\run}
\end{equation}
with initial condition $\acc_1 = 0$. Then, unfolding according to the sequence of play, we obtain:
\begin{align}
\label{eq:template-acc}
    \acc_{\run+1} & = \acc_\run\parens*{1-\step\fr} + \step (\pay(\pureB) - \pay(\pureA))
    \notag\\
    & = \acc_\run\parens*{1-\step\fr} - \step
    \notag\\
    & = \hdots \notag\\
    & = -\step \sum_{\runalt = 0}^{\run-1} \parens*{1-\step\fr}^\runalt
    \notag\\
    & = -\frac{1-(1-\step\fr)^\run}{\fr}
\end{align}
We next define for $\run \in\N$ the difference $\effscore_\run \defeq \statealt_{\pureB,\run}-\statealt_{\pureA,\run}$. Thus, unfolding it, we obtain:
\begin{align}
    \effscore_{\run+1} &=\effscore_\run + \step\acc_{\run +1}
    \notag\\
    &= \effscore_\run - \step\frac{1-(1-\step\fr)^\run}{\fr}
    \notag\\
    &= \effscore_\start  - \step \sum_{m =1}^{\run}\frac{1-(1-\step\fr)^m}{\fr}
    \notag\\
    &= \effscore_\start  - \frac{\step}{\fr} \parens*{\run - (1-\step\fr) \frac{1-(1-\step\fr)^\run}{\step\fr}}
    \notag\\
    &= \effscore_\start  - \frac{\step}{\fr}\run + \bigoh(1)
\end{align}
and invoking \cref{lem:reg-conv} for $\theta(x) = x\log x$, we get the result. 
\end{proof}

\subsection{\ac{FTXL} with zero friction}
Moving forward to the case of $\fr = 0$ as presented in \cref{sec:results}, we provide the proof of \cref{thm:FTXL-full}, which we restate below for convenience.

\Full*

\begin{proof}
First of all, since $\eq$ is a strict \acl{NE}, by \cref{lem:score-dom} for $$\drift = \frac{1}{2}\min_{\play\in\players}
		\min_{\purealt_{\play}\notin\supp(\eq_{\play})}
		\bracks{\pay_{\play}(\eq_{\play};\eq_{-\play}) - \pay_{\play}(\purealt_{\play};\eq_{-\play})}$$
there exists $\sbound > 0$ such that if $\score_{\play\pureq_\play} - \score_{\play\pure_\play} > \sbound$ for all $\pure_\play\neq\pureq_\play \in \pures_\play$ and $\play\in\players$, then 
\begin{equation}
    \payv_{\play\pureq_\play}(\mirror(\score)) - \payv_{\play\pure_\play}(\mirror(\score)) > \const \quad \text{for all } \pure_\play\neq\pureq_\play \in \pures_\play, \text{and } \play\in\players\eqdot
\end{equation}
For notational convenience, we focus on player $\play$ and drop the player-specific indices altogether.
Let $\pure \neq \pureq \in \pures$, and define for $\run \in \N$ the quantities $\acc_{\pure,\run}$ and $\speed_{\pure,\run}$ as
\begin{equation}
    \acc_{\pure,\run} \defeq \inner{\statealtalt_\run}{e_\pure - e_\pureq}, \qquad
    \speed_{\pure,\run} \defeq \inner{\statealt_\run}{e_\pure - e_\pureq}
\end{equation}
where $e_\pure, e_\pureq$ are the standard basis vectors corresponding to $\pure,\pureq \in\pures$.

Let initial conditions $\statealt_1$ such that $\score_{\pure,\start}-\score_{\pureq,\start} = -\sbound-\eps$, for all $\pure\neq\pureq\in\pures$, where $\eps >0$ small, and $\statealtalt_\start = 0$. We will first show by induction that $\speed_{\pure,\run} < -\sbound$ for all $\run\in\N$.
To this end, unfolding the recursion, we obtain:
\begin{align}
\label{eq:template-acc}
    \acc_{\pure,\run+1} & = \acc_{\pure,\run} + \step \inner{\signal_\run}{\dif} \notag\\
    & =  \acc_{\pure,\run} + \step \inner{\payv(\state_\run)}{\dif}  \notag\\
    & = \step \sum_{\runalt = 1}^\run \inner{\payv(\state_\runalt)}{\dif}
\end{align}
where we used that $\acc_1 = 0$. 
Now, for the sake of induction, suppose that 
\begin{equation}
\label{eq:inductive-full}
\speed_{\pure,\runalt} < -M \quad \text{for all $\runalt=1,\dots,\run$}
\end{equation}
which implies that $\inner{\payv(\state_\runalt)}{\dif} < -\drift$.
With this in hand, we will prove that $\speed_{\pure,\run+1} < -M$, as well. Specifically, we have:
\begin{align}
\label{eq:template-sp}
\speed_{\pure,\run+1}  = \speed_{\pure,\run} + \step \acc_{\pure,\run+1} 
& = \speed_{\pure,\run} +\step^2\sum_{\runalt = 1}^\run \inner{\payv(\state_\runalt)}{\dif}\notag\\
& \leq \speed_{\pure,\run} -\drift\step^2 \run\notag\\
& \leq \speed_{\pure,\start} -\drift\step^2\sum_{\runaltalt = 1}^\run \runaltalt \notag\\
& < -M
\end{align}
where we used the inductive hypothesis and the initial condition.
Therefore, we conclude by induction that $\speed_{\pure,\run} < -M$ for all $\run \in \N$.
Thus, we readily obtain that after $\horizon$ time-step: 
\begin{align}
\label{eq:convergence-2}
 \speed_{\horizon} & \leq \speed_{\pure,\start} -\drift\step^2\sum_{\runaltalt = 1}^{\horizon-1} \runaltalt \leq \speed_{\pure,\start} - \drift\step^2 \frac{\horizon(\horizon-1)}{2}
\end{align}
and invoking \cref{lem:reg-conv} for $\theta(x) = x\log x$, we get the result.
\end{proof}

\section{Proofs for discrete-time algorithms with partial information}
\label{app:discrete-stoch}

In this appendix, we provide the proofs of \cref{thm:FTXL-pure} and \cref{thm:FTXL-bandit} that correspond to the convergence of \eqref{eq:FTXL} with realization-based and bandit feedback, respectively.
For this, we need the following lemma, which provides a maximal bound on a martingale process. Namely, we have:

\begin{lemma}
\label{lem:martingale-bound}
Let $M_\run \defeq \sum_{\runalt = 1}^\run \step_\runalt\snoise_\runalt$ be a martingale with respect to $(\filter_\run)_{\run \in \N}$ with $\exof{\dnorm{\snoise_\run}^q} \leq \noisepar_\run^q$ for some $q > 2$.
Then, for $\mu \in (0,1)$ and $\run \in \N$:
\begin{align}
\label{eq:maximal}
    \probof*{\sup_{\runalt\leq\run} \abs{M_\runalt} > \const\parens*{\sum_{\runalt = 1}^\run \step_\runalt}^{\mu}} 
    & \leq A_q \frac{\sum_{\runalt=1}^\run \stepacc_\runalt^{q/2+1}\noisepar_\runalt^q}{\parens*{\sum_{\runalt = 1}^\run \step_\runalt}^{1 + q(\mu-1/2)}}
    \end{align}
where $A_q$ is a constant depending only on $\const$ and $q$.
\end{lemma}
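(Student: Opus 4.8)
The plan is to control the running maximum of $M_\run$ by its quadratic variation via the Burkholder--Davis--Gundy inequality, then to reshape the resulting expectation into exactly the quantity on the right-hand side of \eqref{eq:maximal} by a weighted Jensen step, and finally to conclude with Markov's inequality. Throughout, write $S_\run \defeq \sum_{\runalt=1}^{\run}\step_\runalt$.

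First, I would note that, $M_\run = \sum_{\runalt=1}^{\run}\step_\runalt\snoise_\runalt$ being a martingale with increments $\step_\runalt\snoise_\runalt$, its quadratic variation up to time $\run$ is $[M]_\run = \sum_{\runalt=1}^{\run}\step_\runalt^{2}\dnorm{\snoise_\runalt}^{2}$, so the Burkholder--Davis--Gundy inequality (valid for finite-dimensional, hence Hilbert-space-valued, martingales and any exponent $q>1$) supplies a constant $\Const_q$ depending only on $q$ with
\begin{equation}
\exof*{\sup_{\runalt\leq\run}\abs{M_\runalt}^{q}}
	\leq \Const_q\,\exof*{[M]_\run^{q/2}}\eqdot
\end{equation}

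Next comes the only slightly non-routine step, which also pins down the exact form of the estimate. Setting $w_\runalt\defeq\step_\runalt/S_\run\geq0$, so that $\sum_{\runalt=1}^{\run}w_\runalt=1$, I would rewrite $[M]_\run = S_\run\sum_{\runalt=1}^{\run}w_\runalt\bigl(\step_\runalt\dnorm{\snoise_\runalt}^{2}\bigr)$; since $q\geq2$, the map $t\mapsto t^{q/2}$ is convex, so Jensen's inequality applied with the probability weights $w_\runalt$ gives, pointwise,
\begin{equation}
[M]_\run^{q/2}
	= S_\run^{q/2}\Bigl(\sum_{\runalt=1}^{\run}w_\runalt\,\step_\runalt\dnorm{\snoise_\runalt}^{2}\Bigr)^{q/2}
	\leq S_\run^{q/2}\sum_{\runalt=1}^{\run}w_\runalt\bigl(\step_\runalt\dnorm{\snoise_\runalt}^{2}\bigr)^{q/2}
	= S_\run^{q/2-1}\sum_{\runalt=1}^{\run}\step_\runalt^{q/2+1}\dnorm{\snoise_\runalt}^{q}\eqdot
\end{equation}
Taking expectations and invoking the moment hypothesis $\exof{\dnorm{\snoise_\runalt}^{q}}\leq\noisepar_\runalt^{q}$ (the step-sizes $\step_\runalt$ being deterministic) then yields $\exof{[M]_\run^{q/2}}\leq S_\run^{q/2-1}\sum_{\runalt=1}^{\run}\step_\runalt^{q/2+1}\noisepar_\runalt^{q}$.

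Finally, I would apply Markov's inequality to the nonnegative variable $\sup_{\runalt\leq\run}\abs{M_\runalt}^{q}$ at the level $\const^{q}S_\run^{q\mu}$ and chain the two bounds above:
\begin{equation}
\probof*{\sup_{\runalt\leq\run}\abs{M_\runalt}>\const\,S_\run^{\mu}}
	\leq \frac{\exof*{\sup_{\runalt\leq\run}\abs{M_\runalt}^{q}}}{\const^{q}S_\run^{q\mu}}
	\leq \frac{\Const_q}{\const^{q}}\cdot\frac{S_\run^{q/2-1}\sum_{\runalt=1}^{\run}\step_\runalt^{q/2+1}\noisepar_\runalt^{q}}{S_\run^{q\mu}}\eqdot
\end{equation}
Since $S_\run^{q/2-1}/S_\run^{q\mu}=S_\run^{-(1+q(\mu-1/2))}$ and $S_\run=\sum_{\runalt=1}^{\run}\step_\runalt$, this is exactly \eqref{eq:maximal} with $A_q\defeq\Const_q/\const^{q}$, a constant depending only on $q$ and $\const$. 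I do not expect a genuine obstacle here, as every ingredient is off-the-shelf; the one point needing care is the Jensen step, which is precisely what converts the crude quadratic-variation estimate $\exof{[M]_\run^{q/2}}\lesssim(\sum_{\runalt=1}^{\run}\step_\runalt^{2}\noisepar_\runalt^{2})^{q/2}$ into the sum $\sum_{\runalt=1}^{\run}\step_\runalt^{q/2+1}\noisepar_\runalt^{q}$ together with the stated power of $S_\run$ in the denominator (and one should check that the Burkholder--Davis--Gundy constant in discrete time is indeed a function of $q$ alone, which holds for $q>1$).
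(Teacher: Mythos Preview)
Your proof is correct and follows the same overall architecture as the paper: a BDG bound on the running maximum in terms of the quadratic variation, followed by a convexity step that reshapes $\exof{[M]_\run^{q/2}}$ into $S_\run^{q/2-1}\sum_\runalt \step_\runalt^{q/2+1}\noisepar_\runalt^q$, and then a tail bound. The two proofs differ only in presentation. First, the paper separates the maximal bound into Doob's inequality (to pass from $\sup_{\runalt\le\run}\abs{M_\runalt}$ to $\abs{M_\run}$) followed by BDG on $\exof{\abs{M_\run}^q}$, whereas you invoke the maximal form of BDG directly and then Markov; your route saves one step. Second, for the reshaping, the paper appeals to a ``generalized H\"older'' inequality $(\sum a_\runalt b_\runalt)^\rho \le (\sum a_\runalt^{\lambda\rho/(\rho-1)})^{\rho-1}\sum a_\runalt^{(1-\lambda)\rho}b_\runalt^\rho$ with the specific choice $a_\runalt=\step_\runalt^2$, $b_\runalt=\dnorm{\snoise_\runalt}^2$, $\rho=q/2$, $\lambda=1/2-1/q$, which after simplification yields exactly the same pointwise inequality you obtain by weighted Jensen with weights $w_\runalt=\step_\runalt/S_\run$. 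Your derivation is the more transparent of the two, since it makes explicit that the only structural ingredient is convexity of $t\mapsto t^{q/2}$ for $q\ge2$.
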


\begin{proof}
Fix some $\mu \in (0,1)$. By Doob's maximal inequality \cite[Corollary 2.1]{HH80}, we have:
\begin{align}
\label{eq:maximal-1}
    \probof*{\sup_{\runalt\leq\run} \abs{M_\runalt} > \const\parens*{\sum_{\runalt = 1}^\run \step_\runalt}^{\mu}} & \leq \frac{\exof*{\abs{M_\run}^q}}{\const^q \parens*{\sum_{\runalt = 1}^\run \step_\runalt}^{q\mu}}
\end{align}
Now, applying the Burkholder–Davis–Gundy inequality \cite[Theorem 2.10]{HH80}, we get that
\begin{equation}
\label{eq:BDG}
    \exof*{\abs{M_\run}^q} \leq A_q \exof*{\parens*{\sum_{\runalt = 1}^\run \stepacc_\runalt^2\dnorm{\snoise_\runalt}^2}^{q/2}}
\end{equation}
where $A_q$ is a constant depending only on $\const$ and $q$. Now, we will invoke the generalized H\"older's inequality \citep{Ben99}, we have:
\begin{equation}
    \parens*{\sum_{\runalt = 1}^\run a_\runalt b_\runalt}^\rho \leq \parens*{\sum_{\runalt = 1}^\run a_\runalt^{\frac{\lambda\rho}{\rho-1}}}^{\rho-1} \sum_{\runalt = 1}^\run a_\runalt^{(1-\lambda)\rho}b_\runalt^\rho
\end{equation}
for $a_\runalt,b_\runalt \geq 0$, $\rho > 1$ and $\lambda \in [0,1)$. Thus, setting $a_\runalt = \stepacc_\runalt^2$, $b_\runalt = \dnorm{\snoise_\runalt}^2$, $\rho = q/2$ and $\lambda = 1/2 - 1/q$, \eqref{eq:maximal-1}, combined with \eqref{eq:BDG}, becomes:
\begin{align}
\label{eq:maximal-2}
    \probof*{\sup_{\runalt\leq\run} \abs{M_\runalt} > \const\parens*{\sum_{\runalt = 1}^\run \step_\runalt}^{\mu}} 
    & \leq A_q \frac{\parens*{\sum_{\runalt = 1}^\run \step_\runalt}^{q/2-1} \sum_{\runalt=1}^\run \stepacc_\runalt^{q/2+1}\exof{\dnorm{\snoise_\runalt}^q}}{\parens*{\sum_{\runalt = 1}^\run \step_\runalt}^{q\mu}} \notag\\
    & \leq A_q \; \frac{\sum_{\runalt=1}^\run \stepacc_\runalt^{q/2+1}\noisepar_\runalt^q}{\parens*{\sum_{\runalt = 1}^\run \step_\runalt}^{1 + q(\mu-1/2)}}
\end{align}
and our proof is complete.
\end{proof}

With this tool in hand, we proceed to prove the convergence of \eqref{eq:FTXL} under realization-based feedback.
For convenience, we restate the relevant result below.

\Realization*

\begin{proof}
First of all, since $\eq$ is a strict \acl{NE}, by \cref{lem:score-dom} for $$\drift = \frac{1}{2}\min_{\play\in\players}
		\min_{\purealt_{\play}\notin\supp(\eq_{\play})}
		\bracks{\pay_{\play}(\eq_{\play};\eq_{-\play}) - \pay_{\play}(\purealt_{\play};\eq_{-\play})}$$
there exists $\sbound > 0$ such that if $\score_{\play\pureq_\play} - \score_{\play\pure_\play} > \sbound$ for all $\pure_\play\neq\pureq_\play \in \pures_\play$ and $\play\in\players$, then 
\begin{equation}
    \payv_{\play\pureq_\play}(\mirror(\score)) - \payv_{\play\pure_\play}(\mirror(\score)) > \const \quad \text{for all } \pure_\play\neq\pureq_\play \in \pures_\play, \text{and } \play\in\players\eqdot
\end{equation}
For notational convenience, we focus on player $\play$ and drop the player-specific indices altogether.
Let $\pure \neq \pureq \in \pures$, and define for $\run \in \N$ the quantities $\acc_{\pure,\run}$ and $\speed_{\pure,\run}$ as
\begin{equation}
    \acc_{\pure,\run} \defeq \inner{\statealtalt_\run}{e_\pure - e_\pureq}, \qquad
    \speed_{\pure,\run} \defeq \inner{\statealt_\run}{e_\pure - e_\pureq}
\end{equation}
where $e_\pure, e_\pureq$ are the standard basis vectors corresponding to $\pure,\pureq \in\pures$.


Then, unfolding the recursion, we obtain:
\begin{align}
\label{eq:template-acc}
    \acc_{\pure,\run+1} = \acc_{\pure,\run} + \step \inner{\signal_\run}{\dif}
    & =  \acc_{\pure,\run} + \step \inner{\payv(\state_\run)}{\dif} + \step \inner{\noise_\run}{\dif}   \notag\\
    & = \step \sum_{\runalt = 1}^\run \inner{\payv(\state_\runalt)}{\dif} +  \step\sum_{\runalt = 1}^\run \inner{\noise_\runalt}{\dif}
\end{align}
where we used that $\acc_1 = 0$. 
Now, define the stochastic process $\braces{M_\run}_{\run \in \N}$ as 
\begin{equation}
    M_\run \defeq \step \sum_{\runalt = 1}^\run \inner{\noise_\runalt}{\dif}
\end{equation}
which is a martingale, since $\exof{\noise_\run \given\filter_\run} = 0$. 
Moreover, note that
\begin{equation}
    \dnorm{\noise_\run} = \dnorm{\payv(\pure_\run)-\payv(\state_\run)} \leq 2\max_{\pure \in \pures}\dnorm{\payv(\pure)}
\end{equation}
and, thus, we readily obtain that 
$\exof{\dnorm{\noise_{\run}}^q\given \filter_\run}  \leq \noisepar^q$ for $\noisepar = 2\max_{\pure \in \pures}\dnorm{\payv(\pure)}$ and all $q \in [1,\infty]$.

By \cref{lem:martingale-bound} for $\step_\run = \step$, $\noisepar_\run = \noisepar$, $\snoise_\run = \inner{\noise_\run}{\dif}$, $\const$ as in \cref{thm:FTXL-full}, and $\mu \in (0,1), q > 2$ whose values will be determined next, there exists $A_q > 0$ such that:
\begin{align}
\label{eq:maximal-3}
\conf_\run \defeq \probof*{\sup_{\runalt\leq\run} \abs{M_\runalt} > \drift (\step\run)^{\mu}} 
 & \leq A_q \noisepar^q \; \frac{\run \step^{q/2+1}}{(\step\run)^{1 + q(\mu-1/2)}} \notag\\
 & \leq A_q \noisepar^q \; \frac{\step^{q(1-\mu)}}{\run^{q(\mu-1/2)}}
\end{align}
Now, we need to guarantee that there exist $\mu \in (0,1), q > 2$, such that
\begin{equation}
    \sum_{\run = 1}^\infty \conf_\run < \infty
\end{equation}
For this, we simply need $q(\mu-1/2) > 1$, or equivalently, $\mu > 1/2 + 1/q$, which implies that $\mu \in (1/2,1)$.

Therefore, for $\step$ small enough, we get $ \sum_{\run = 1}^\infty\conf_\run < \conf$, and therefore: 
\begin{align}
    \probof*{\bigcap_{\run=1}^\infty \braces*{\sup_{\runalt\leq\run} \abs{M_\runalt} \leq \drift(\step\run)^{\mu}}} & = 1- \probof*{\bigcup_{\run=1}^\infty \braces*{\sup_{\runalt\leq\run} \abs{M_\runalt} > \drift(\step\run)^{\mu}}} \notag\\
    &\geq 1 -  \sum_{\run = 1}^\infty\conf_\run \notag\\
    &\geq 1-\conf
\end{align}

From now on, we denote the good event $\bigcap_{\run=1}^\infty\braces*{\sup_{\runalt\leq\run} \abs{M_\runalt} \leq \drift(\step\run)^{\mu}}$ by $E$. Then, with probability at least $1-\delta$:
\begin{align}
\label{eq:template-acc-2}
\acc_{\pure,\run+1} & \leq \step \sum_{\runalt = 1}^\run \inner{\payv(\state_\runalt)}{\dif} + \drift(\step\run)^{\mu} \quad \text{for all $\run \in \N$.}
\end{align}

Furthermore, we have that for $\run > \thres \defeq \ceil{1/\step}$, we readily get that $\step\run > (\step\run)^\mu$. Therefore, setting
\begin{align}
    R \defeq \drift \step \sum_{\runalt = 1}^{\thres -1 }\parens*{(\step\runalt)^\mu - \step\runalt}
\end{align}
we obtain:
\begin{align}
\label{eq:bound-1-real}
    -\drift \step \sum_{\runalt = 1}^\run \parens*{\step\runalt - (\step\runalt)^\mu} \leq R
\end{align}
for all $\run \in \N$.
Then, initializing $\statealt_\start$ such that $\speed_{\pure,\start} < - M - R$,
we will show that $\speed_{\pure,\run} < -M$ for all $\run \in \N$ with probability at least $1-\conf$. 
For this, suppose that $\event$ is realized, and assume that 
\begin{equation}
\label{eq:induction-real}
\speed_{\pure,\runalt} < -M \quad \text{for all $\runalt=1,\dots,\run$}
\end{equation}
We will show that $\speed_{\pure,\run+1} < -M$, as well. For this, we have:
\begin{align}
\label{eq:template-sp}
\speed_{\pure,\run+1} & = \speed_{\pure,\run} + \step \acc_{\pure,\run+1} \notag\\
& \leq \speed_{\pure,\run} +\step\parens*{\step \sum_{\runalt = 1}^\run \inner{\payv(\state_\runalt)}{\dif} + \drift(\step\run)^{\mu}}\notag\\
& \leq \speed_{\pure,\run} -\drift\step\parens*{\step\run -(\step\run)^{\mu}}\notag\\
& \leq \speed_{\pure,\start} -  \drift\step\sum_{\runalt = 1}^\run \parens*{\step\runalt - (\step\runalt)^\mu}\notag\\
& \leq - M - R - \drift \step \sum_{\runalt = 1}^\run \parens*{\step\runalt - (\step\runalt)^\mu}\notag\\
& < -M
\end{align}

Therefore, we conclude by induction that $\speed_{\pure,\run} < -M$ for all $\run \in \N$.
Thus, we readily obtain that with probability at least $1-\conf$ it holds:
\begin{align}
\label{eq:convergence-2}
 \speed_{\pure,\horizon} & \leq \speed_{\pure,\start} - \drift \step \sum_{\runalt = 1}^{\horizon - 1} \parens*{\step\runalt - (\step\runalt)^\mu}
 	\notag\\
 & \leq \speed_{\pure,\start} - \drift \step^2\frac{\horizon(\horizon-1)}{2} + \drift\step^{1+\mu}\int_0^\horizon t^\mu dt
 	\notag\\
 & \leq \speed_{\pure,\start} - \drift \step^2\frac{\horizon(\horizon-1)}{2} + \drift\step^{1+\mu}\frac{\horizon^{\mu+1}}{\mu+1}
\end{align}
for all $\horizon \in \N$. Setting $\mu = 2/3$ and invoking \cref{lem:reg-conv} for $\theta(x) = x\log x$, we get the result. 
\end{proof}

Finally, we prove the convergence of \eqref{eq:FTXL} with bandit feedback.
Again, for convenience, we restate the relevant result below.

\Bandit*

\begin{proof}
First of all, since $\eq$ is a strict \acl{NE}, by \cref{lem:score-dom} for $$\drift = \frac{1}{2}\min_{\play\in\players}
		\min_{\purealt_{\play}\notin\supp(\eq_{\play})}
		\bracks{\pay_{\play}(\eq_{\play};\eq_{-\play}) - \pay_{\play}(\purealt_{\play};\eq_{-\play})}$$
there exists $\sbound > 0$ such that if $\score_{\play\pureq_\play} - \score_{\play\pure_\play} > \sbound$ for all $\pure_\play\neq\pureq_\play \in \pures_\play$ and $\play\in\players$, then 
\begin{equation}
    \payv_{\play\pureq_\play}(\mirror(\score)) - \payv_{\play\pure_\play}(\mirror(\score)) > \const \quad \text{for all } \pure_\play\neq\pureq_\play \in \pures_\play, \text{and } \play\in\players\eqdot
\end{equation}
For notational convenience, we focus on player $\play$ and drop the player-specific indices altogether.
Let $\pure \neq \pureq \in \pures$, and define for $\run \in \N$ the quantities $\acc_{\pure,\run}$ and $\speed_{\pure,\run}$ as
\begin{equation}
    \acc_{\pure,\run} \defeq \inner{\statealtalt_\run}{e_\pure - e_\pureq}, \qquad
    \speed_{\pure,\run} \defeq \inner{\statealt_\run}{e_\pure - e_\pureq}
\end{equation}
where $e_\pure, e_\pureq$ are the standard basis vectors corresponding to $\pure,\pureq \in\pures$.
For notational convenience, we focus on player $\play$ and drop the player-specific indices altogether. Now, decomposing the \ac{IWE} $\signal_\run$, we obtain
\begin{equation}
 \signal_{\run} = \payv(\state_\run) + \noise_{\run} + \bias_{\run}
\end{equation}
where $\noise_{\run} \defeq  \signal_{\run} - \payv_\play(\hat\state_\run)$ is a zero-mean noise, and $\bias_{\play,\run} \defeq \payv_\play(\hat\state_\run) - \payv_\play(\state_\run)$.

Then, unfolding the recursion, we obtain:
\begin{align}
\label{eq:template-acc}
    \acc_{\pure,\run+1} & = \acc_{\pure,\run} + \step \inner{\signal_\run}{\dif} \notag\\
    & =  \acc_{\pure,\run} + \step \inner{\payv(\state_\run)}{\dif} + \step \inner{\noise_\run}{\dif} +  \step \inner{\bias_\run}{\dif}\notag\\
    & \leq \acc_{\pure,\run} + \step \inner{\payv(\state_\run)}{\dif} + \step \inner{\noise_\run}{\dif} +  2\step \dnorm{\bias_\run}\notag\\
    & \leq \step \sum_{\runalt = 1}^\run \inner{\payv(\state_\runalt)}{\dif} +  \step\sum_{\runalt = 1}^\run \inner{\noise_\runalt}{\dif} + 2\step\sum_{\runalt = 1}^\run \dnorm{\bias_\runalt} \notag\\
    & \leq \step \sum_{\runalt = 1}^\run \inner{\payv(\state_\runalt)}{\dif} +  \step\sum_{\runalt = 1}^\run \inner{\noise_\runalt}{\dif} + 2\step\bbound\sum_{\runalt = 1}^\run \mix_\runalt
\end{align}
where we used that $\dnorm{\bias_\run} = \Theta(\mix_\run)$ for all $\run \in \N$. 
Now, define the process $\braces{M_\run}_{\run \in \N}$ as 
\begin{equation}
    M_\run \defeq \step \sum_{\runalt = 1}^\run \inner{\noise_\runalt}{\dif}
\end{equation}
which is a martingale, since $\exof{\noise_\run \given\filter_\run} = 0$. 
Moreover, note that
\begin{align}
    \dnorm{\noise_\run} = \dnorm{\signal_\run-\payv(\hat\state_\run)} \leq \dnorm{\signal_\run}+\dnorm{\payv(\hat\state_\run)}
\end{align}
\ie $\dnorm{\noise_\run} = \Theta(1/\mix_\run)$.
Thus, we readily obtain that 
$\exof{\dnorm{\noise_{\run}}^q\given \filter_\run}  \leq \noisepar_\run^q$ for $\noisepar_\run = \Theta(1/\mix_\run)$ and all $q \in [1,\infty]$.
So, by \cref{lem:martingale-bound} for $\step_\run = \step$, $\noisepar_\run = \noisepar$, $\const$ as in \cref{thm:FTXL-full}, and $\mu \in (0,1), q > 2$ whose values will be determined next, there exists $A_q > 0$ such that:
\begin{align}
\label{eq:maximal-3}
\conf_\run \defeq \probof*{\sup_{\runalt\leq\run} \abs{M_\runalt} > \frac{\drift}{2} (\step\run)^{\mu}} 
 & \leq A_q  \; \frac{ \step^{q/2+1}\sum_{\runalt=1}^\run \noisepar_\runalt^q }{(\step\run)^{1 + q(\mu-1/2)}} \notag\\
 & \leq A_q  \; \frac{ \step^{q(1-\mu)}\sum_{\runalt=1}^\run \noisepar_\runalt^q }{\run^{1 + q(\mu-1/2)}}
\end{align}
Now, note that for $\mix_\run = \mix/\run^{\mixexp}$, and since $\noisepar_\run = \Theta(1/\mix_\run)$, we get that there exists $\sbound >0$ such that
\begin{align}
    \sum_{\runalt=1}^\run \noisepar_\runalt^q  & \leq \sbound \mix^{-q} \sum_{\runalt=1}^\run \runalt^{q\mixexp}
\end{align}
with $\;\sum_{\runalt=1}^\run \runalt^{q\mixexp} = \Theta(\run^{1+q\mixexp})$.
Therefore,
\begin{align}
\label{eq:maximal-3}
\conf_\run & \leq A_q'  \; \frac{ \step^{q(1-\mu)}\mix^{-q} \run^{1+q\mixexp}}{\run^{1 + q(\mu-1/2)}} \notag\\
& \leq A_q'  \; \frac{ \step^{q(1-\mu)}\mix^{-q}}{\run^{q(\mu-1/2-\mixexp)}}
\end{align}

Now, we need to guarantee that there exist $\mu \in (0,1), q > 2$, such that
\begin{equation}
    \sum_{\run = 1}^\infty\conf_\run < \infty
\end{equation}
For this, we need to ensure that $q(\mu-1/2-\mixexp) > 1$, or, equivalently,
\begin{equation}
\label{eq:condition-1}
\mixexp < \mu - 1/2 - 1/q
\end{equation}
which we will do later. Then, we will get for $\step$ small enough:
\begin{align}
\label{eq:bound-variance-bandit}
    \probof*{\bigcap_{\run=1}^\infty \braces*{\sup_{\runalt\leq\run} \abs{M_\runalt} \leq \frac{\drift}{2}(\step\run)^{\mu}}} & = 1- \probof*{\bigcup_{\run=1}^\infty \braces*{\sup_{\runalt\leq\run} \abs{M_\runalt} > \frac{\drift}{2}(\step\run)^{\mu}}} \notag\\
    &\geq 1 -  \sum_{\run = 1}^\infty\conf_\run \notag\\
    &\geq 1-\conf
\end{align}
Regarding the term $2\step\bbound\sum_{\runalt = 1}^\run\mix_\runalt$ in \eqref{eq:template-acc}, we have that:
\begin{align}
2\step\bbound\sum_{\runalt = 1}^\run \mix_\runalt &= 2\bbound\step\mix\sum_{\runalt = 1}^\run \runalt^{-\mixexp}
\leq \bbound' \step\mix \run^{1-\mixexp}
\end{align}
where we used that $\sum_{\runalt = 1}^\run \runalt^{-\mixexp} = \Theta(\run^{1-\mixexp})$.
Thus, for 
\begin{equation}
\label{eq:condition-2}
1-\mixexp < \mu
\end{equation}
we have for $\mix,\step > 0$ small enough:
\begin{align}
\label{eq:bound-bias-bandit}
2\step\bbound\sum_{\runalt = 1}^\run\mix_\runalt &\leq \bbound' \step\mix \run^{1-\mixexp} 
\leq \bbound' \step\mix \run^{\mu} 
\leq \frac{\drift}{2}(\step\run)^\mu
\end{align}
for all $\run \in \N$. Hence, by \eqref{eq:condition-1}, \eqref{eq:condition-2} we need the following two conditions to be satisfied:
\begin{equation}
\label{eq:condition}
    1-\mixexp < \mu 
    \quad \text{and} \quad
    \mixexp < \mu - \frac{1}{2} - \frac{1}{q}
\end{equation}

for which we get that for $\mixexp \in (0,1/2)$, there exists always $\mu \in (3/4,1)$ and $q$ large that satisfy \eqref{eq:condition}.
Thus, combining \eqref{eq:bound-bias-bandit} and \eqref{eq:bound-variance-bandit}, we get by \eqref{eq:template-acc} that with probability at least $1-\delta$:
\begin{align}
\acc_{\pure,\run+1} & \leq \sum_{\runalt = 1}^\run \stepacc_\runalt \inner{\payv(\state_\runalt)}{\dif} + \drift(\step\run)^{\mu} \quad \text{for all $\run \in \N$.}
\end{align}

Thus, following similar steps as in the proof \cref{thm:FTXL-pure} after \eqref{eq:template-acc-2}, we readily obtain that with probability at least $1-\conf$, we have:
\begin{align}
\label{eq:convergence-2}
\speed_{\pure,\horizon} & \leq \speed_{\pure,\start} - \drift \step \sum_{\runalt = 1}^{\horizon - 1} \parens*{\step\runalt - (\step\runalt)^\mu}
	\notag\\
	&\leq \speed_{\pure,\start} - \drift \step^2\frac{\horizon(\horizon-1)}{2} + \drift\step^{1+\mu}\int_0^\horizon t^\mu dt \\
	&\leq \speed_{\pure,\start} - \drift \step^2\frac{\horizon(\horizon-1)}{2} + \drift\step^{1+\mu}\frac{\horizon^{\mu+1}}{\mu+1}
	\notag\\
\end{align}
for all $\horizon \in \N$. Setting $\mu = 4/5$ and invoking \cref{lem:reg-conv} for $\theta(x) = x\log x$, our claim follows.
\end{proof}

\section{Numerical experiments}
\label{app:simulations}

In this section, we provide numerical simulations to validate and explore the performance of \eqref{eq:FTXL}. To this end, we consider two game paradigms, (i) a zero-sum game, and (ii) a congestion game.

\para{Zero-sum Game}
First, we consider a 2-player zero-sum game with actions $\braces{\pure_1,\pure_2,\pure_3}$ and $\braces{\beta_1,\beta_2,\beta_3}$, and payoff matrix
\begin{equation*}
P = 
\begin{pmatrix}
(2,-2) & (1,-1) & (2,-2)\\
(-2,2) & (-1,1) & (-2,2)\\
(-2,2) & (-1,1) & (-2,2)
\end{pmatrix}
\end{equation*}
Here, the rows of $P$ correspond to the actions of player $A$ and the columns to the actions of player $B$, while the first item of each entry of $P$ corresponds to the payoff of $A$, and the second one to the payoff of $B$. Clearly, the action profile $(\pure_1,\beta_2)$ is a strict Nash equilibrium.

\para{Congestion Game}
As a second example, we consider a congestion game with $\nPlayers = 100$ and $2$
 roads, $r_1$ and $r_2$, with costs $c_1 = 1.1$ and $c_2 = d/\nPlayers$ where $d$ is the number of drivers on $r_2$. In words, $r_1$  has a fixed delay equal to $1.1$, while $r_2$ has a delay proportional to the drivers using it. Note, that the strategy profile where all players are using $r_2$ is a strict Nash equilibrium. 
 
In \cref{fig:numerics}, we assess the convergence of \eqref{eq:FTXL} with logit best responses, under realization-based and bandit feedback, and compare it to the standard \eqref{eq:EW} with the same level of information.
For each feedback mode, we conducted $100$ separate trials, each with $\horizon = 10^3$ steps, and calculated the average norm $\onenorm{\state_\run -\eq}$ as a function of the iteration counter $\run = 1, 2, ...,\horizon$. 
The solid lines represent the average distance from equilibrium for each method, while the shaded areas enclose the range of $\pm 1$ standard deviation from the mean across the different trials. 
All the plots are displayed in logarithmic scale.
For the zero-sum game, all runs were initialized with $\statealt_1 = 0$, and we used constant step-size $\step = 10^{-2}$, and exploration parameter $\mix = 10^{-1}$, where applicable.
For the congestion game, the initial state $\statealt_1$ for each run was drawn uniformly at random in $[-1,1]^2$, and we used constant step-size $\step = 10^{-2}$, and exploration parameter $\mix_\run = 1/\run^{1/4}$, where applicable.

The experiments have been implemented using Python 3.11.5 on a M1 MacBook Air with 16GB of RAM.

\section{Connection with other acceleration mechanisms}
\label{app:other}

In this appendix, we discuss the connection between \eqref{eq:FTXL} and the ``linear coupling'' method of \citet{AZO17}.
Because \cite{AZO17} is not taking a momentum-based approach, it is difficult to accurately translate the coupling approach of \cite{AZO17} to our setting and provide a direct comparison between the two methods.
One of the main reasons for this is that \cite{AZO17} is essentially using two step-sizes:
the first is taken equal to the inverse Lipschitz modulus of the function being minimized and is used to take a gradient step;
the second step-size sequence is much more aggressive, and it is used to generate an ancillary, exploration sequence which ``scouts ahead''.
These two sequences are then ``coupled'' with a mixing coefficient which plays a role ``similar'' \textendash\ but not equivalent \textendash\ to the friction coefficient in the \eqref{eq:SBC} formulation of \eqref{eq:NAG} by \citet{SBC14}.

The above is the best high-level description and analogy we can make between the coupling approach of \cite{AZO17} and the momentum-driven analysis of \citet{SBC14} and/or momentum analysis in Nesterov's 2004 textbook.
At a low level (and omitting certain technical details and distinctions that are not central to this discussion), the linear coupling approach of \cite{AZO17} applied to our setting would correspond to the update scheme:
\begin{align*}
    \state_\run & = \mirror(\statealt_\run) \\
    w_\run & = \lambda_\run z_\run + (1-\lambda_\run)x_\run \\
    \statealt_{\run+1} & = \statealt_\run  + (1-\lambda_\run)\eta_\run \signal_\run \\
    z_{\run+1} & = \lambda_\run z_\run + (1-\lambda_\run)x_{\run+1}
\end{align*}
with $\signal_\run$ obtained by querying a first-order oracle at $w_\run$ - that is, $\signal_\run$ 
 is an estimate, possibly imperfect, of $\payv(w_\run)$.
The first and third lines of this update scheme are similar to the corresponding update structure of \eqref{eq:FTXL}.
However, whereas \eqref{eq:FTXL} builds momentum by the aggregation of gradient information via the momentum variables $p_\run$, the linear coupling method above achieves acceleration through the coupling of the sequences $w_\run, z_\run$  and $\state_\run$, and by taking an increasing step-size sequence $\eta_\run$ that grows roughly as $\Theta(\run)$, and a mixing coefficient $\lambda_\run$ that evolves as $\lambda_\run = 1-1/(L\eta_\run)$, where $L$ is the Lipschitz modulus of $\payv(\cdot)$.
Beyond this comparison, we cannot provide a term-by-term correspondence between the momentum-based and coupling-based approaches, because the two methods are not equivalent (even though they give the same value convergence rates in convex minimization problems).
In particular, we do not see a way of linking the parameters $\eta_\run$ and $\lambda_\run$ of the coupling approach to the friction and step-size parameters of the momentum approach.

In the context of convex minimization problems, the coupling-based approach of \cite{AZO17} is more amenable to a regret-based analysis \textendash\ this is the ``unification'' aspect of \cite{AZO17} \textendash\ while the momentum-based approach of \citet{SBC14} facilitates a Lyapunov-based analysis.
From a game-theoretic standpoint, the momentum-based approach seems to be more fruitful and easier to implement, but studying the linear coupling approach of \cite{AZO17} could also be very relevant.

\bibliographystyle{icml}
\bibliography{bibtex/IEEEabrv,bibtex/Bibliography-PM}

\end{document}